\newtheorem{thm}{Theorem}[section]
\newtheorem{lem}[thm]{Lemma}
\newtheorem{pro}[thm]{Proposition}
\theoremstyle{definition}
\newtheorem{defn}{Definition}[section]
\begin{document}

\title{\textbf{The Hamiltonicity, Hamiltonian connectivity, and longest (\textit{\textbf{s}}, \textit{\textbf{t}})-path of $L$-shaped supergrid graphs}\footnote{A preliminary version of this paper has appeared in: The International MultiConference of Engineers and Computer Scientists 2018 (IMECS 2018), Hong Kong, vol. I, 2018, pp. 117--122 \cite{Hung18}.}
}

\author{\vspace{0.5cm}Fatemeh Keshavarz-Kohjerdi$^1$ and Ruo-Wei Hung$^2$$^,$\thanks{Corresponding author.}\\
$^1$\textit{Department of Mathematics \& Computer Science,}\\
\textit{Shahed University, Tehran, Iran}\\
\textit{\vspace{0.2cm}e-mail address: fatemeh.keshavarz@aut.ac.ir}\\
$^2$\textit{Department of Computer Science \& Information Engineering,}\\
\textit{Chaoyang University of Technology, Wufeng, Taichung 41349, Taiwan}\\
\textit{e-mail address: rwhung@cyut.edu.tw}}


\maketitle

\begin{abstract}
Supergrid graphs contain grid graphs and triangular grid graphs as their subgraphs. The Hamiltonian cycle and path problems for general supergrid graphs were known to be NP-complete. A graph is called Hamiltonian if it contains a Hamiltonian cycle, and is said to be Hamiltonian connected if there exists a Hamiltonian path between any two distinct vertices in it. In this paper, we first prove that every $L$-shaped supergrid graph always contains a Hamiltonian cycle except one trivial condition. We then verify the Hamiltonian connectivity of $L$-shaped supergrid graphs except few conditions. The Hamiltonicity and Hamiltonian connectivity of $L$-shaped supergrid graphs can be applied to compute the minimum trace of computerized embroidery machine and 3D printer when a $L$-like object is printed. Finally, we present a linear-time algorithm to compute the longest $(s, t)$-path of $L$-shaped supergrid graph given two distinct vertices $s$ and $t$.

\vspace{0.2cm}\noindent\textbf{Keywords:}
Hamiltonicity, Hamiltonian connectivity, longest $(s, t)$-path, supergrid graphs, $L$-shaped supergrid graphs, computer embroidery machines, 3D printers

\end{abstract}

\section{Introduction}\label{Introduction}
A \textit{Hamiltonian path} (resp., \textit{cycle}) in a graph is a simple path (resp., cycle) in which each vertex of the graph appears exactly once. The \textit{Hamiltonian path \emph{(resp.,} cycle\emph{)} problem} involves deciding whether or not a graph contains a Hamiltonian path (resp., cycle). A graph is called \textit{Hamiltonian} if it contains a Hamiltonian cycle. A graph $G$ is said to be \textit{Hamiltonian connected} if for each pair of distinct vertices $u$ and $v$ of $G$, there exists a Hamiltonian path from $u$ to $v$ in $G$. The longest $(s, t)$-path of a graph is a simple path with the maximum number of vertices from $s$ to $t$ in the graph. The longest $(s, t)$-path problem is to compute the longest $(s, t)$-path of a graph given any two distinct vertices $s$ and $t$. It is well known that the Hamiltonian and longest $(s, t)$-path problems are NP-complete for general graphs \cite{GareyJ79, Johnson85}. The same holds true for bipartite graphs \cite{Krishnamoorthy76}, split graphs \cite{Golumbic80}, circle graphs \cite{Damaschke89}, undirected path graphs \cite{BertossiB86}, grid graphs \cite{Itai82}, triangular grid graphs \cite{Gordon08}, supergrid graphs \cite{Hung15}, and so on. In the literature, there are many studies for the Hamiltonian connectivity of interconnection networks, see  \cite{Chen00, Chen04, Huang00, Huang02, Hung12, Li09, Liu11, Lo01}.

The \emph{two-dimensional integer grid graph} $G^\infty$ is an infinite graph whose vertex set consists of all points of the Euclidean plane with integer coordinates and in which two vertices are adjacent if the (Euclidean) distance between them is equal to 1. The \emph{two-dimensional triangular grid graph} $T^\infty$ is an infinite graph obtained from $G^\infty$ by adding all edges on the lines traced from up-left to down-right. A \textit{grid graph} is a finite, vertex-induced subgraph of $G^\infty$. For a node $v$ in the plane with integer coordinates, let $v_x$ and $v_y$ represent the $x$ and $y$ \textit{coordinates} of node $v$, respectively, denoted by $v=(v_x, v_y)$. If $v$ is a vertex in a grid graph, then its possible adjacent vertices include $(v_x, v_y-1)$, $(v_x-1, v_y)$, $(v_x+1, v_y)$, and $(v_x, v_y+1)$. A \textit{triangular grid graph} is a finite, vertex-induced subgraph of $T^\infty$. If $v$ is a vertex in a triangular grid graph, then its possible neighboring vertices include $(v_x, v_y-1)$, $(v_x-1, v_y)$, $(v_x+1, v_y)$, $(v_x, v_y+1)$, $(v_x-1, v_y-1)$, and $(v_x+1, v_y+1)$. Thus, triangular grid graphs contain grid graphs as subgraphs. For example, Fig. \ref{Fig_ExampleOfGridRelated}(a) and Fig. \ref{Fig_ExampleOfGridRelated}(b) depict a grid graph and a triangular graph, respectively. The triangular grid graphs defined above are isomorphic to the original triangular grid graphs in \cite{Gordon08} but these graphs are different when considered as geometric graphs. By the same construction of triangular grid graphs obtained from grid graphs, we have introduced a new class of graphs, namely \textit{supergrid graphs} \cite{Hung15}. The \emph{two-dimensional supergrid graph} $S^\infty$ is an infinite graph obtained from $T^\infty$ by adding all edges on the lines traced from up-right to down-left. A \emph{supergrid graph} is a finite, vertex-induced subgraph of $S^\infty$. The possible adjacent vertices of a vertex $v=(v_x, v_y)$ in a supergrid graph hence include $(v_x, v_y-1)$, $(v_x-1, v_y)$, $(v_x+1, v_y)$, $(v_x, v_y+1)$, $(v_x-1, v_y-1)$, $(v_x+1, v_y+1)$, $(v_x+1, v_y-1)$, and $(v_x-1, v_y+1)$. Then, supergrid graphs contain grid graphs and triangular grid graphs as subgraphs. For instance, Fig. \ref{Fig_ExampleOfGridRelated}(c) shows a supergrid graph. Notice that grid and triangular grid graphs are not subclasses of supergrid graphs, and the converse is also true: these classes of graphs have common elements (points) but in general they are distinct since the edge sets of these graphs are different. Obviously, all grid graphs are bipartite \cite{Itai82} but triangular grid graphs and supergrid graphs are not bipartite. Let $R(m, n)$ be a supergrid graph such that its vertex set $V(R(m, n))=\{v =(v_x, v_y) | 1\leqslant v_x\leqslant m$ and $1\leqslant v_y\leqslant n\}$. A \textit{rectangular supergrid graph} is a supergrid graph which is isomorphic to $R(m, n)$. Let $L(m,n; k,l)$ be a supergrid graph obtained from a rectangular supergrid graph $R(m, n)$ by removing its subgraph $R(k, l)$ from the upper-right corner. A $L$-shaped supergrid graph is isomorphic to $L(m,n; k,l)$. In this paper, we only consider $L(m,n; k,l)$. In the figures, we will assume that $(1, 1)$ are coordinates of the vertex located at the upper-left corner of a supergrid graph.

\begin{figure}[!t]
\begin{center}
\includegraphics[width=0.9\textwidth]{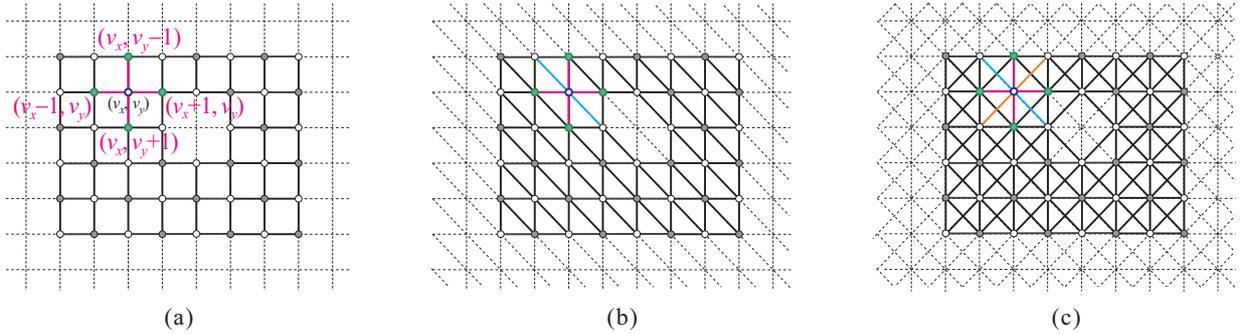}
\caption{(a) A grid graph, (b) a triangular grid graph, and (c) a supergrid graph, where circles represent the vertices and solid lines indicate the edges in the graphs.} \label{Fig_ExampleOfGridRelated}
\end{center}
\end{figure}

The possible application of the Hamiltonian connectivity of $L$-shaped supergrid graphs is given below. Consider a computerized embroidery machine for sewing a varied-sized letter $L$ into the object, e.g. clothes. First, we produce a set of lattices to represent the letter. Then, a path is computed to visit the lattices of the set such that each lattice is visited exactly once. Finally, the software transmits the stitching trace of the computed path to the computerized embroidery machine, and the machine then performs the sewing work along the trace on the object. Since each stitch position of an embroidery machine can be moved to its eight neighboring positions (left, right, up, down, up-left, up-right, down-left, and down-right), one set of neighboring lattices forms a $L$-shaped supergrid graph. Note that each lattice will be represented by a vertex of a supergrid graph. The desired sewing trace of the set of adjacent lattices is the Hamiltonian path of the corresponding $L$-shaped supergrid graph. The width and height of $L$-shaped supergrid graph $L(m,n; k,l)$ can be adjusted according to the parameters $m$, $n$, $k$, and $l$. For example, Fig. \ref{Fig_L-shapedSupergridGraphs}(a) indicates the structure of $L(m,n; k,l)$, and  Figs. \ref{Fig_L-shapedSupergridGraphs}(b)--(d) indicate $L(10,11; 6,8)$, $L(10,11; 7,9)$, and $L(7,10; 3,7)$, respectively. Given a string with varied-sized $L$ letters. By the Hamiltonian connectivity of $L$-shaped supergrid graphs, we can seek the end vertices of Hamiltonian paths in the corresponding $L$-shaped supergrid graphs so that the total length of jump lines connecting two $L$-shaped supergrid graphs is minimum. For instance, given three $L$-shaped supergrid graphs in Figs. \ref{Fig_L-shapedSupergridGraphs}(b)--(d), in which each $L$-shaped supergrid graph represents a set of lattices, Fig. \ref{Fig_L-shapedSupergridGraphs}(e) depicts such a minimum sewing trace for the sets of lattices.

\begin{figure}[!t]
\begin{center}
\includegraphics[width=0.9\textwidth]{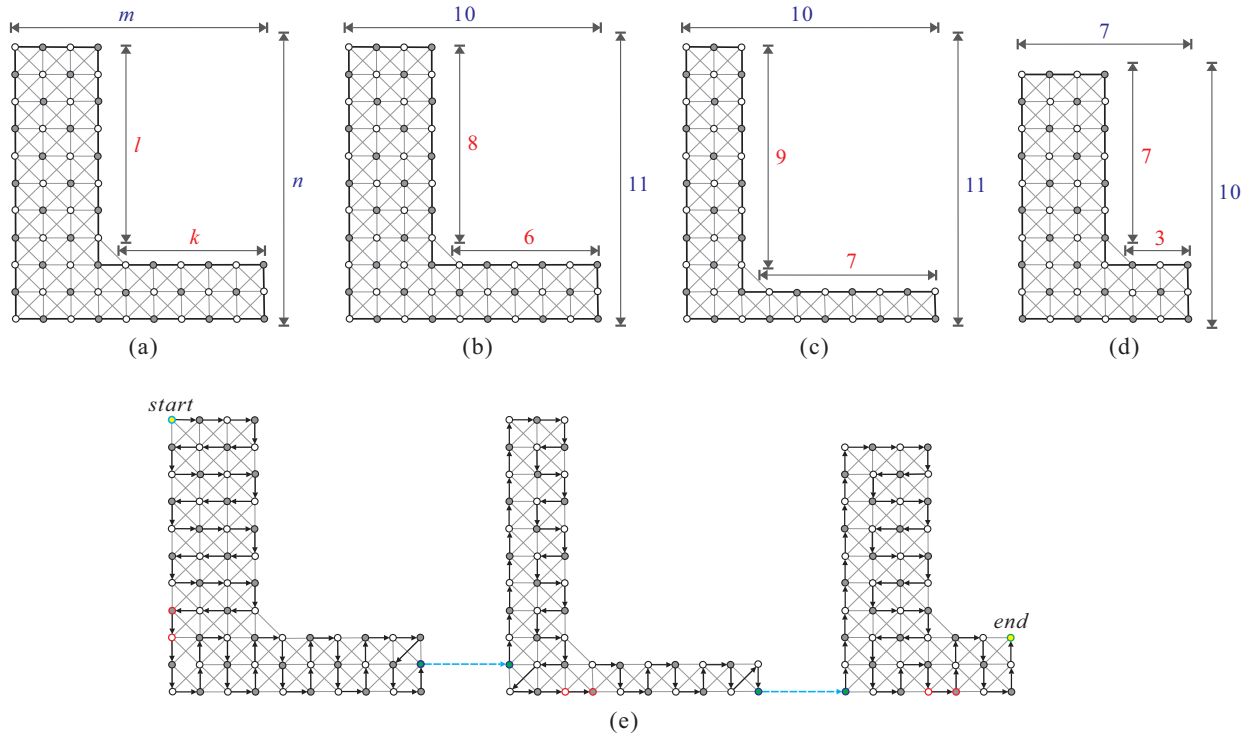}
\caption{(a) The structure of $L$-shaped supergrid graph $L(m,n; k,l)$, (b) $L(10,11; 6,8)$, (c) $L(10,11; 7,9)$, (d) $L(7,10; 3,7)$, and (e) a possible sewing trace for the sets of lattices in (b)--(d), where solid arrow lines indicate the computed trace and dashed arrow lines indicate the jump lines connecting two continuous letters.} \label{Fig_L-shapedSupergridGraphs}
\end{center}
\end{figure}

Another possible application of Hamiltonian connectivity of $L$-shaped supergrid graphs is to compute the minimum printing trace of 3D printers. Consider a 3D printer with a $L$-type object being printed. The software produces a series of thin layers, designs a path for each layer, combines these paths of produced layers, and transmits the above paths to 3D printer. Because 3D printing is performed layer by layer (see Fig. \ref{Fig_3DPrinting}(a)), each layer can be considered as a $L$-shaped supergrid graph. Suppose that there are $k$ layers under the above 3D printing. If the Hamiltonian connectivity of $L$-shaped supergrid graphs holds true, then we can find a Hamiltonian $(s_i, t_i)$-path of an  $L$-shaped supergrid graph $L_i$, where $L_i$, $1\leqslant i\leqslant k$, represents a layer under 3D printing. Thus, we can design an optimal trace for the above 3D printing, where $t_i$ is adjacent to $s_{i+1}$ for $1\leqslant i\leqslant k-1$. In this application, we restrict the 3d printer nozzle to be located at integer coordinates. For example, Fig. \ref{Fig_3DPrinting}(a) shows 4 layers $L_1$--$L_4$ of a 3D printing for a $L$-type object, Fig. \ref{Fig_3DPrinting}(b) depicts the Hamiltonian $(s_i, t_i)$-paths of $L_i$ for $1\leqslant i\leqslant 4$, and the result of this 3D printing is shown in Fig. \ref{Fig_3DPrinting}(c).

\begin{figure}[!t]
\begin{center}
\includegraphics[width=0.75\textwidth]{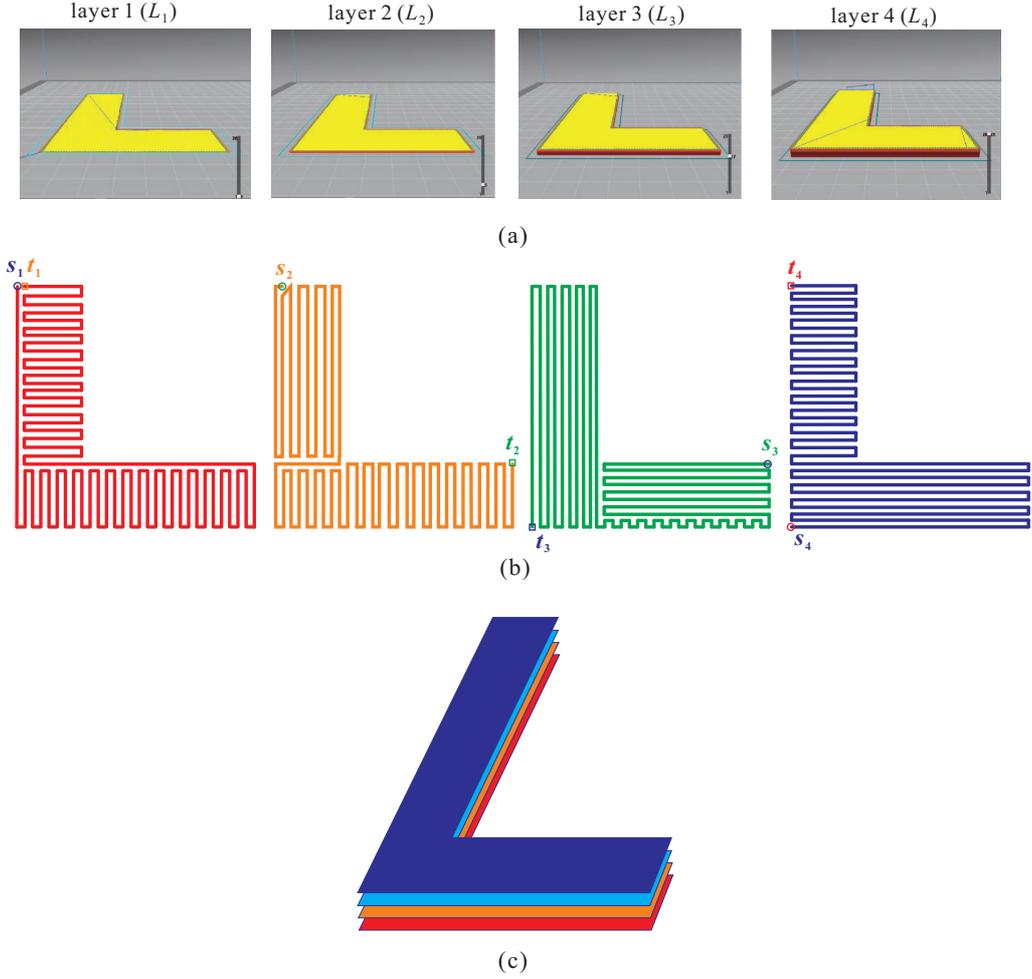}
\caption{(a) The four layers $L_1$--$L_4$ of a 3D printing model while printing a $L$-type object, (b) the computing Hamiltonian $(s_i, t_i)$-path of each layer $L_i$ in (a), and (c) the final result while performing the 4-layered 3D printing.} \label{Fig_3DPrinting}
\end{center}
\end{figure}

Previous related works are summarized as follows. Recently, Hamiltonian path (cycle) and Hamiltonian connected problems in grid, triangular grid, and supergrid graphs have received much attention. Itai \textit{et al.} \cite{Itai82} showed that the Hamiltonian path problem on grid graphs is NP-complete. They also gave necessary and sufficient conditions for a rectangular grid graph having a Hamiltonian path between two given vertices. Note that rectangular grid graphs are not Hamiltonian connected. Zamfirescu \textit{et al.} \cite{Zamfirescu92} gave sufficient conditions for a grid graph having a Hamiltonian cycle, and proved that all grid graphs of positive width have Hamiltonian line graphs. Later, Chen \textit{et al.} \cite{Chen02} improved the Hamiltonian path algorithm of \cite{Itai82} on rectangular grid graphs and presented a parallel algorithm for the Hamiltonian path problem with two given endpoints in rectangular grid graph. Also there is a polynomial-time algorithm for finding Hamiltonian cycles in solid grid graphs \cite{Lenhart97}. In \cite{Salman05}, Salman introduced alphabet grid graphs and determined classes of alphabet grid graphs which contain Hamiltonian cycles. Keshavarz-Kohjerdi and Bagheri gave necessary and sufficient conditions for the existence of Hamiltonian paths in alphabet grid graphs, and presented linear-time algorithms for finding Hamiltonian paths with two given endpoints in these graphs \cite{Keshavarz12a}. They also presented a linear-time algorithm for computing the longest path between two given vertices in rectangular grid graphs \cite{Keshavarz12b}, gave a parallel algorithm to solve the longest path problem in rectangular grid graphs \cite{Keshavarz13}, and solved the Hamiltonian connected problem in $L$-shaped grid graphs \cite{Keshavarz16}. Very recently, Keshavarz-Kohjerdi and Bagheri presented a linear-time algorithm to find Hamiltonian $(s, t)$-paths in rectangular grid graphs with a rectangular hole \cite{Keshavarz17a, Keshavarz17b}. Reay and Zamfirescu \cite{Reay00} proved that all 2-connected, linear-convex triangular grid graphs except one special case contain Hamiltonian cycles. The Hamiltonian cycle (path) on triangular grid graphs has been shown to be NP-complete \cite{Gordon08}. They also proved that all connected, locally connected triangular grid graphs (with one exception) contain Hamiltonian cycles. Recently, we proved that the Hamiltonian cycle and path problems on supergrid graphs are NP-complete \cite{Hung15}. We also showed that every rectangular supergrid graph always contains a Hamiltonian cycle. In \cite{Hung16}, we proved linear-convex supergrid graphs, which form a subclass of supergrid graphs, to be Hamiltonian. Very recently, we verified the Hamiltonian connectivity of rectangular, shaped, and alphabet supergrid graphs \cite{Hung17a, Hung17b, Hung19}.

The rest of the paper is organized as follows. In Section \ref{Sec_Preliminaries}, some notations and observations are given. Previous results are also introduced. In Section \ref{Sec_rectangular-supergrid}, we discover two Hamiltonian connected properties of rectangular supergrid graphs. These two properties will be used in proving the Hamiltonian connectivity of $L$-shaped supergrid graphs. Section \ref{Sec_L-shaped-supergrid} shows that $L$-shaped supergrid graphs are Hamiltonian and Hamiltonian connected. In Section \ref{Sec_Algorithm}, we present a linear-time algorithm to compute the longest $(s, t)$-path of a $L$-shaped supergrid graph with any two distinct vertices $s$ and $t$. Finally, we make some concluding remarks in Section \ref{Sec_Conclusion}.

\section{Terminologies and background results}\label{Sec_Preliminaries}
In this section, we will introduce some terminologies and symbols. Some observations and previously established results for the Hamiltonicity and Hamiltonian connectivity of rectangular supergrid graphs are also presented. For graph-theoretic terminology not defined in this paper, the reader is referred to \cite{Bondy76}.

Let $G = (V, E)$ be a graph with vertex set $V(G)$ and edge set $E(G)$. Let $S$ be a subset of vertices in $G$, and let $u$ and $v$ be two vertices in $G$. We write $G[S]$ for the subgraph of $G$ \textit{induced} by $S$, $G-S$ for the subgraph $G[V-S]$, i.e., the subgraph induced by $V-S$. In general, we write $G-v$ instead of $G-\{v\}$. If $(u, v)$ is an edge in $G$, we say that $u$ is \textit{adjacent} to $v$, and $u$ and $v$ are \textit{incident} to edge $(u, v)$. The notation $u\thicksim v$ (resp., $u \nsim v$) means that vertices $u$ and $v$ are adjacent (resp., non-adjacent). Two edges $e_1=(u_1, v_1)$ and $e_2=(u_2, v_2)$ are said to be \textit{parallel} if $u_1\thicksim v_1$ and $u_2\thicksim v_2$, denote this by $e_1\thickapprox e_2$. A \textit{neighbor} of $v$ in $G$ is any vertex that is adjacent to $v$. We use $N_G(v)$ to denote the set of neighbors of $v$ in $G$, and let $N_G[v]=N_G(v)\cup\{v\}$. The number of vertices adjacent to vertex $v$ in $G$ is called the \textit{degree} of $v$ in $G$ and is denoted by $deg(v)$. A path $P$ of length $|P|$ in $G$, denoted by $v_1\rightarrow v_2\rightarrow \cdots \rightarrow v_{|P|-1} \rightarrow v_{|P|}$, is a sequence $(v_1, v_2, \cdots, v_{|P|-1}, v_{|P|})$ of vertices such that $(v_i,v_{i+1})\in E$ for $1 \leqslant i < |P|$, and all vertices except $v_1, v_{|P|}$ in it are distinct. By the \textit{length} of path $P$ we mean the number of vertices in $P$. The first and last vertices visited by $P$ are called the \textit{path-start} and \textit{path-end} of $P$, denoted by $start(P)$ and $end(P)$, respectively. We will use $v_i \in P$ to denote ``$P$ visits vertex $v_i$" and use $(v_i, v_{i+1}) \in P$ to denote ``$P$ visits edge $(v_i, v_{i+1})$". A path from $v_1$ to $v_k$ is denoted by $(v_1, v_k)$-path. In addition, we use $P$ to refer to the set of vertices visited by path $P$ if it is understood without ambiguity. A cycle is a path $C$ with $|V(C)| \geqslant 4$ and $start(C) = end(C)$. Two paths (or cycles) $P_1$ and $P_2$ of graph $G$ are called vertex-disjoint if $V(P_1)\cap V(P_2) = \emptyset$. Two vertex-disjoint paths $P_1$ and $P_2$ can be concatenated into a path, denoted by $P_1 \Rightarrow P_2$, if $end(P_1)\thicksim start(P_2)$.

Let $S^\infty$ be the infinite graph whose vertex set consists of all points of the plane with integer coordinates and in which two vertices are adjacent if the difference of their $x$ or $y$ coordinates is not larger than 1. A \textit{supergrid graph} is a finite, vertex-induced subgraph of $S^\infty$. For a vertex $v$ in a supergrid graph, let $v_x$ and $v_y$ denote $x$ and $y$ coordinates of its corresponding point, respectively. We color vertex $v$ to be \textit{white} if $v_x+v_y\equiv 0$ (mod 2); otherwise, $v$ is colored to be \textit{black}. Then there are eight possible neighbors of vertex $v$ including four white vertices and four black vertices. Obviously, all supergrid graphs are not bipartite. However, all grid graphs are bipartite \cite{Itai82}.

Rectangular supergrid graphs first appeared in \cite{Hung15}, in which the Hamiltonian cycle problem was solved. Let $R(m, n)$ be the supergrid graph whose vertex set $V(R(m, n))=\{v =(v_x, v_y) | 1\leqslant v_x\leqslant m$ and $1\leqslant v_y\leqslant n\}$. That is, $R(m, n)$ contains $m$ columns and $n$ rows of vertices in $S^\infty$. A \textit{rectangular supergrid graph} is a supergrid graph which is isomorphic to $R(m, n)$ for some $m$ and $n$. Then $m$ and $n$, the \textit{dimensions}, specify a rectangular supergrid graph up to isomorphism. The size of $R(m, n)$ is defined to be $mn$, and $R(m, n)$ is called $n$-rectangle. $R(m, n)$ is called \textit{even-sized} if $mn$ is even, and it is called \textit{odd-sized} otherwise. In this paper, without loss of generality we will assume that $m \geqslant n$.

Let $v=(v_x, v_y)$ be a vertex in $R(m, n)$. The vertex $v$ is called the \textit{upper-left} (resp., \textit{upper-right}, \textit{down-left}, \textit{down-right}) \textit{corner} of $R(m, n)$ if for any vertex $w=(w_x, w_y)\in R(m, n)$, $w_x\geqslant v_x$ and $w_y\geqslant v_y$ (resp., $w_x\leqslant v_x$ and $w_y\geqslant v_y$, $w_x\geqslant v_x$ and $w_y\leqslant v_y$, $w_x\leqslant v_x$ and $w_y\leqslant v_y$). Notice that in the figures we will assume that $(1, 1)$ are coordinates of the upper-left corner of $R(m, n)$, except we explicitly change this assumption. The edge $(u, v)$ is said to be \textit{horizontal} (resp., \textit{vertical}) if $u_y=v_y$ (resp., $u_x=v_x$), and is called \textit{crossed} if it is neither a horizontal nor a vertical edge. There are four boundaries in a rectangular supergrid graph $R(m, n)$ with $m, n\geqslant 2$. The edge in the boundary of $R(m, n)$ is called \textit{boundary edge}. A path is called \textit{boundary} of $R(m, n)$ if it visits all vertices of the same boundary in $R(m, n)$ and its length equals to the number of vertices in the visited boundary. For example, Fig. \ref{Fig_Rectangular} shows a rectangular supergrid graph $R(10, 8)$ which is called 8-rectangle and contains $2\times(9+7)=32$ boundary edges. Fig. \ref{Fig_Rectangular} also indicates the types of edges and corners.

\begin{figure}[!t]
\begin{center}
\includegraphics[scale=1.0]{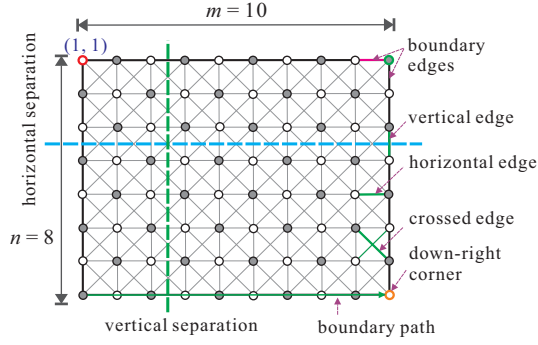}
\caption{A rectangular supergrid graph $R(m, n)$, where $m=10$, $n=8$, and the bold dashed lines indicate vertical and horizontal separations.}\label{Fig_Rectangular}
\end{center}
\end{figure}

A \textit{$L$-shaped supergrid graph}, denoted by $L(m,n; k,l)$, is a supergrid graph obtained from a rectangular supergrid graph $R(m, n)$ by removing its subgraph $R(k, l)$ from the upper-right corner, where $m, n > 1$ and $k, l\geqslant 1$. Then, $m-k\geqslant 1$ and $n-l\geqslant 1$. The structure of $L(m,n; k,l)$ is shown in Fig. \ref{Fig_L-shapedSupergridGraphs}(a). The parameters $m-k$ and $n-l$ are used to adjust the width and height of $L(m,n; k,l)$, respectively.

In proving our results, we need to partition a rectangular or $L$-shaped supergrid graph into two disjoint parts. The partition is defined as follows.

\begin{defn}
Let $S$ be a $L$-shaped supergrid graph $L(m,n; k,l)$ or a rectangular supergrid graph $R(m, n)$. A \textit{separation operation} of $S$ is a partition of $S$ into two vertex disjoint rectangular supergrid subgraphs $S_1$ and $S_2$, i.e., $V(S)=V(S_1)\cup V(S_2)$ and $V(S_1)\cap V(S_2)=\emptyset$. A separation is called \textit{vertical} if it consists of a set of horizontal edges, and is called \textit{horizontal} if it contains a set of vertical edges. For an example, the bold dashed vertical (resp., horizontal) line in Fig. \ref{Fig_Rectangular} indicates a vertical (resp., horizontal) separation of $R(10, 8)$ which partitions it into $R(3, 8)$ and $R(7, 8)$ (resp., $R(10, 3)$ and $R(10, 5)$).
\end{defn}

In \cite{Hung15}, we have showed that rectangular supergrid graphs always contain Hamiltonian cycles except 1-rectangles. Let $R(m, n)$ be a rectangular supergrid graph with $m\geqslant n$, $\mathcal{C}$ be a cycle of $R(m, n)$, and let $H$ be a boundary of $R(m, n)$, where $H$ is a subgraph of $R(m, n)$. The restriction of $\mathcal{C}$ to $H$ is denoted by $\mathcal{C}_{|H}$. If $|\mathcal{C}_{|H}|=1$, i.e. $\mathcal{C}_{|H}$ is a boundary path on $H$, then $\mathcal{C}_{|H}$ is called \textit{flat face} on $H$. If $|\mathcal{C}_{|H}|>1$ and $\mathcal{C}_{|H}$ contains at least one boundary edge of $H$, then $\mathcal{C}_{|H}$ is called \textit{concave face} on $H$. A Hamiltonian cycle of $R(m, 3)$ is called \textit{canonical} if it contains three flat faces on two shorter boundaries and one longer boundary, and it contains one concave face on the other boundary, where the shorter boundary consists of three vertices. And, a Hamiltonian cycle of $R(m, n)$ with $n=2$ or $n\geqslant 4$ is said to be \textit{canonical} if it contains three flat faces on three boundaries, and it contains one concave face on the other boundary. The following lemma shows one result in \cite{Hung15} concerning the Hamiltonicity of rectangular supergrid graphs.

\begin{lem}\label{HC-rectangular_supergrid_graphs}
(See \cite{Hung15}.) Let $R(m, n)$ be a rectangular supergrid graph with $m\geqslant n\geqslant 2$. Then, the following statements hold true:\\
$(1)$ if $n=3$, then $R(m, 3)$ contains a canonical Hamiltonian cycle;\\
$(2)$ if $n=2$ or $n\geqslant 4$, then $R(m, n)$ contains four canonical Hamiltonian cycles with concave faces being located on different boundaries.
\end{lem}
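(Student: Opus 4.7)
The plan is to proceed by induction on the longer dimension $m$ with $n$ held fixed, handling $n=2$, $n=3$, and $n\geqslant 4$ as parallel cases. Because ``canonical'' is defined in terms of flat/concave faces on the four boundaries, the induction must be arranged so that the hypothesis supplies a cycle whose concave face lies on the boundary next to the strip we are about to attach; this alignment is what makes the extension step possible.

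I would first establish the base cases by direct construction. Small rectangles such as $R(2,2)$, $R(3,2)$, $R(3,3)$, $R(4,2)$, $R(4,3)$, $R(4,4)$, and $R(5,3)$ are small enough to draw all canonical Hamiltonian cycles explicitly. For $n=2$ or $n\geqslant 4$, the four base cycles with concave faces on the four different boundaries serve as templates; for $n=3$, the single cycle required by statement (1) is the template. A useful sanity check at this stage is that the templates are invariant, up to rotation and reflection, under horizontal and vertical flips, which halves the case analysis in the inductive step.

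For the inductive step of part (2), assume the statement holds for $R(m-1, n)$ and fix a target boundary $H$ of $R(m, n)$ on which we want the concave face. If $H$ is the left, top, or bottom boundary, I would apply the vertical separation $R(m, n) = R(m-1, n) \cup R(1, n)$. By the inductive hypothesis, $R(m-1, n)$ carries a canonical Hamiltonian cycle whose concave face lies on its right boundary. I would splice the extra column $R(1, n)$ into that concave face by removing one edge on the shared boundary and rerouting through the new column, so that three boundaries of $R(m,n)$ remain flat while the concave face sits on the intended side $H$. If instead $H$ is the new right boundary, I would invoke the hypothesis on $R(m-1, n)$ with concave face on the top (or bottom), attach the column through a flat face, and then locally twist the cycle inside the added column to create a concave face on the right. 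Part (1) for $n=3$ follows the same scheme, but with extra care: each shorter boundary has only three vertices, so the two flat faces on those ends are essentially forced and must be preserved through every extension.

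The main obstacle I anticipate is the splicing step itself: a concave face on the right boundary of $R(m-1,n)$ can exhibit several distinct local patterns (a single detour, multiple detours, or shapes constrained by small $n$), and each pattern requires a tailored rerouting rule to guarantee that the extended cycle is still canonical with the concave face on the correct side. A secondary difficulty is the narrow case $n=2$, where each shorter boundary has only two vertices and the distinction between flat and concave faces on those sides nearly collapses; I would handle $n=2$ as a separate, largely pictorial argument, using the fact that $R(m, 2)$ is essentially a \emph{ladder} whose non-trivial Hamiltonian cycles are easy to enumerate explicitly.
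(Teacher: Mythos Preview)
This lemma is not proved in the present paper at all: it is quoted verbatim from \cite{Hung15} (note the parenthetical ``See \cite{Hung15}.'') and is only illustrated by Fig.~\ref{Fig_HC-Rectangular}. So there is no in-paper proof to compare your proposal against; you are supplying an argument the authors deliberately outsourced.

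That said, your inductive scheme has a mis-wired case split. In the step for part (2) you write: if the target boundary $H$ is the left, top, or bottom side, take the inductive cycle in $R(m-1,n)$ with its concave face on the \emph{right} boundary and splice the extra column there. But then the left boundary of $R(m,n)$ coincides with the (flat) left boundary of $R(m-1,n)$ and is untouched by the splice, so it stays flat; you cannot end up with the concave face on the left this way. The natural assignment is the opposite one: when $H$ is left, top, or bottom, invoke the hypothesis with the concave face already on $H$ (a boundary the column attachment on the right does not disturb) and splice the new column through the flat right face of $R(m-1,n)$; the case $H=\text{right}$ is then the one that genuinely needs the local twist you describe. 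A second, smaller issue: attaching a single column $R(1,n)$ means you are absorbing a path, not merging two cycles, so Proposition~\ref{Pro_Obs}(1) is unavailable and you must fall back on Proposition~\ref{Pro_Obs}(3) or (4) vertex by vertex. Growing by a $2$-column strip $R(2,n)$ instead (which does carry a canonical Hamiltonian cycle) lets you use Proposition~\ref{Pro_Obs}(1) directly and keeps the flat/concave bookkeeping much simpler; this is also closer in spirit to how the paper merges rectangular pieces elsewhere.
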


\begin{figure}[!t]
\begin{center}
\includegraphics[width=0.95\textwidth]{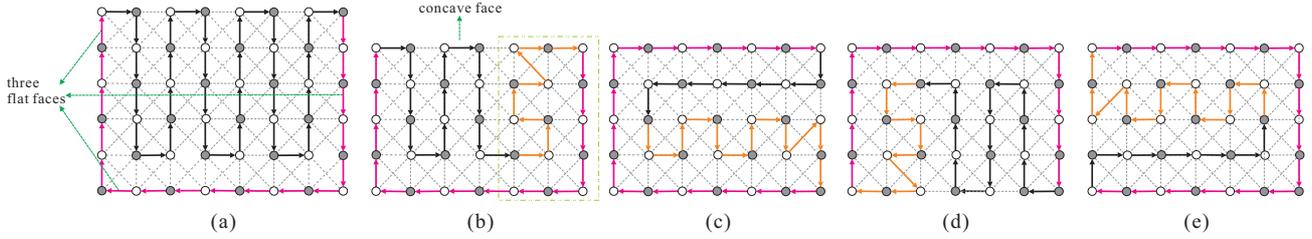}
\caption{A canonical Hamiltonian cycle containing three flat faces and one concave face for (a) $R(8, 6)$ and (b)--(e) $R(7, 5)$, where solid arrow lines indicate the edges in the cycles and $R(7, 5)$ contains four distinct canonical Hamiltonian cycles in (b)--(e) such that their concave faces are placed on different boundaries.} \label{Fig_HC-Rectangular}
\end{center}
\end{figure}

Fig. \ref{Fig_HC-Rectangular} shows canonical Hamiltonian cycles for even-sized and odd-sized rectangular supergrid graphs found in Lemma \ref{HC-rectangular_supergrid_graphs}. Each Hamiltonian cycle found by this lemma contains all the boundary edges on any three sides of the rectangular supergrid graph. This shows that for any rectangular supergrid graph  $R(m, n)$ with $m\geqslant n\geqslant 4$, we can always construct four canonical Hamiltonian cycles such that their concave faces are placed on different boundaries. For instance, the four distinct canonical Hamiltonian cycles of $R(7, 5)$ are shown in Figs. \ref{Fig_HC-Rectangular}(b)--(e), where the concave faces of these four canonical Hamiltonian cycles are located on different boundaries.

Let $(G, s, t)$ denote the supergrid graph $G$ with two specified distinct vertices $s$ and $t$. Without loss of generality, we will assume that $s_x \leqslant t_x$, except we explicitly change this assumption. We denote a Hamiltonian path between $s$ and $t$ in $G$ by $HP(G, s, t)$. We say that $HP(G, s, t)$ does exist if there is a Hamiltonian $(s, t)$-path in $G$. From Lemma \ref{HC-rectangular_supergrid_graphs}, we know that $HP(R(m, n), s, t)$ does exist if $m, n\geqslant 2$ and $(s, t)$ is an edge in the constructed Hamiltonian cycle of $R(m, n)$. The notation $\hat{L}(G, s, t)$ indicates the length of longest path between $s$ and $t$ in $(G, s, t)$. Note that the length of a path is defined as the number of vertices in the path.

Recently, we have verified the Hamiltonian connectivity of rectangular supergrid graphs except one condition \cite{Hung17a}. The forbidden condition for $HP(R(m, n), s, t)$ holds only for 1-rectangle or 2-rectangle. To describe the exception condition, we define the vertex cut and cut vertex of a graph as follows.

\begin{defn}
Let $G$ be a connected graph and let $V_1$ be a subset of the vertex set $V(G)$. $V_1$ is a \textit{vertex cut} of $G$ if $G-V_1$ is disconnected. A vertex $v$ of $G$ is a \textit{cut vertex} of $G$ if $\{v\}$ is a vertex cut of $G$. For an example, in Fig. \ref{Fig_ForbiddenConditionF1}(b) $\{s, t\}$ is a vertex cut and in Fig. \ref{Fig_ForbiddenConditionF1}(a) $t$ is a cut vertex.
\end{defn}

\begin{figure}[!t]
\begin{center}
\includegraphics[scale=0.95]{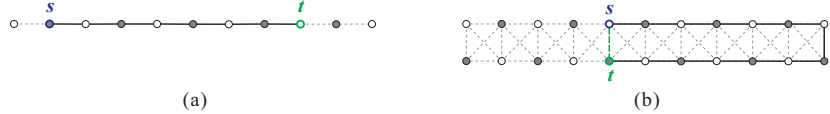}
\caption{Rectangular supergrid graph in which there is no Hamiltonian $(s, t)$-path for (a) $R(m, 1)$, and (b) $R(m, 2)$, where solid lines indicate the longest path between $s$ and $t$.} \label{Fig_ForbiddenConditionF1}
\end{center}
\end{figure}

Then, the following condition implies $HP(R(m, 1), s, t)$ and $HP(R(m, 2), s, t)$ do not exist.

\begin{description}
  \item[(F1)] $s$ or $t$ is a cut vertex of $R(m, 1)$, or $\{s, t\}$ is a vertex cut of $R(m, 2)$ (see Fig. \ref{Fig_ForbiddenConditionF1}(a) and Fig. \ref{Fig_ForbiddenConditionF1}(b)). Notice that, here, $s$ or $t$ is a cut vertex of $R(m, 1)$ if either $s$ or $t$ is not a corner vertex, and $\{s, t\}$ is a vertex cut of $R(m, 2)$ if $2\leqslant s_x(=t_x)\leqslant m-1$.
\end{description}

The following lemma showing that $HP(R(m, n), s, t)$ does not exist if $(R(m, n), s, t)$ satisfies condition (F1) can be verified by the arguments in \cite{Keshavarz16}.

\begin{lem}\label{HP-Forbidden-Rectangular}
(See \cite{Keshavarz16}.) Let $R(m, n)$ be a rectangular supergrid graph with two vertices $s$ and $t$. If $(R(m, n), s, t)$ satisfies condition \emph{(F1)}, then $(R(m, n), s, t)$ has no Hamiltonian $(s, t)$-path.
\end{lem}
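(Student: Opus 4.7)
The plan is to use a clean cut-based argument: whenever $G$ has a Hamiltonian $(s,t)$-path $P = s, v_2, \ldots, v_{n-1}, t$, the subsequence obtained by deleting $s$ (respectively $\{s,t\}$) from $P$ remains a path in $G - s$ (respectively $G - \{s,t\}$), because every consecutive pair $(v_i, v_{i+1})$ of the subsequence is still an edge and neither endpoint of such an edge has been removed. Consequently $G - s$ (respectively $G - \{s,t\}$) contains a spanning path and is therefore connected. This turns the nonexistence claim into a purely structural check that the forbidden configurations disconnect the graph after removing the appropriate vertices.

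First I would handle $R(m,1)$. Here the graph is literally the path $(1,1) \thicksim (2,1) \thicksim \cdots \thicksim (m,1)$. If $s$ is not a corner vertex, then $2 \leqslant s_x \leqslant m-1$, and $R(m,1) - s$ splits into two nonempty components consisting of the vertices with $x$-coordinate strictly less than, and strictly greater than, $s_x$. If an $HP(R(m,1), s, t)$ existed, the observation above would force $R(m,1) - s$ to be connected, a contradiction. A symmetric argument disposes of the case in which $t$ is a cut vertex.

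Next I would handle $R(m,2)$ under the assumption $s_x = t_x$ and $2 \leqslant s_x \leqslant m-1$. Removing $\{s,t\}$ from $R(m,2)$ deletes every vertex in column $s_x$, so the remaining graph decomposes into the induced subgraphs on columns $1,\ldots,s_x-1$ and on columns $s_x+1,\ldots,m$; both are nonempty by the range of $s_x$, and no edge of $R(m,2)$ joins them once column $s_x$ is gone (any would-be crossing edge has $|u_x - v_x| \geqslant 2$, violating the adjacency rule of $S^\infty$). Hence $\{s,t\}$ is a genuine vertex cut, and the existence of $HP(R(m,2), s, t)$ would, via the initial observation, imply that $R(m,2) - \{s,t\}$ is connected, a contradiction.

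There is no real obstacle here; the only care needed is to verify that deleting the endpoints of a Hamiltonian path always yields a spanning path of the remaining vertex-induced subgraph, and that the two forbidden configurations really are vertex cuts of the underlying supergrid graph (in particular noting that $R(m,2)$ has no ``long'' crossed edges that could bridge column $s_x$). Both points are immediate from the definition of $S^\infty$, so the proof reduces to these two short case checks.
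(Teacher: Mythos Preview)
Your argument is correct. The paper itself does not supply a proof of this lemma; it merely states that the result ``can be verified by the arguments in \cite{Keshavarz16}'' and cites that reference. Your self-contained cut-based argument---observing that a Hamiltonian $(s,t)$-path restricted to $G-\{s\}$ or $G-\{s,t\}$ yields a spanning path of that subgraph, whence the subgraph must be connected---is exactly the standard way to establish such obstructions, and is presumably what the cited reference contains. There is nothing to compare beyond noting that you have written out explicitly what the paper leaves to an external source.
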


In \cite{Hung17a}, we obtain the following lemma to show the Hamiltonian connectivity of rectangular supergrid graphs.

\begin{lem}\label{HamiltonianConnected-Rectangular}
(See \cite{Hung17a}.) Let $R(m, n)$ be a rectangular supergrid graph with $m, n \geqslant 1$, and let $s$ and $t$ be its two distinct vertices. If $(R(m, n), s, t)$ does not satisfy condition \emph{(F1)}, then $HP(R(m, n), s, t)$ does exist.
\end{lem}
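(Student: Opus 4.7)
The plan is to prove the lemma by induction on $mn$, treating the very thin rectangles ($n\leqslant 3$, or symmetrically $m\leqslant 3$) as base cases, and handling the generic case $m\geqslant n\geqslant 4$ by an inductive step built around the separation operation.

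For the base cases, $R(m,1)$ is just a path: since $(R(m,1),s,t)$ is not of type (F1), neither $s$ nor $t$ is a cut vertex, so $\{s,t\}$ are the two endpoints and the whole path serves as $HP(R(m,1),s,t)$. For $R(m,2)$, I would do a direct construction: the failure of (F1) forces that not both $s$ and $t$ sit on the same interior column, so after a finite number of subcases (on which columns of $R(m,2)$ contain $s$ and $t$) an explicit zig-zag path between $s$ and $t$ can be written down. For $R(m,3)$ I would combine the canonical Hamiltonian cycle given by Lemma \ref{HC-rectangular_supergrid_graphs}(1) with a small rerouting around $s$ and $t$, again after a short case analysis driven by how close $s$ and $t$ are to the short boundaries.

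For the inductive step with $m\geqslant n\geqslant 4$, I would choose a separation of $R(m,n)$ into two rectangular supergrid subgraphs $R_1$ and $R_2$ (vertical if $m$ is large relative to $n$, horizontal otherwise) and split into two cases. \textbf{Case A:} $s\in R_1$ and $t\in R_2$. Pick the separation line so that neither endpoint lies on a boundary of its subrectangle that would force a (F1)-type obstruction, and apply induction to obtain $HP(R_1,s,u)$ and $HP(R_2,v,t)$ where $u$ lies on the boundary of $R_1$ facing $R_2$, $v$ lies on the facing boundary of $R_2$, and $u\thicksim v$; then $HP(R_1,s,u)\Rightarrow HP(R_2,v,t)$ is the required path. \textbf{Case B:} $s,t\in R_1$. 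Apply Lemma \ref{HC-rectangular_supergrid_graphs}(2) to choose a canonical Hamiltonian cycle $\mathcal{C}$ of $R_2$ whose flat face lies on the boundary facing $R_1$, and produce (by induction) $HP(R_1,s,t)$ that contains a boundary edge $(u',v')$ on the boundary of $R_1$ facing $R_2$; then the parallel edge $(u,v)$ of $\mathcal{C}$ with $u'\thicksim u$, $v'\thicksim v$ lets us splice: delete $(u',v')$ from $HP(R_1,s,t)$ and $(u,v)$ from $\mathcal{C}$, and reconnect through $(u',u)$ and $(v',v)$ to obtain the global Hamiltonian $(s,t)$-path.

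The hard part will be choosing the separation in a way that prevents (F1) from reappearing on one of the subrectangles, since for boundary or corner placements of $s$ or $t$ a naive split can easily leave one piece as a $2$-rectangle in which $s$ and an induced endpoint form a vertex cut. The remedy is to exploit the four-way choice of concave-face position in Lemma \ref{HC-rectangular_supergrid_graphs}(2) and, when needed, to shift the separation line by one column/row. A parallel subtlety in Case B is guaranteeing that the recursively built $HP(R_1,s,t)$ actually uses a prescribed boundary edge $(u',v')$; this will require strengthening the induction hypothesis slightly (for example, by proving concurrently that whenever $(R,s,t)$ avoids (F1) one can find an $HP(R,s,t)$ containing any preassigned boundary edge not incident to $s$ or $t$), which is precisely the kind of auxiliary Hamiltonian-connected property advertised in Section~\ref{Sec_rectangular-supergrid} of the paper. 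Once these bookkeeping issues are settled, the inductive argument closes cleanly.
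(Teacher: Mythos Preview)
The paper does not actually prove this lemma: it is quoted verbatim as a result of \cite{Hung17a} and no argument is given here, so there is nothing in the present paper to compare your proposal against. The only additional information the paper records (in the sentence immediately following the lemma) is that the Hamiltonian $(s,t)$-path constructed in \cite{Hung17a} is \emph{canonical}, i.e., it contains at least one boundary edge on each of the four boundaries of $R(m,n)$.

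That said, your inductive outline---direct constructions for $n\leqslant 3$, then a separation-based induction that either concatenates two Hamiltonian paths (Case~A) or splices a Hamiltonian path in one piece with a canonical Hamiltonian cycle in the other via Statement~(2) of Proposition~\ref{Pro_Obs} (Case~B)---is precisely the standard strategy for this kind of result and is, in broad strokes, how \cite{Hung17a} proceeds. You have also correctly identified the two real difficulties: keeping (F1) from reappearing on a thin subrectangle after separation, and guaranteeing that the inductively produced path on $R_1$ uses a boundary edge facing $R_2$. The ``canonical'' property recorded after the lemma is exactly the strengthened induction hypothesis needed for your Case~B, so your instinct to carry such an auxiliary boundary-edge property through the induction is the right one; the later Lemmas~\ref{HamiltonianConnected-Rectangular-zf} and~\ref{HamiltonianConnected-Rectangular-wz} in this paper are refinements of the same idea. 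What remains in your proposal is genuinely the bookkeeping you flag, not a conceptual gap.
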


The Hamiltonian $(s, t)$-path $P$ of $R(m, n)$ constructed in \cite{Hung17a} satisfies that $P$ contains at least one boundary edge of each boundary, and is called \textit{canonical}.

We next give some observations on the relations among cycle, path, and vertex. These propositions will be used in proving our results and are given in \cite{Hung15, Hung16, Hung17a}.

\begin{pro}\label{Pro_Obs}
(See \cite{Hung15, Hung16, Hung17a}.) Let $C_1$ and $C_2$ be two vertex-disjoint cycles of a graph $G$, let $C_1$ and $P_1$ be a cycle and a path, respectively, of $G$ with $V(C_1)\cap V(P_1)=\emptyset$, and let $x$ be a vertex in $G-V(C_1)$ or $G-V(P_1)$. Then, the following statements hold true:\\
$(1)$ If there exist two edges $e_1\in C_1$ and $e_2\in C_2$ such that $e_1 \thickapprox e_2$, then $C_1$ and $C_2$ can be combined into a cycle of $G$ (see Fig. \ref{Fig_Obs}(a)).\\
$(2)$ If there exist two edges $e_1\in C_1$ and $e_2\in P_1$ such that $e_1 \thickapprox e_2$, then $C_1$ and $P_1$ can be combined into a path of $G$ (see Fig. \ref{Fig_Obs}(b)). \\
$(3)$ If vertex $x$ adjoins one edge $(u_1, v_1)$ of $C_1$ (resp., $P_1$), then $C_1$ (resp., $P_1$) and $x$ can be combined into a cycle (resp., path) of $G$ (see Fig. \ref{Fig_Obs}(c)).\\
$(4)$ If there exists one edge $(u_1, v_1)\in C_1$ such that $u_1\thicksim start(P_1)$ and $v_1\thicksim end(P_1)$, then $C_1$ and $P_1$ can be combined into a cycle $C$ of $G$ (see Fig. \ref{Fig_Obs}(d)).
\end{pro}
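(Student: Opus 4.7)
The plan is to prove each of the four parts of Proposition \ref{Pro_Obs} by giving an explicit construction: in every case we exhibit a sequence of edge deletions and edge insertions that transforms the given structures into the claimed combined structure, then verify that the resulting walk is simple (no repeated vertices) and closed/open as required. Since the vertex sets of $C_1,C_2,P_1$ and $x$ are pairwise disjoint by hypothesis, simplicity essentially comes for free once the construction is written down, so the proof reduces to bookkeeping about how to re-route around the parallel edges.

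For part $(1)$, I would let $e_1=(u_1,v_1)\in C_1$ and $e_2=(u_2,v_2)\in C_2$ with $e_1\thickapprox e_2$, so the two cross-adjacencies between the endpoints of $e_1$ and $e_2$ are present in $G$. Write $C_1$ as the cycle $u_1\to v_1\to Q_1\to u_1$ where $Q_1$ is the $(v_1,u_1)$-path obtained by deleting $e_1$, and analogously $C_2$ as $u_2\to v_2\to Q_2\to u_2$. Then the walk $u_1\to Q_1\to v_1\to\{\text{cross edge to }v_2\}\to Q_2\text{ in reverse}\to u_2\to\{\text{cross edge to }u_1\}\to u_1$ uses each vertex of $C_1\cup C_2$ exactly once and is a cycle of $G$ (Fig.\ \ref{Fig_Obs}(a) is the picture of this surgery). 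Part $(2)$ is the same surgery applied to a cycle $C_1$ and a path $P_1$: delete $e_1$ from $C_1$, obtain a path between the endpoints of $e_1$, and splice it into $P_1$ using the parallel edge $e_2\in P_1$ together with the two cross-adjacencies; the path-start and path-end of $P_1$ are preserved.

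For part $(3)$, the edge $(u_1,v_1)$ of $C_1$ (or $P_1$) with $x\thicksim u_1$ and $x\thicksim v_1$ lets us replace the single edge $(u_1,v_1)$ by the two-edge detour $u_1\to x\to v_1$; the resulting walk visits every old vertex once and $x$ once, and it remains a cycle (resp.\ path) because we have only subdivided an interior edge. Part $(4)$ is again edge replacement: deleting the edge $(u_1,v_1)$ from $C_1$ leaves a Hamiltonian $(u_1,v_1)$-path of $C_1$; by hypothesis $u_1\thicksim start(P_1)$ and $v_1\thicksim end(P_1)$, so concatenating this path with $P_1$ via these two new edges closes up into a cycle on $V(C_1)\cup V(P_1)$.

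I do not expect a serious obstacle. The only subtle point is that in part $(1)$ one must pick the right orientation when traversing $Q_2$ so the two cross edges of the parallel pair can both be used without revisiting a vertex; this is automatic once we fix the convention that $e_1\thickapprox e_2$ means the endpoints pair up as $u_1\thicksim u_2$ and $v_1\thicksim v_2$ (or the symmetric pairing), which is exactly the geometric situation depicted in Fig.\ \ref{Fig_Obs}. Formal verification in each case amounts to checking that consecutive vertices in the described walk are adjacent in $G$ (which follows from the hypotheses on $e_1,e_2$, the cross-adjacencies, or the adjacencies of $x$) and that no vertex appears twice (which follows from vertex-disjointness). Thus the proposition follows directly from the four explicit constructions, and no case analysis beyond the four diagrams of Fig.\ \ref{Fig_Obs} is required.
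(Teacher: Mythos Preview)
Your proposal is correct; the four explicit edge-surgery constructions you describe are exactly the standard arguments, and disjointness of the vertex sets indeed makes simplicity automatic. Note that the paper itself does not prove Proposition~\ref{Pro_Obs} at all---it merely states it with a citation to \cite{Hung15, Hung16, Hung17a} and the schematic Fig.~\ref{Fig_Obs}---so your write-up in fact supplies more detail than the present paper does, and matches the pictures in that figure.
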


\begin{figure}[!t]
\begin{center}
\includegraphics[scale=0.9]{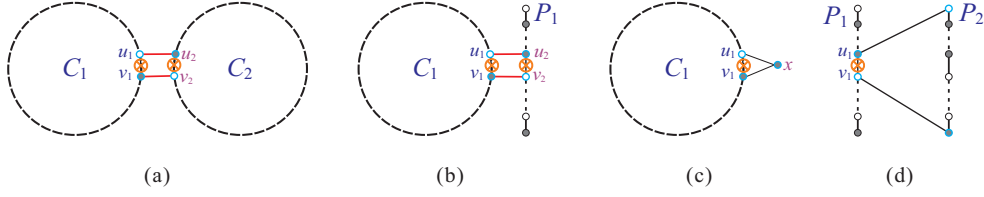}
\caption{A schematic diagram for (a) Statement (1), (b) Statement (2) , (c) Statement (3), and (d) Statement (4) of Proposition \ref{Pro_Obs}, where $\otimes$ represents the destruction of an edge while constructing a combined cycle or a path.} \label{Fig_Obs}
\end{center}
\end{figure}

In \cite{Hung17a}, Hung \textit{et al.} proved the following upper bounds on the length of longest $(s, t)$-paths in rectangular grid graph $R(m, n)$:

\[ \hat{L}(R(m, n), s, t) =
\left\{
\begin{array}{ll}
    t_x-s_x+1                                                            &\mbox{, if $n=1$;}\\
    \max\{2s_x, 2(m-s_x+1)\}\hspace{0.1cm}\textrm{or}\hspace{0.1cm}2m    &\mbox{, if $n=2$;}\\
    mn                                                                   &\mbox{, if $n\geqslant 3$.}
\end{array}
\right. \]

\begin{thm}\label{LongPath}
(See \cite{Hung17a}.) Given a rectangular supergrid graph $R(m, n)$ with $mn > 2$, and two distinct vertices $s$ and $t$ in $R(m, n)$, a longest $(s, t)$-path can be computed in $O(mn)$-linear time.
\end{thm}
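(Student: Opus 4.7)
The plan is to proceed by a straightforward case analysis on the number of rows $n$, matching each case against the upper-bound expression for $\hat{L}(R(m,n),s,t)$ already stated right before the theorem. For each case I would give a constructive procedure that produces an $(s,t)$-path of length equal to this upper bound, and then observe that the construction emits each vertex at most a constant number of times, giving total cost $O(mn)$. Without loss of generality, assume $s_x \leqslant t_x$ as done throughout the paper.

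For $n \geqslant 3$, the forbidden condition (F1) never holds since it is defined only for $R(m,1)$ and $R(m,2)$. Hence by Lemma \ref{HamiltonianConnected-Rectangular}, $HP(R(m,n),s,t)$ exists and has length $mn$, matching the upper bound. I would invoke the constructive proof of Lemma \ref{HamiltonianConnected-Rectangular} from \cite{Hung17a}; that construction is a finite case split followed by a zig-zag sweep of sub-rectangles produced by vertical/horizontal separation operations, and writes each vertex exactly once, so the running time is $O(mn)$.

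For $n=1$, the graph $R(m,1)$ is simply the path $v_1 v_2 \cdots v_m$ (there are no crossed edges in a single row), so the longest $(s,t)$-path is the unique sub-path from $s$ to $t$, which has length $t_x-s_x+1$ and is output in $O(m)$ time. For $n=2$, I split into two sub-cases. If $(R(m,2),s,t)$ does \emph{not} satisfy (F1), then Lemma \ref{HamiltonianConnected-Rectangular} yields a Hamiltonian $(s,t)$-path of length $2m$ in linear time. The remaining sub-case is exactly the vertex-cut case $s_x=t_x$ with $2\leqslant s_x\leqslant m-1$: then $\{s,t\}$ disconnects $R(m,2)$ into a left rectangle $R_L\cong R(s_x,2)$ and a right rectangle $R_R\cong R(m-s_x+1,2)$, sharing only the column containing $s$ and $t$. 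Any $(s,t)$-path must stay on one side of this column (apart from $s,t$ themselves), so its length cannot exceed $\max\{2s_x,\,2(m-s_x+1)\}$. To achieve this bound I would take the larger of $R_L,R_R$; in that rectangle, $s$ and $t$ are corner vertices of the same short side, so (F1) fails there, and Lemma \ref{HamiltonianConnected-Rectangular} constructs a Hamiltonian $(s,t)$-path in it in linear time.

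The main technical obstacle is the $n=2$ vertex-cut sub-case: here the optimum is \emph{not} Hamiltonian, and one must both prove the matching upper bound by the cut argument and show that applying Lemma \ref{HamiltonianConnected-Rectangular} to the larger of the two sides does in fact avoid (F1); this last point is exactly where the hypothesis ``$s$ and $t$ sit on a short boundary at corner positions'' of the sub-rectangle is used. Once this is in place, the three branches (namely $n=1$, $n=2$, and $n\geqslant 3$) can be dispatched by a constant-time test on $m$, $n$, $s$, $t$, and the invoked sub-routine in each branch runs in $O(mn)$, giving the claimed linear-time algorithm.
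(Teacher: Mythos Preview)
The paper does not give its own proof of this theorem: it is a cited result from \cite{Hung17a}, stated here only with the reference ``(See \cite{Hung17a}.)'' and the preceding upper-bound formula for $\hat{L}(R(m,n),s,t)$. There is therefore nothing in the present paper to compare your argument against directly.

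That said, your proposal is a correct and natural reconstruction. The three-way split on $n$ matches the displayed upper-bound formula exactly, and each branch is dispatched by the right tool already present in this paper: the trivial path for $n=1$, Lemma~\ref{HamiltonianConnected-Rectangular} for $n\geqslant 3$ and for the non-(F1) sub-case of $n=2$, and the cut argument plus Lemma~\ref{HamiltonianConnected-Rectangular} on the larger side for the vertex-cut sub-case of $n=2$. Your observation that in the vertex-cut sub-case $s$ and $t$ become corners of a short side of the chosen sub-rectangle, so (F1) fails there, is exactly the point that makes the construction go through. The linear-time claim follows since each branch either writes the path directly or invokes the constructive Hamiltonian-path routine from \cite{Hung17a}, which outputs each vertex once.
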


In this paper, we will show that a longest $(s, t)$-path of $(L(m,n; k,l), s, t)$ can be computed in $O(mn)$-linear time.

\section{The Hamiltonian connected properties of rectangular supergrid graphs}\label{Sec_rectangular-supergrid}
In \cite{Hung17a}, we proved that every rectangular supergrid graph $R(m, n)$ with $m, n\geqslant 2$ always contains a Hamiltonian $(s, t)$-path if $(R(m, n), s, t)$ does not satisfy condition (F1). The constructed Hamiltonian $(s, t)$-path of $R(m, n)$ contains at least one boundary edge of each boundary in $R(m, n)$. In this section, we will discover two Hamiltonian connected properties of rectangular supergrid graphs under some conditions. These two properties will be used to prove the Hamiltonian connectivity of $L$-shaped supergrid graphs. Let $R(m, n)$ be a rectangular supergrid graph with $m\geqslant 3$ and $n\geqslant 2$, and let $w=(1, 1)$, $z=(2, 1)$, and $f=(3, 1)$ be three vertices in $R(m, n)$. We will prove the following two Hamiltonian connected properties of $R(m, n)$:

\begin{description}
  \item[(P1)] If $s=w=(1, 1)$ and $t=z=(2, 1)$, then there exists a Hamiltonian $(s, t)$-path $P$ of $R(m, n)$ such that edge $(z, f)\in P$.
  \item[(P2)] If ($n=2$ and $\{s, t\}\not\in \{\{w, z\}, \{(1, 1), (2, 2)\}, \{(2, 1), (1, 2)\}\}$) or ($n\geqslant 3$ and $\{s, t\}\neq\{w, z\}$), then there exists a Hamiltonian $(s, t)$-path $Q$ of $R(m, n)$ such that edge $(w, z)\in Q$, where $(R(m, n), s, t)$ does not satisfy condition (F1).
\end{description}

First, we verify the first property (P1) as follows.

\begin{lem}\label{HamiltonianConnected-Rectangular-zf}
Let $R(m, n)$ be a rectangular supergrid graph with $m\geqslant 3$ and $n\geqslant 2$, and let $s=w=(1, 1)$, $t=z=(2, 1)$, and $f=(3, 1)$. Then, there exists a Hamiltonian $(s, t)$-path $P$ of $R(m, n)$ such that edge $(z, f)\in P$.
\end{lem}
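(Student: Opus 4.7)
The key observation is that the desired conclusion is equivalent to producing a Hamiltonian cycle $C$ of $R(m,n)$ in which the two neighbors of $z=(2,1)$ are exactly $w=(1,1)$ and $f=(3,1)$. Indeed, given such a cycle, deleting the edge $(w,z)$ leaves a Hamiltonian path from $w$ to $z$ whose final edge is $(f,z)=(z,f)$, which is precisely the required $P$. So the whole problem reduces to finding a Hamiltonian cycle of $R(m,n)$ that contains the two consecutive top-boundary edges $(w,z)$ and $(z,f)$.

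The natural way to produce such a cycle is to invoke the canonical Hamiltonian cycle machinery of Lemma~\ref{HC-rectangular_supergrid_graphs}. A canonical Hamiltonian cycle whose \emph{flat face} lies on the top boundary traverses that boundary as a single boundary path $(1,1)\to(2,1)\to\cdots\to(m,1)$; this path contains both edges $(w,z)$ and $(z,f)$, and these are adjacent at $z$ exactly as needed. Hence the plan is: (i) exhibit, for every admissible $(m,n)$, a canonical Hamiltonian cycle of $R(m,n)$ whose flat face is on the top boundary; (ii) delete $(w,z)$ from that cycle.

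The case $n=2$ and the case $n\geqslant 4$ are handled directly by Lemma~\ref{HC-rectangular_supergrid_graphs}(2): among the four canonical Hamiltonian cycles it guarantees (with concave faces on the four different boundaries), I simply pick the one whose concave face sits on the \emph{bottom} boundary, so the top boundary is flat. The case $n=3$ is the only one that requires a little extra care, because Lemma~\ref{HC-rectangular_supergrid_graphs}(1) only asserts the existence of \emph{one} canonical cycle, without specifying which of the two longer boundaries carries the flat face. Here I will write down the cycle explicitly, namely
\[
(1,1)\to(2,1)\to\cdots\to(m,1)\to(m,2)\to(m,3)\to(m-1,2)\to(m-1,3)\to\cdots\to(2,2)\to(2,3)\to(1,3)\to(1,2)\to(1,1),
\]
which snakes through rows $2$ and $3$ using the crossed edges $((i,3),(i-1,2))$ and vertical edges $((i,2),(i,3))$, and verify in one line that it visits every vertex of $R(m,3)$ exactly once.

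The main obstacle is therefore purely the $n=3$ subcase, where one has to confirm that the freedom in choosing the flat longer boundary is really available. Once that is settled by the explicit snake construction above, the rest is bookkeeping: remove the single edge $(w,z)$ from the chosen canonical cycle, observe that $z$ now has degree $1$ (its remaining neighbour being $f$) and $w$ also has degree $1$, and conclude that the resulting subgraph is a Hamiltonian $(w,z)$-path of $R(m,n)$ that contains the edge $(z,f)$, as claimed.
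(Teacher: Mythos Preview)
Your approach is correct and is a genuinely different (and more economical) route than the paper's. You observe that the task is equivalent to producing a Hamiltonian cycle of $R(m,n)$ whose restriction to the top row is the full boundary path, and then you manufacture such a cycle either by invoking Lemma~\ref{HC-rectangular_supergrid_graphs} (for $n=2$ and $n\geqslant 4$, choosing the canonical cycle whose concave face is on the bottom) or by an explicit snake construction (for $n=3$); deleting the edge $(w,z)$ then yields the required path. The paper, by contrast, never passes through Hamiltonian cycles: it handles $m=3$ by an induction on $n$ that carries along an auxiliary invariant (namely that the constructed path contains the bottom boundary path as a subpath), and handles $m>3$ by a vertical separation into $R(2,n)$ and $R(m-2,n)$, gluing an explicit $(s,t)$-path in the two-column strip to a canonical Hamiltonian cycle of the remainder via Proposition~\ref{Pro_Obs}(2). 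Your reduction is cleaner and sidesteps the induction entirely; the paper's argument is more hands-on and does not rely on the ``choose which boundary carries the concave face'' freedom in Lemma~\ref{HC-rectangular_supergrid_graphs}(2).

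One small caveat worth recording: Lemma~\ref{HC-rectangular_supergrid_graphs} is stated under the standing assumption $m\geqslant n$, whereas Lemma~\ref{HamiltonianConnected-Rectangular-zf} allows $n>m$ (indeed the paper's own Case~1 treats $m=3$ with arbitrarily large $n$). So when you invoke part~(2) for $n\geqslant 4$ you should note that, for $3\leqslant m<n$, one applies Lemma~\ref{HC-rectangular_supergrid_graphs} to the transposed rectangle $R(n,m)$ and checks that the flat face on the appropriate boundary becomes a flat top row of $R(m,n)$ under the transposition. In particular, when $m=3$ and $n\geqslant 4$ the relevant cycle comes from part~(1) applied to $R(n,3)$, whose two flat \emph{shorter} boundaries correspond precisely to the top and bottom rows of $R(3,n)$. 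This is routine but should be said explicitly.
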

\begin{proof}
Depending on whether $m=3$, we consider the following two cases:

\textit{Case} 1: $m=3$. In this case, we claim that\\
\noindent there exists a Hamiltonian $(s, t)$-path $P$ of $R(m, n)$ such that $(z, f)\in P$ and a boundary path connecting down-left corner and down-right corner is a subpath of $P$.

We will prove the above claim by induction on $n$. Initially, let $n=2$. The desired Hamiltonian $(s, t)$-path $P$ of $R(3, 2)$ can be easily constructed and is depicted in Fig. \ref{Fig_HP-rectangular-zf_edge}(a). Assume that the claim holds true when $n=k\geqslant 2$. Let $u_1=(1, k)$, $u_2=(2, k)$, and $u_3=(3, k)$. By induction hypothesis, there exists a Hamiltonian $(s, t)$-path $P_k$ of $R(m, k)$ such that $(z, f)\in P_k$ and $P_k$ contains the boundary path $P'=u_1\rightarrow u_2\rightarrow u_3$ as a subpath. Let $P_k=P_1\Rightarrow P'\Rightarrow P_2$. Consider $n=k+1$. Let $v_1=(1, k+1)$, $v_2=(2, k+1)$, $v_3=(3, k+1)$, and let $\tilde{P}= v_1\rightarrow v_2\rightarrow v_3$. Then, $P_1\Rightarrow u_1\Rightarrow \tilde{P} \Rightarrow u_2\rightarrow u_3\Rightarrow P_2$ forms the desired Hamiltonian $(s, t)$-path of $R(3, k+1)$. The constructed Hamiltonian $(s, t)$-path of $R(3, k+1)$ is shown in Fig. \ref{Fig_HP-rectangular-zf_edge}(b). By induction, the claim holds and hence, the lemma holds true in the case of $m=3$.

\textit{Case} 2: $m>3$. In this case, we first make a vertical separation on $R(m, n)$ to partition it into two disjoint rectangular supergrid subgraphs $R_\alpha=R(2, n)$ and $R_\beta=R(m-2, n)$, as depicted in Fig. \ref{Fig_HP-rectangular-zf_edge}(c). We can easily construct a Hamiltonian $(s, t)$-path $P_\alpha$ of $R_\alpha$ such that $P_\alpha$ contains a boundary path placed to face $R_\beta$, as shown in Fig. \ref{Fig_HP-rectangular-zf_edge}(c). By Lemma \ref{HC-rectangular_supergrid_graphs}, $R_\beta$ contains a canonical Hamiltonian cycle $C_\beta$. We can place one flat face of $C_\beta$ to face $R_\alpha$. Then, there exist two edges $e_1\in P_\alpha$ and $e_2\in C_\beta$ such that $t(=z)$ is a vertex of $e_1$, $f$ is a vertex of $e_2$, and $e_1\thickapprox e_2$. By Statement (2) of Proposition \ref{Pro_Obs}, $P_\alpha$ and $C_\beta$ can be combined into a Hamiltonian $(s, t)$-path $P$ of $R(m, n)$ such that edge $(z, f)\in P$. The constructed Hamiltonian $(s, t)$-path of $R(m, n)$ is depicted in Fig. \ref{Fig_HP-rectangular-zf_edge}(c). Thus, the lemma holds true when $m\geqslant 4$.

It immediately follows from the above cases that the lemma holds true.
\end{proof}

\begin{figure}[!t]
\begin{center}
\includegraphics[scale=1.0]{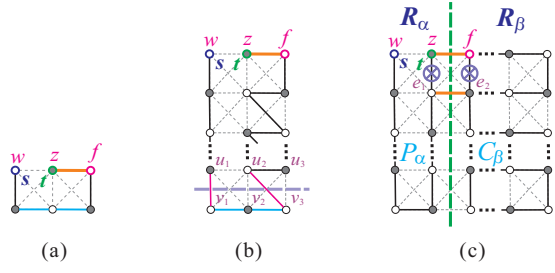}
\caption{The Hamiltonian $(s, t)$-path of rectangular supergrid graph $R(m, n)$ containing edge $(z, f)$, where $s=w=(1, 1)$, $t=z=(2, 1)$, and $f=(3, 1)$, for (a) $m=3$ and $n=2$, (b) $m=3$ and $n=k+1\geqslant 3$, and (c) $m\geqslant 4$ and $n\geqslant 2$, where solid lines indicate the Hamiltonian path between $s$ and $t$ and $\otimes$ represents the destruction of an edge while constructing such a Hamiltonian path.} \label{Fig_HP-rectangular-zf_edge}
\end{center}
\end{figure}

Next, we will verify the second Hamiltonian connected property (P2) of $R(m, n)$, where $m\geqslant 3$ and $n\geqslant 2$. We first consider the following forbidden condition such that there exists no Hamiltonian $(s, t)$-path $Q$ of $R(m, n)$ with edge $(w, z)\in Q$:

\begin{description}
  \item[(F2)] $n=2$ and $\{s, t\}\in \{\{w, z\}, \{(1, 1), (2, 2)\}, \{(2, 1), (1, 2)\}\}$, or $n\geqslant 3$ and $\{s, t\}=\{w, z\}$.
\end{description}

The above condition states that $R(m, n)$ has no Hamiltonian $(s, t)$-path containing edge $(w, z)$ if $(R(m, n), s, t)$ satisfies condition (F2). We will prove property (P2) by constructing a Hamiltonian $(s, t)$-path of $R(m, n)$ visiting edge $(w, z)$ when $(R(m, n), s, t)$ does not satisfy conditions (F1) and (F2). To verify property (P2), we first consider the special case, in Lemma \ref{HamiltonianConnected-Rectangular-wz_3-rectangle}, that $m=3$, $n\geqslant 2$, and either $s=z$ or $t=z$. This lemma can be proved by similar arguments in proving Case 1 of Lemma \ref{HamiltonianConnected-Rectangular-zf}.

\begin{lem}\label{HamiltonianConnected-Rectangular-wz_3-rectangle}
Let $R(m, n)$ be a rectangular supergrid graph with $m=3$ and $n\geqslant 2$, $s$ and $t$ be its two distinct vertices, and let $w=(1, 1)$ and $z=(2, 1)$. If $(R(m, n), s, t)$ does not satisfy conditions \emph{(F1)} and \emph{(F2)}, and either $s=z$ or $t=z$, then there exists a Hamiltonian $(s, t)$-path $Q$ of $R(m, n)$ such that edge $(w, z)\in Q$.
\end{lem}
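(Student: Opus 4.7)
The proof proceeds by induction on $n$, paralleling the argument in Case 1 of Lemma~\ref{HamiltonianConnected-Rectangular-zf}. By symmetry of Hamiltonian paths we may assume throughout that $t = z$. To make the induction work I strengthen the conclusion with a ``bottom-row invariant'': the Hamiltonian $(s, z)$-path $Q$ of $R(3, n)$ contains the edge $(w, z)$ \emph{and} contains at least one horizontal edge inside the bottom row of $R(3, n)$, i.e., an edge between consecutive vertices $(i, n)$ and $(i+1, n)$. This invariant is precisely what is needed to absorb a newly added bottom row in the inductive step.

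Because conditions (F1) and (F2) are more restrictive at small $n$, I take both $n = 2$ and $n = 3$ as base cases, handled by explicit construction. For $n = 2$ only $s \in \{(3, 1), (3, 2)\}$ remain after removing the positions forbidden by (F1) and (F2); for $n = 3$ there are seven valid positions of $s$, for each of which a Hamiltonian path satisfying the invariant is exhibited. Treating $n = 3$ explicitly circumvents two obstructions that would otherwise appear in the IH-chaining approach: for $s \in \{(1, 2), (2, 2)\}$ inside $R(3, 3)$, the natural reduction to $R(3, 2)$ is blocked by (F2) and (F1) respectively.

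For the inductive step from $R(3, n-1)$ to $R(3, n)$ with $n \geq 4$, write $v_i = (i, n)$ and $u_i = (i, n-1)$. If $s$ lies in rows $1$ through $n-1$ (Case A), the inductive hypothesis is applicable because $n - 1 \geq 3$ kills any (F1) obstruction and (F2) only forbids $s = w$, which is already excluded. This yields a Hamiltonian path $Q_{n-1}$ of $R(3, n-1)$ containing $(w, z)$ together with some horizontal edge $(u_i, u_{i+1})$ in row $n-1$; substituting this edge by the detour $u_i \to v_1 \to v_2 \to v_3 \to u_{i+1}$ (valid by the vertical and diagonal supergrid adjacencies, e.g.\ $v_3 \sim u_2$) produces the required Hamiltonian path of $R(3, n)$, which retains $(w, z)$ and acquires new bottom-row edges $(v_1, v_2)$ and $(v_2, v_3)$. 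If instead $s$ lies in the new bottom row, so $s \in \{v_1, v_2, v_3\}$ (Case B), I split into sub-cases: for $s = v_1$ I prepend $v_1 \to v_2 \to v_3$ to a Hamiltonian $(u_3, z)$-path of $R(3, n-1)$ furnished by the inductive hypothesis; for $s = v_3$ the argument is symmetric via $(u_1, z)$-paths; and the most delicate sub-case $s = v_2$ is handled by the snake construction $v_2 \to v_1 \to u_1 \to u_2 \to v_3 \to u_3 \to (3, n-2) \to \cdots \to z$ whose tail is a Hamiltonian $((3, n-2), z)$-path of $R(3, n-2)$ delivered by the inductive hypothesis. The main obstacle is thus the explicit case analysis at the base levels, particularly the ``interior'' positions $s \in \{(1, 2), (2, 2)\}$ at $n = 3$, where the natural inductive reduction is blocked and one must produce a direct Hamiltonian path that still respects the bottom-row invariant.
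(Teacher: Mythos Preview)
Your argument is correct and follows the same overall scheme as the paper's proof---induction on $n$ with an auxiliary invariant about an edge in the bottom row---but the details diverge in instructive ways. The paper fixes $s=z$ (you fix $t=z$, which is equivalent), maintains a \emph{sharper} invariant specifying exactly which bottom-row edge appears (namely $(x,y)$ if the free endpoint is the bottom-right corner $r$, and $(y,r)$ otherwise), uses only $n=2$ as a base case, and never descends more than one level: in the sub-case corresponding to your $s=v_2$ it reroutes through an auxiliary endpoint $p=r_1$ inside $R(3,k)$ and then absorbs the leftover corner $x_2$ via Statement~(3) of Proposition~\ref{Pro_Obs}, whereas you invoke strong induction on $R(3,n-2)$. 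Your weaker ``some bottom-row edge'' invariant costs you the extra base case $n=3$ but is still enough for the detour $u_i\to v_1\to v_2\to v_3\to u_{i+1}$, since both horizontal edges of row $n-1$ admit this replacement in a supergrid. More importantly, your explicit inclusion of $n=3$ as a base case is not mere caution: at the step $k=2\to k+1=3$ the paper's Case~1 silently applies the hypothesis to $(R(3,2),z,t)$ with $t\in\{(1,2),(2,2)\}$, and its Case~2.2 applies it to $(R(3,2),z,(1,2))$, all of which are blocked by (F2) or (F1). Your treatment closes this gap cleanly.
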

\begin{proof}
Without loss of generality, assume that $s=z$. Then, $t_x\leqslant s_x$ or $t_x\geqslant s_x$. That is, $t$ may be to the left of $s$. Let $x=(1, n)$, $y=(2, n)$, and $r=(3, n)$ be three vertices of $R(m, n)$. We claim that\\
\noindent there exists a Hamiltonian $(s, t)$-path $Q$ of $R(m, n)$ such that edge $(w, z)\in Q$, and $(x, y)\in Q$ if $t=r$; and $(y, r)\in Q$ otherwise.

We will prove the above claim by induction on $n$. Initially, let $n=2$. Since $(R(m, n), s, t)$ does not satisfy conditions (F1) and (F2), $t\not\in \{(1, 1), (1, 2), $(2, 2)$\}$. Thus, $t\in \{(3, 1), (3, 2)\}$. Then, the desired Hamiltonian $(s, t)$-path $Q$ of $R(3, 2)$ can be easily constructed and is depicted in Fig. \ref{Fig_HP-rectangular-wz_3-rectangle}(a). Assume that the claim holds true when $n=k\geqslant 2$. Let $x_1=(1, k)$, $y_1=(2, k)$, and $r_1=(3, k)$. By induction hypothesis, there exists Hamiltonian $(s, p)$-path $Q_k$ of $R(3, k)$ such that edge $(w, z)\in Q_k$, and $(x_1, y_1)\in Q_k$ or $(y_1, r_1)\in Q_k$ depending on whether or not $p=r_1$. Consider that $n=k+1$. We first make a horizontal separation on $R(3, k+1)$ to obtain two disjoint parts $R_1=R(3, k)$ and $R_2=R(3, 1)$, as shown in Fig. \ref{Fig_HP-rectangular-wz_3-rectangle}(b). Let $x_2=(1, k+1)$, $y_2=(2, k+1)$, and $r_2=(3, k+1)$ be the three vertices of $R_2$. We will construct a Hamiltonian $(s, t)$-path $Q_{k+1}$ of $R(3, k+1)$ such that $(w, z)\in Q_{k+1}$, and $(x_2, y_2)\in Q_{k+1}$ or $(y_2, r_2)\in Q_{k+1}$ as follows. Depending on the location of $t$, there are the following two cases:

\textit{Case} 1: $t\in R_1$. Let $P_2=x_2\rightarrow y_2\rightarrow r_2$. By induction hypothesis, there exists Hamiltonian $(s, t)$-path $Q_k$ of $R(m, k)$ such that edge $(w, z)\in Q_k$, and $(x_1, y_1)\in Q_k$ if $t=r_1$; and $(y_1, r_1)\in Q_k$ otherwise. Thus, there exists an edge $(u_k, v_k)$ in $Q_k$ such that $start(P_2)\thicksim u_k$ and $end(P_2)\thicksim v_k$, where $(u_k, v_k)=(x_1, y_1)$ or $(y_1, r_1)$. By Statement (4) of Proposition \ref{Pro_Obs}, $Q_k$ and $P_2$ can be combined into a Hamiltonian $(s, t)$-path $Q_{k+1}$ of $R(3, k+1)$ such that edges $(w, z), (x_2, y_2), (y_2, r_2)\in Q_{k+1}$. The construction of such a Hamiltonian path is depicted in Fig. \ref{Fig_HP-rectangular-wz_3-rectangle}(b).

\textit{Case} 2: $t\in R_2$. In this case, $t\in\{x_2, y_2, r_2\}$. Then, there are the following three subcases:

    \hspace{0.5cm}\textit{Case} 2.1: $t=x_2$. Let $p=r_1\in R_1$ and $q=r_2\in R_2$. Then, $p\thicksim q$. Let $P_2=r_2(=q)\rightarrow y_2\rightarrow x_2(=t)$. By induction hypothesis, there exists Hamiltonian $(s, p)$-path $Q_k$ of $R(m, k)$ such that edges $(w, z), (x_1, y_1) \in Q_k$. Then, $Q_{k+1}=Q_k\Rightarrow P_2$ forms a Hamiltonian $(s, t)$-path of $R(m, k+1)$ with $(w, z), (x_2, y_2), (y_2, r_2)\in Q_{k+1}$. Fig. \ref{Fig_HP-rectangular-wz_3-rectangle}(c) shows the construction of such a Hamiltonian $(s, t)$-path.

    \hspace{0.5cm}\textit{Case} 2.2: $t=r_2$. Let $p=x_1\in R_1$ and $q=x_2\in R_2$. Let $P_2=x_2(=q)\rightarrow y_2\rightarrow r_2(=t)$. By induction hypothesis, there exists Hamiltonian $(s, p)$-path $Q_k$ of $R(m, k)$ such that edges $(w, z), (y_1, r_1) \in Q_k$. Then, $Q_{k+1} = Q_k \Rightarrow P_2$ forms a Hamiltonian $(s, t)$-path of $R(m, k + 1)$ with $(w, z), (x_2, y_2), (y_2, r_2) \in Q_{k+1}$. Fig. \ref{Fig_HP-rectangular-wz_3-rectangle}(d) shows the construction of such a Hamiltonian ($s, t)$-path.

    \hspace{0.5cm}\textit{Case} 2.3: $t=y_2$. Let $p=r_1\in R_1$. Let $P_2=r_2\rightarrow y_2(=t)$. By induction hypothesis, there exists Hamiltonian $(s, p)$-path $Q_k$ of $R(m, k)$ such that edges $(w, z), (x_1, y_1) \in Q_k$. Then, $Q'_k=Q_k\Rightarrow P_2$ is a Hamiltonian $(s, t)$-path of $R(m, k+1)-x_2$ such that edges $(w, z), (x_1, y_1), (y_2, r_2)\in Q'_k$. Since $x_2 \thicksim x_1$, $x_2 \thicksim y_1$, and edge $(x_1, y_1)\in Q'_k$, by Statement (3) of Proposition \ref{Pro_Obs} $Q'_k$ and $x_2$ can be combined into a Hamiltonian $(s, t)$-path $Q_{k+1}$ of $R(3, k+1)$ such that edges $(w, z), (y_2, r_2)\in Q_{k+1}$. Fig. \ref{Fig_HP-rectangular-wz_3-rectangle}(e) depicts such a construction of Hamiltonian $(s, t)$-path.

It immediately follows from the above cases that the claim holds true when $n=k+1$. By induction, the claim holds true and, hence, the lemma is true.
\end{proof}

\begin{figure}[!t]
\begin{center}
\includegraphics[scale=1.0]{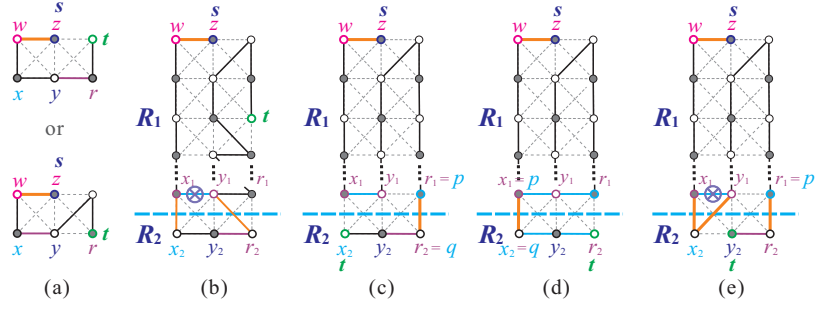}
\caption{The Hamiltonian $(s, t)$-path of 3-rectangle $R(3, n)$ containing edge $(w, z)$, where $s=z=(1, 2)$ and $w=(1, 1)$, for (a) $n=2$, (b) $n=k+1\geqslant 3$ and $t\in R_1 (=R(3, k))$, and (c)--(e) $n=k+1\geqslant 3$ and $t\in R_2 (=R(3, 1))$, where solid lines indicate the constructed Hamiltonian $(s, t)$-path and $\otimes$ represents the destruction of an edge while constructing such a Hamiltonian path.} \label{Fig_HP-rectangular-wz_3-rectangle}
\end{center}
\end{figure}

We next verify property (P2) in the following lemma.

\begin{lem}\label{HamiltonianConnected-Rectangular-wz}
Let $R(m, n)$ be a rectangular supergrid graph with $m\geqslant 3$ and $n\geqslant 2$, $s$ and $t$ be its two distinct vertices, and let $w=(1, 1)$ and $z=(2, 1)$. If $(R(m, n), s, t)$ does not satisfy conditions \emph{(F1)} and \emph{(F2)}, then there exists a Hamiltonian $(s, t)$-path $Q$ of $R(m, n)$ such that edge $(w, z)\in Q$.
\end{lem}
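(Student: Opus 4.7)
The plan is to mirror the vertical-separation strategy of Lemma~\ref{HamiltonianConnected-Rectangular-zf}: separate $R(m, n)$ vertically to isolate the edge $(w, z)$ in a narrow block $R_\alpha = R(2, n)$, combine a Hamiltonian cycle or path on $R_\alpha$ with a Hamiltonian $(s, t)$-path on the remaining block $R_\beta$ via Proposition~\ref{Pro_Obs}, and treat the residual case $m = 3$ directly with the help of Lemma~\ref{HamiltonianConnected-Rectangular-wz_3-rectangle}. The argument splits first on the value of $m$ and then, for $m \geq 4$, on the location of $\{s, t\}$ relative to the vertical separation.

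For the base case $m = 3$, if $s = z$ or $t = z$, the statement is precisely Lemma~\ref{HamiltonianConnected-Rectangular-wz_3-rectangle}. The symmetric subcase $s = w$ or $t = w$ can be reduced to that lemma either by reflecting $R(3, n)$ across the vertical axis $v_x = 2$ (which interchanges $w$ and the opposite column while preserving $(w, z)$ up to relabelling) or by running the same $n$-induction as in Lemma~\ref{HamiltonianConnected-Rectangular-wz_3-rectangle}, starting from an explicit $R(3, 2)$ base pattern whose path begins $w \to z \to \cdots$. If neither $s$ nor $t$ lies in $\{w, z\}$, I would invoke the canonical Hamiltonian $(s, t)$-path of Lemma~\ref{HamiltonianConnected-Rectangular} on the upper block $R(3, n - 1)$, attach the bottom row $R(3, 1)$ by a boundary path that passes through $(w, z)$, and fuse the two via Proposition~\ref{Pro_Obs}(2).

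For $m \geq 4$, perform the vertical separation $V(R(m, n)) = V(R_\alpha) \cup V(R_\beta)$ with $R_\alpha = R(2, n)$ on the left (containing $\{w, z\}$) and $R_\beta = R(m - 2, n)$ on the right. If both $s, t \in R_\beta$, then Lemma~\ref{HamiltonianConnected-Rectangular} yields a canonical Hamiltonian $(s, t)$-path $P_\beta$ of $R_\beta$ whose left boundary contains at least one edge parallel to an edge of the perimeter Hamiltonian cycle $C_\alpha$ of $R_\alpha$ supplied by Lemma~\ref{HC-rectangular_supergrid_graphs}; since $C_\alpha$ automatically contains $(w, z)$, Proposition~\ref{Pro_Obs}(2) merges $P_\beta$ and $C_\alpha$ into the required path (the hypothesis that $(R(m, n), s, t)$ does not satisfy (F1) is used here to check that $(R_\beta, s, t)$ also avoids (F1)). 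If both $s, t \in R_\alpha$, I reverse the roles: apply Lemma~\ref{HC-rectangular_supergrid_graphs} to obtain a canonical Hamiltonian cycle $C_\beta$ of $R_\beta$ with a flat face on its left boundary, and build by direct case analysis in the thin block $R_\alpha = R(2, n)$ a Hamiltonian $(s, t)$-path that uses both $(w, z)$ and some right-column boundary edge of $R_\alpha$; Proposition~\ref{Pro_Obs}(2) again assembles the two pieces. In the mixed subcase $s \in R_\alpha$, $t \in R_\beta$ (and its mirror image), I select an exit vertex $p$ in the right column of $R_\alpha$ whose horizontal neighbour $p'$ lies in the left column of $R_\beta$, construct a Hamiltonian $(s, p)$-path of $R_\alpha$ containing $(w, z)$, use Lemma~\ref{HamiltonianConnected-Rectangular} to get a Hamiltonian $(p', t)$-path of $R_\beta$, and concatenate via the edge $(p, p')$.

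The main technical hurdle will be the explicit Hamiltonian-path constructions inside the very thin strip $R(2, n)$ that are required in the second and third subcases of the main case: because $R_\alpha$ has only two columns, there is little freedom to simultaneously route through $(w, z)$ and through a prescribed right-boundary edge, so a complete enumeration of the possible positions of $s, t$ inside $R_\alpha$ is needed, together with a careful verification that the excluded configurations of (F1) and (F2) correspond exactly to the failure cases of this enumeration. The tightest sub-regime is $n = 2$, where $R_\alpha = R(2, 2)$ has only four vertices and (F2) already eliminates three of the six unordered pairs; this piece is best treated by direct inspection, after which the extension to arbitrary $n \geq 3$ follows a straightforward row-by-row induction analogous to the one used in Lemma~\ref{HamiltonianConnected-Rectangular-wz_3-rectangle}.
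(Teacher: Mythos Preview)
Your vertical-separation strategy has a genuine gap in the sub-cases where both $s$ and $t$ fall on the same side of the cut. You assert that ``the hypothesis that $(R(m, n), s, t)$ does not satisfy (F1) is used here to check that $(R_\beta, s, t)$ also avoids (F1)'', and similarly you assume a Hamiltonian $(s,t)$-path exists inside $R_\alpha=R(2,n)$ when $s,t\in R_\alpha$. Neither inference is valid. Take $m\geqslant 4$, $n\geqslant 3$, $s=(1,2)$, $t=(2,2)$: then $(R(m,n),s,t)$ satisfies none of (F1), (F2), yet $\{s,t\}$ is a vertex cut of $R_\alpha=R(2,n)$, so $R_\alpha$ has no Hamiltonian $(s,t)$-path at all and your construction collapses. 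The symmetric failure occurs on the $R_\beta$ side when $m=4$ and, say, $s=(3,2)$, $t=(4,2)$. No amount of ``direct case analysis in the thin block'' will produce a Hamiltonian path that does not exist; the cut position itself is wrong for these configurations, and there is no single vertical cut that avoids the problem uniformly.

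The paper circumvents this precisely by \emph{not} separating in the generic case $\{s,t\}\cap\{w,z\}=\emptyset$. Instead it takes an arbitrary Hamiltonian $(s,t)$-path $\tilde Q$ of $R(m,n)$ from Lemma~\ref{HamiltonianConnected-Rectangular} and modifies it locally near $w$: since $N(w)\setminus\{z\}=\{(1,2),(2,2)\}$ is a clique, one can excise $w$ from $\tilde Q$ and reinsert the length-$2$ path $w\to z$ (or $w\to (1,2)$) against a suitable edge using Proposition~\ref{Pro_Obs}(4). Only when one of $s,t$ equals $w$ or $z$ does the paper resort to separations, and there the endpoint lying on a corner of the thin strip rules out the (F1) obstruction you ran into. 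A smaller issue: your reflection across $v_x=2$ for the $m=3$, $s=w$ sub-case sends $(w,z)$ to $((3,1),(2,1))$, not to itself, so it does not reduce to Lemma~\ref{HamiltonianConnected-Rectangular-wz_3-rectangle} as claimed.
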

\begin{proof}
We will provide a constructive method to prove this lemma. By assumption of this lemma, $\{s, t\}\neq \{w, z\}$ and, hence, $0\leqslant |\{s, t\}\cap \{w, z\}|\leqslant 1$. Then, there are the following three cases:

\textit{Case} 1: $\{s, t\} \cap \{w, z\}=\emptyset$. In this case, $s, t\not\in\{w, z\}$. By Lemma \ref{HamiltonianConnected-Rectangular}, $R(m, n)$ contains a Hamiltonian $(s, t)$-path $\tilde{Q}$. If edge $(w, z)\in \tilde{Q}$, then $\tilde{Q}$ is the desired Hamiltonian $(s, t)$-path of $R(m, n)$. Suppose that edge $(w, z)\not\in \tilde{Q}$ below. Let $x=(1, 2)$ and $y=(2, 2)$. Then, $N(w)-\{z\}=\{x, y\}$. Let $\tilde{Q} = Q^w_1\Rightarrow w \Rightarrow Q^w_2$. Since $N(w)-\{z\}=\{x, y\}$, $\{end(Q^w_1), start(Q^w_2)\}=\{x, y\}$ and hence $end(Q^w_1)\thicksim start(Q^w_2)$. Then, $\tilde{Q}' = Q^w_1\Rightarrow Q^w_2$ is a Hamiltonian $(s, t)$-path of $R(m, n)-w$, where edge $(end(Q^w_1), start(Q^w_2))=(x, y)$ is visited by $\tilde{Q}'$. Let $\tilde{Q}' = Q^z_1\Rightarrow z \Rightarrow Q^z_2$. Depending on whether $end(Q^z_1)\thicksim start(Q^z_2)$, we consider the following two subcases:

    \hspace{0.5cm}\textit{Case} 1.1: $end(Q^z_1)\thicksim start(Q^z_2)$. In this subcase, $Q^z = Q^z_1\Rightarrow Q^z_2$ is a Hamiltonian $(s, t)$-path of $R(m, n)-\{w, z\}$, where edge $(x, y)$ is visited by $Q^z$. Let $P'=w \rightarrow z$. Then, there exist one edge $(x, y)\in Q^z$ such that $start(P')\thicksim x$ and $end(P')\thicksim y$. By Statement (4) of Proposition \ref{Pro_Obs}, $Q^z$ and $P'$ can be combined into a Hamiltonian $(s, t)$-path $Q$ of $R(m, n)$ such that edge $(w, z)\in Q$. The construction of such a Hamiltonian $(s, t)$-path is depicted in Fig. \ref{Fig_HP-rectangular-wz_edge}(a).

    \hspace{0.5cm}\textit{Case} 1.2: $end(Q^z_1)\nsim start(Q^z_2)$. Since $N(z)-\{w, x\}$ forms a clique, $x\in \{end(Q^z_1), start(Q^z_2)\}$. Then, $z\rightarrow x\rightarrow y$ is a subpath of $\tilde{Q}'$. Let $\tilde{Q}'=Q^x_1\Rightarrow x\Rightarrow Q^x_2$. Then, $\{end(Q^x_1), start(Q^x_2)\}=\{y, z\}$. Thus, $Q^x = Q^x_1\Rightarrow Q^x_2$ is a Hamiltonian $(s, t)$-path of $R(m, n)-\{w, x\}$, where edge $(y, z)$ is visited by $Q^x$. Let $P'=w \rightarrow x$. Then, there exist one edge $(y, z)\in Q^x$ such that $start(P')\thicksim z$ and $end(P')\thicksim y$. By Statement (4) of Proposition \ref{Pro_Obs}, $Q^x$ and $P'$ can be combined into a Hamiltonian $(s, t)$-path $Q$ of $R(m, n)$ such that edge $(w, z)\in Q$. The construction of such a Hamiltonian $(s, t)$-path is shown in Fig. \ref{Fig_HP-rectangular-wz_edge}(b).

\begin{figure}[!t]
\begin{center}
\includegraphics[scale=1.0]{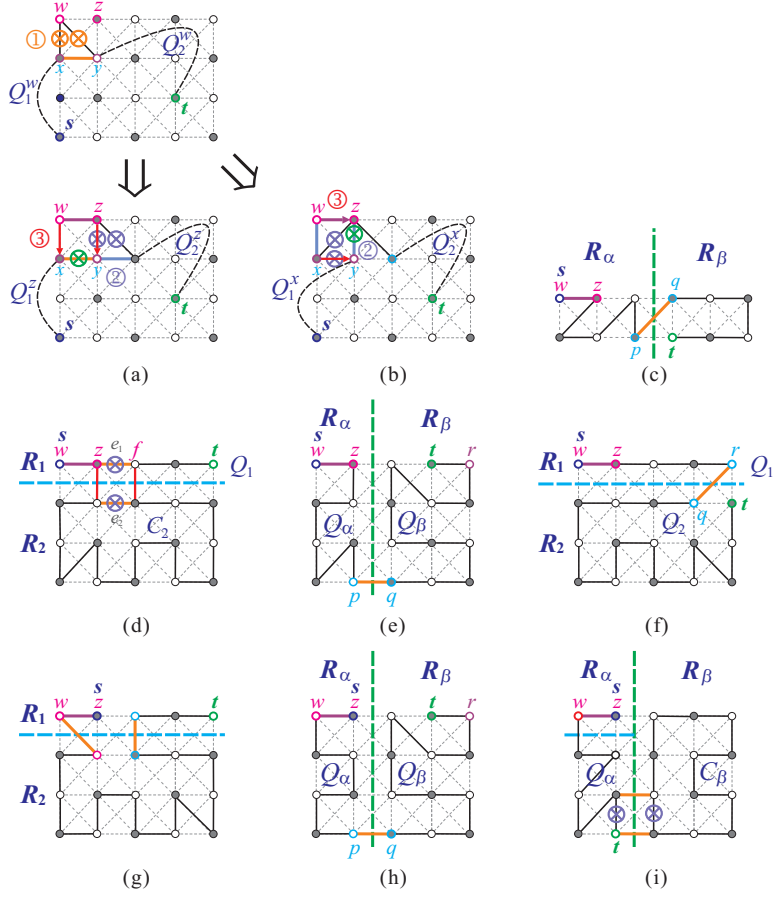}
\caption{The construction of Hamiltonian $(s, t)$-path of rectangular supergrid graph containing edge $(w, z)$ for (a)--(b) $s, t\not\in \{w, z\}$, (c) $s=w$ and $n=2$, (d)--(f) $s=w$ and $n\geqslant 3$, and (g)--(i) $s=z$, $m\geqslant 4$, and $n\geqslant 3$, where bold dashed lines indicate the subpaths of the constructed Hamiltonian $(s, t)$-path, solid (arrow) lines indicate the edges in the constructed Hamiltonian path, and $\otimes$ represents the destruction of an edge while constructing such a Hamiltonian path.} \label{Fig_HP-rectangular-wz_edge}
\end{center}
\end{figure}

\textit{Case} 2: $s=w$ or $t=w$. Without loss of generality, assume that $s=w$. First, consider that $n=2$. Then, $R(m, n)$ is a 2-rectangle. By assumption of the lemma, $(R(m, n), s, t)$ does not satisfy condition (F2), and, hence, $t\not\in \{(1, 2), (2, 2), (2, 1)\}$. If $t=(1, 2)$, then a Hamiltonian $(s, t)$-path $Q$ of $R(m, n)$ can be easily constructed by visiting each boundary edge of $R(m, n)$ except boundary edge $(s, t)$, and, hence, $(w, z)\in Q$. Let $t=(t_x, t_y)$ satisfy that $t_x\geqslant 3$. We first make a vertical separation on $R(m, n)$ to obtain two disjoint parts $R_\alpha$ and $R_\beta$, as depicted in Fig. \ref{Fig_HP-rectangular-wz_edge}(c). Let $p=(t_x-1, 2)\in R_\alpha$ and $q=(t_x, t_y-1)$ or $(t_x, t_y+1)$ in $R_\beta$, where $q\neq t$ and $q_x=t_x$. Then, $p\thicksim q$ and we can easily construct Hamiltonian $(s, p)$-path $Q_\alpha$ and $(q, t)$-path $Q_\beta$ of $R_\alpha$ and $R_\beta$, respectively, such that edge $(w, z)\in Q_\alpha$. Thus, $Q=Q_\alpha\Rightarrow Q_\beta$ forms a Hamiltonian $(s, t)$-path of $R(m, n)$ with  $(w, z)\in Q$. The construction of such a Hamiltonian $(s, t)$-path is depicted in Fig. \ref{Fig_HP-rectangular-wz_edge}(c). Next, consider that $n\geqslant 3$. Let $t=(t_x, t_y)$. Depending on the location of $t$, we have the following subcases:

    \hspace{0.5cm}\textit{Case} 2.1: $t_y=1$ and $t_x=m$. In this subcase, $t$ is located at the up-right corner of $R(m, n)$. We first make a horizontal separation on $R(m, n)$ to obtain two disjoint parts $R_1=R(m, 1)$ and $R_2=R(m, n-1)$, as shown in Fig. \ref{Fig_HP-rectangular-wz_edge}(d). Note that $m\geqslant 3$ and $n-1\geqslant 2$. By visiting all boundary edges of $R_1$ from $s$ to $t$, we get a Hamiltonian $(s, t)$-path $Q_1$ of $R_1$ with edge $(w, z)\in Q_1$. By Lemma \ref{HC-rectangular_supergrid_graphs}, we can construct a canonical Hamiltonian cycle $C_2$ of $R_2$ such that its one flat face is placed to face $R_1$. Then, there exist two edges $e_1 (=(z, f))\in Q_1$ and $e_2\in C_2$ such that $e_1\thickapprox  e_2$, where $z=(2, 1)$ and $f=(3, 1)$. By Statement (2) of Proposition \ref{Pro_Obs}, $P_1$ and $C_2$ can be merged into a Hamiltonian $(s, t)$-path $Q$ of $R(m, n)$ such that edge $(w, z)\in Q$. The construction of such a Hamiltonian $(s, t)$-path is shown in Fig. \ref{Fig_HP-rectangular-wz_edge}(d).

    \hspace{0.5cm}\textit{Case} 2.2: $t_y=1$ and $t_x<m$. Let $r=(m, 1)$ be the up-right corner of $R(m, n)$. Then, $z_x < t_x < r_x$, i.e., $2 < t_x < m$, and, hence, $m\geqslant 4$. We first make a vertical separation on $R(m, n)$ to get two disjoint parts $R_\alpha=R(2, n)$ and $R_\beta=R(m-2, n)$, as depicted in Fig. \ref{Fig_HP-rectangular-wz_edge}(e), where $n\geqslant 3$ and $m-2\geqslant 2$. Let $p=(2, n)$ be the down-right corner of $R_\alpha$ and let $q=(3, n)$ be the down-left corner of $R_\beta$. Then, $p\thicksim q$ and, $(R_\alpha, s, p)$ and $(R_\beta, q, t)$ do not satisfy condition (F1). Since $R_\alpha$ is a 2-rectangle, we can easily construct a a Hamiltonian $(s, p)$-path $Q_\alpha$ of $R_\alpha$ such that edge $(w, z)\in Q_\alpha$, as shown in Fig. \ref{Fig_HP-rectangular-wz_edge}(e). By Lemma \ref{HamiltonianConnected-Rectangular}, there exists a Hamiltonian $(q, t)$-path $Q_\beta$ of $R_\beta$. Then, $Q = Q_\alpha\Rightarrow Q_\beta$ forms a Hamiltonian $(s, t)$-path of $R(m, n)$ such that edge $(w, z)\in Q$. Such a Hamiltonian $(s, t)$-path is depicted in Fig. \ref{Fig_HP-rectangular-wz_edge}(e).

    \hspace{0.5cm}\textit{Case} 2.3: $t_y>1$. In this subcase, we first make a horizontal separation on $R(m, n)$ to obtain two disjoint parts $R_1=R(m, 1)$ and $R_2=R(m, n-1)$, as shown in Fig. \ref{Fig_HP-rectangular-wz_edge}(f), where $m\geqslant 3$ and $n-1\geqslant 2$. Let $r = (m, 1)$, then $r\in R_1$. Let $q = (m, 2)$ if $t\neq (m, 2)$; otherwise $q = (m-1, 2)$. A simple check shows that $(R_2, q, t)$ does not satisfy condition (F1). By visiting every vertex of $R_1$ from $s$ to $r$, we get a Hamiltonian $(s, t)$-path $Q_1$ of $R_1$ with edge $(w, z)\in Q_1$. By Lemma \ref{HamiltonianConnected-Rectangular}, there exists a Hamiltonian $(q, t)$-path $Q_2$ of $R_2$. Then, $Q = Q_1\Rightarrow Q_2$ forms a Hamiltonian $(s, t)$-path of $R(m, n)$ such that edge $(w, z)\in Q$. The constructed  Hamiltonian $(s, t)$-path in this subcase can be found in Fig. \ref{Fig_HP-rectangular-wz_edge}(f).

\textit{Case} 3: $s=z$ or $t=z$. By symmetry, assume that $s=z$. Then, $t$ may be to the left of $s$, i.e., $t_x<s_x$. When $n=2$, a Hamiltonian $(s, t)$-path $Q$ of $R(m, n)$ with $(w, z)\in Q$ can be constructed by similar arguments in Fig. \ref{Fig_HP-rectangular-wz_edge}(c). By Lemma \ref{HamiltonianConnected-Rectangular-wz_3-rectangle}, the desired Hamiltonian $(s, t)$-path of $R(m, n)$ can be constructed if $m=3$. In the following, suppose that $m\geqslant 4$ and $n\geqslant 3$. We then make a horizontal separation on $R(m, n)$ to obtain two disjoint parts $R_1=R(m, 1)$ and $R_2=R(m, n-1)$, as shown in Fig. \ref{Fig_HP-rectangular-wz_edge}(g), where $m\geqslant 4$ and $n-1\geqslant 2$. Then, $s\in R_1$. Depending on whether $t\in R_1$, we consider the following subcases:

    \hspace{0.5cm}\textit{Case} 3.1: $t\in R_1$. In this subcase, a Hamiltonian $(s, t)$-path $Q$ of $R(m, n)$ with $(w, z)\in Q$ can be constructed by similar arguments in proving Case 2.1 and Case 2.2. Figs. \ref{Fig_HP-rectangular-wz_edge}(g)--(h) show such constructions of the desired Hamiltonian $(s, t)$-paths of $R(m, n)$.

    \hspace{0.5cm}\textit{Case} 3.2: $t\in R_2$. We make a vertical separation on $R(m, n)$ to obtain two disjoint parts $R_\alpha=R(2, n)$ and $R_\beta=R(m-2, n)$, where $m-2\geqslant 2$ and $n\geqslant 3$, as depicted in Fig. \ref{Fig_HP-rectangular-wz_edge}(i). Suppose that $t\in R_\alpha$. By similar technique in Fig. \ref{Fig_HP-rectangular-wz_edge}(c) and Lemma \ref{HamiltonianConnected-Rectangular}, we can easily construct a Hamiltonian $(s, t)$-path $Q_\alpha$ of $R_\alpha$ such that $(w, z)\in Q_\alpha$ and $Q_\alpha$ contains one boundary edge $e_\alpha$ that is placed to face $R_\beta$, as depicted in Fig. \ref{Fig_HP-rectangular-wz_edge}(i). By Lemma \ref{HC-rectangular_supergrid_graphs}, there exists a canonical Hamiltonian cycle $C_\beta$ of $R_\beta$ such that its one flat face is placed to face $R_\alpha$. Then, there exist two edges $e_\alpha\in Q_\alpha$ and $e_\beta\in C_\beta$ such that $e_\alpha\thickapprox  e_\beta$. By Statement (2) of Proposition \ref{Pro_Obs}, $Q_\alpha$ and $C_\beta$ can be combined into a Hamiltonian $(s, t)$-path $Q$ of $R(m, n)$ with edge $(w, z)\in Q$. The construction of such a Hamiltonian $(s, t)$-path is shown in Fig. \ref{Fig_HP-rectangular-wz_edge}(i). On the other hand, suppose that $t\in R_\beta$. Let $p\in R_\alpha$ and $q\in R_\beta$ such that $p\thicksim q$ and, $(R_\alpha, s, p)$ and $(R_\beta, q, t)$ do not satisfy condition (F1). By Lemma \ref{HamiltonianConnected-Rectangular}, there exist Hamiltonian $(s, p)$-path $Q_\alpha$ and Hamiltonian $(q, t)$-path $Q_\beta$ of $R_\alpha$ and $R_\beta$, respectively. Since $R_\alpha$ is a 2-rectangle, we can easily construct $Q_\alpha$ to satisfy $(w, z)\in Q_\alpha$. Then, $Q=Q_\alpha\Rightarrow Q_\beta$ forms a Hamiltonian $(s, t)$-path of $R(m, n)$ with edge $(w, z)\in Q$.

We have considered any case to construct a Hamiltonian $(s, t)$-path $Q$ of $R(m, n)$ with $(w, z)\in Q$. Thus the lemma holds true.
\end{proof}

\section{The Hamiltonian and Hamiltonian connected properties of $L$-shaped supergrid graphs}\label{Sec_L-shaped-supergrid}
In this section, we will verify the Hamiltonicity and Hamiltonian connectivity of $L$-shaped supergrid graphs. Let $L(m,n; k,l)$ be a $L$-shaped supergrid graph. We will make a vertical or horizontal separation on $L(m,n; k,l)$ to obtain two disjoint rectangular supergrid graphs. For an example, the bold dashed vertical (resp., horizontal) line in Fig. \ref{Fig_HamiltonianCycle}(a) indicates a vertical (resp., horizontal) separation on $L(10,11; 7,9)$ that is to partition it into $R(3, 11)$ and $R(7, 2)$ (resp., $R(3, 9)$ and $R(10, 2)$). The following two subsections will prove the Hamiltonicity and Hamiltonian connectivity of $L(m,n; k,l)$.

\begin{figure}[!t]
\begin{center}
\includegraphics[width=0.92\textwidth]{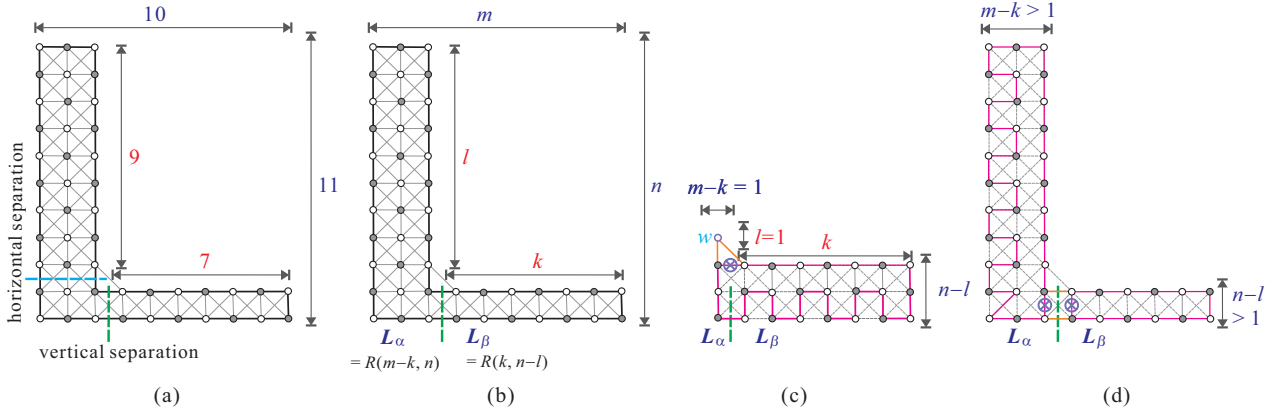}
\caption{(a) Separations of $L(10,11; 7,9)$, (b) a vertical separation on $L(m,n; k,l)$ to obtain $L_\alpha=R(m-k, n)$ and $L_\beta=R(k, l)$, (c) a Hamiltonian cycle of $L(m,n; k,l)$ when $m-k=1$ and $n-l\geqslant 2$, and (d) a Hamiltonian cycle of $L(m,n; k,l)$ when $m-k\geqslant 2$, $n-l\geqslant 2$, and $k\geqslant 2$, where the bold dashed vertical (resp., horizontal) line in (a) indicates a vertical (resp., horizontal) separation of $L(10,11; 7,9)$, and $\otimes$ represents the destruction of an edge while constructing a Hamiltonian cycle of $L(m,n; k,l)$.} \label{Fig_HamiltonianCycle}
\end{center}
\end{figure}

\subsection{The Hamiltonian property of $L$-shaped supergrid graphs}
In this subsection, we will prove the Hamiltonicity of $L$-shaped supergrid graphs. Obviously, $L(m,n; k,l)$ contains no Hamiltonian cycle if there exists a vertex $w$ in $L(m,n; k,l)$ such that $deg(w)=1$. Thus, $L(m,n; k,l)$ is not Hamiltonian when the following condition is satisfied.

\begin{description}
  \item[(F3)] there exists a vertex $w$ in $L(m,n; k,l)$ such that $deg(w)=1$.
\end{description}

When the above condition is satisfied, we get that ($m-k=1$ and $l>1$) or ($n-l=1$ and $k>1$). We then show the Hamiltonicity of $L$-shaped supergrid graphs as follows.

\begin{thm}\label{HC-Theorem}
Let $L(m,n; k,l)$ be a $L$-shaped supergrid graph. Then, $L(m,n; k,l)$ contains a Hamiltonian cycle if and only if it does not satisfy condition \emph{(F3)}.
\end{thm}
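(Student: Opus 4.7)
The plan is to split the argument into the (trivial) necessary direction and a case analysis for sufficiency. For necessity, a Hamiltonian cycle forces every vertex to have graph-degree at least two, while (F3) explicitly produces a degree-one vertex: concretely $(1,1)$ when $m-k=1$ and $l\geqslant 2$, and $(m,n)$ when $n-l=1$ and $k\geqslant 2$. These are the only two ways a leaf can arise in $L(m,n;k,l)$, so the equivalence of (F3) with ``some vertex has degree~$1$'' can be recorded up front.

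For sufficiency the generic regime is $m-k\geqslant 2$, $n-l\geqslant 2$, and at least one of $k\geqslant 2, l\geqslant 2$. In this regime I would perform a vertical separation of $L(m,n;k,l)$ into $R(m-k,n)$ and $R(k,n-l)$ when $k\geqslant 2$, or a horizontal separation into $R(m-k,l)$ and $R(m,n-l)$ when $l\geqslant 2$ (cf.\ Fig.~\ref{Fig_HamiltonianCycle}(b)); in either case both pieces are rectangular supergrid graphs with both dimensions at least~$2$. By Lemma~\ref{HC-rectangular_supergrid_graphs} each admits a canonical Hamiltonian cycle, and I can orient the two cycles so that a flat face of each faces the separation line. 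Two parallel edges then cross the cut and Statement~(1) of Proposition~\ref{Pro_Obs} fuses the cycles into a single Hamiltonian cycle of $L$, matching Fig.~\ref{Fig_HamiltonianCycle}(d).

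When exactly one of $m-k,n-l$ equals $1$, (F3) forces the partner parameter ($l$ or $k$) to also equal $1$, and $L$ reduces to a full rectangle with a single extra corner vertex attached, as in Fig.~\ref{Fig_HamiltonianCycle}(c). In the case $m-k=1,l=1$ the rectangle is the sub-rectangle $R'$ on rows $2$ to $n$ (isomorphic to $R(m,n-1)$ with $m\geqslant 2$ and $n-1\geqslant 2$) and the extra vertex is $(1,1)$; I take a canonical Hamiltonian cycle of $R'$ whose flat face lies along row~$2$, which forces the edge $((1,2),(2,2))$ to be present, and since $(1,1)$ is adjacent to both $(1,2)$ and $(2,2)$, Statement~(3) of Proposition~\ref{Pro_Obs} splices $(1,1)$ into the cycle. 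The mirror case $n-l=1,k=1$ is symmetric (insert $(m,n)$ into a canonical cycle of $R(m-1,n)$ whose flat face lies along column $m-1$), and the degenerate instance $m=n=2,k=l=1$ is a triangle handled directly.

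The step I expect to be the main obstacle is the residual subcase $m-k\geqslant 2, n-l\geqslant 2$ with $k=l=1$: any vertical or horizontal separation now leaves a $1$-rectangle such as $R(1,n-1)$ or $R(m-1,1)$ that carries no Hamiltonian cycle, so Proposition~\ref{Pro_Obs}(1) is unavailable. My plan here is to replace one of the two cycles by a Hamiltonian $(s,t)$-path. Since $m-1\geqslant 2$ and $n\geqslant 3$ in this subcase, the pair $s=(m-1,2)$, $t=(m-1,n)$ in $R(m-1,n)$ fails condition~(F1), and Lemma~\ref{HamiltonianConnected-Rectangular} supplies a Hamiltonian $(s,t)$-path $P_1$ covering $R(m-1,n)$. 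Concatenating $P_1$ with the column-$m$ walk $(m-1,n)\to(m,n)\to(m,n-1)\to\cdots\to(m,2)\to(m-1,2)$ then closes a Hamiltonian cycle of $L(m,n;1,1)$, since every inserted edge lies in $S^{\infty}$ and the column-$m$ walk visits none of the interior vertices of $P_1$.
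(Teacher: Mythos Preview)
Your proposal is correct; the case split and the handling of the thin instances ($m-k=1$ or $n-l=1$) and of the generic two-cycle merge via Proposition~\ref{Pro_Obs}(1) coincide with the paper's proof.  The one genuine methodological difference is the subcase $m-k\geqslant 2$, $n-l\geqslant 2$, $k=l=1$.  The paper keeps the vertical separation $L_\alpha=R(m-1,n)$, $L_\beta=R(1,n-1)$, builds a canonical Hamiltonian cycle of $L_\alpha$ with a flat face on column $m-1$ (Lemma~\ref{HC-rectangular_supergrid_graphs}), and then absorbs the $1$-rectangle $L_\beta$ one vertex at a time by iterating Proposition~\ref{Pro_Obs}(3); thus its argument for Theorem~\ref{HC-Theorem} rests only on Lemma~\ref{HC-rectangular_supergrid_graphs}.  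You instead invoke the stronger Lemma~\ref{HamiltonianConnected-Rectangular} to obtain a Hamiltonian $((m-1,2),(m-1,n))$-path in $R(m-1,n)$ and close it through column~$m$; this is equally valid (the pair $((m-1,2),(m-1,n))$ indeed avoids (F1) even when $m-1=2$, since then $s_y=2\neq n=t_y$), trades the iterated insertion for a single concatenation, but imports Hamiltonian connectivity where the paper needs only Hamiltonicity.  A minor organizational point: the instance $m-k=n-l=1$ falls outside your header ``exactly one of $m-k,n-l$ equals~$1$'', but you do dispose of it as the triangle $m=n=2$, $k=l=1$, so no case is actually missing.
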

\begin{proof}
Obviously, $L(m,n; k,l)$ contains no Hamiltonian cycle if it satisfies condition (F3). In the following, we will prove that $L(m,n; k,l)$ contains a Hamiltonian cycle if it does not satisfy condition (F3). Assume that $L(m,n; k,l)$ does not satisfy condition (F3). We prove it by constructing a Hamiltonian cycle of $L(m,n; k,l)$. First, we make a vertical separation on $L(m,n; k,l)$ to obtain two disjoint rectangular supergrid subgraphs $L_\alpha=R(m-k, n)$ and $L_\beta=R(k, n-l)$, as depicted in Fig. \ref{Fig_HamiltonianCycle}(b). Depending on the sizes of $L_\alpha$ and $L_\beta$, we consider the following cases:

\textit{Case} 1: $m-k=1$ or $n-l=1$. By symmetry, assume that $m-k=1$. Since there exists no vertex $w$ in $L(m, n; k, l)$ such that $deg(w)=1$, we get that $l=1$ (see Fig. \ref{Fig_HamiltonianCycle}(c)). Consider that $n-l=1$. Then, $k=1$. Thus, $L(m,n; k,l)$ consists of only three vertices which forms a cycle. On the other hand, consider that $n-l\geqslant 2$. Let $w$ be a vertex of $L_\alpha$ with $deg(w)=2$, $L^*_\alpha=L_\alpha-\{w\}$, and let $L^*=L^*_\alpha\cup L_\beta$. Then, $L^* = R(k+1, n-l) = R(m, n-1)$, where $k+1\geqslant 2$ and $n-l\geqslant 2$. By Lemma \ref{HC-rectangular_supergrid_graphs}, $L^*$ contains a canonical Hamiltonian cycle $HC^*$. We can place one flat face of $HC^*$ to face $w$. Thus, there exists an edge $(u, v)$ in $HC^*$ such that $w\thicksim u$ and $w\thicksim v$. By Statement (3) of Proposition \ref{Pro_Obs}, $w$ and $HC^*$ can be combined into a Hamiltonian cycle of $L(m,n; k,l)$. For example, Fig. \ref{Fig_HamiltonianCycle}(c) depicts a such construction of Hamiltonian cycle of $L(m,n; k,l)$ when  $m-k=1$ and $n-l\geqslant 2$. Thus, $L(m,n; k,l)$ is Hamiltonian if $m-k=1$ or $n-l=1$.

\textit{Case} 2: $m-k\geqslant 2$ and $n-l\geqslant 2$. In this case, $L_\alpha=R(m-k, n)$ and $L_\beta=R(k, n-l)$ satisfy that $m-k\geqslant 2$ and $n-l\geqslant 2$. Since $n-l\geqslant 2$ and $l\geqslant 1$, we get that $n\geqslant l+2\geqslant 3$. Thus, $L_\alpha=R(m-k, n)$ satisfies that $m-k\geqslant 2$ and $n\geqslant 3$. By Lemma \ref{HC-rectangular_supergrid_graphs}, $L_\alpha$ contains a canonical Hamiltonian cycle $HC_\alpha$ whose one flat face is placed to face $L_\beta$. Consider that $k=1$. Then, $L_\beta=R(k, n-l)$ is a 1-rectangle. Let $V(L_\beta)=\{v_1, v_2, \cdots, v_{n-l}\}$, where $v_{{i+1}_y} = v_{i_y}+1$ for $1\leqslant i\leqslant n-l-1$. Since $HC_\alpha$ contains a flat face that is placed to face $L_\beta$, there exists an edge $(u, v)$ in $HC_\alpha$ such that $u\thicksim v_1$ and $v\thicksim v_1$. By Statement (3) of Proposition \ref{Pro_Obs}, $v_1$ and $HC_\alpha$ can be combined into a cycle $HC^1_\alpha$. By the same argument, $v_2$, $v_3$, $\cdots$, $v_{n-l}$ can be merged into the cycle to form a Hamiltonian cycle of $L(m,n; k,l)$. On the other hand, consider that $k\geqslant 2$. Then, $L_\beta=R(k, n-l)$ satisfies that $k\geqslant 2$ and $n-l\geqslant 2$. By Lemma \ref{HC-rectangular_supergrid_graphs}, $L_\beta$ contains a canonical Hamiltonian cycle $HC_\beta$ such that its one flat face is placed to face $L_\alpha$. Then, there exist two edges $e_1=(u_1, v_1)\in HC_\alpha$ and $e_2=(u_2, v_2)\in HC_\beta$ such that $e_1\thickapprox e_2$. By Proposition Statement (1) of \ref{Pro_Obs}, $HC_\alpha$ and $HC_\beta$ can be combined into a Hamiltonian cycle of $L(m,n; k,l)$. For instance, Fig. \ref{Fig_HamiltonianCycle}(d) shows a Hamiltonian cycle of $L(m,n; k,l)$ when $m-k\geqslant 2$, $n-l\geqslant 2$, and $k\geqslant 2$. Thus, $L(m,n; k,l)$ contains a Hamiltonian cycle in this case.

It immediately follows from the above cases that $L(m,n; k,l)$ contains a Hamiltonian cycle if it does not satisfy condition (F3). Thus, the theorem holds true.
\end{proof}

\subsection{The Hamiltonian connected property of $L$-shaped supergrid graphs}
In this subsection, we will verify the Hamiltonian connectivity of $L$-shaped supergrid graphs. In addition to condition (F1) (as depicted in  Fig. \ref{Fig_ForbiddenConditionF1F4F5}(a) and Fig. \ref{Fig_ForbiddenConditionF1F4F5}(b)), whenever one of the following conditions is satisfied then $HP(L(m,n; k,l), s, t)$ does not exist.

\begin{description}
  \item[(F4)] there exists a vertex $w$ in $L(m,n; k,l)$ such that $deg(w)=1$, $w\neq s$, and $w\neq t$ (see Fig. \ref{Fig_ForbiddenConditionF1F4F5}(c)).
\end{description}

\begin{description}
  \item[(F5)] $m-k=1$, $n-l=2$, $l=1$, $k\geqslant 2$, and $\{s, t\}=\{(1, 2), (2, 3)\}$ or $\{(1, 3), (2, 2)\}$ (see Fig. \ref{Fig_ForbiddenConditionF1F4F5}(d)).
\end{description}

\begin{figure}[!t]
\begin{center}
\includegraphics[scale=0.95]{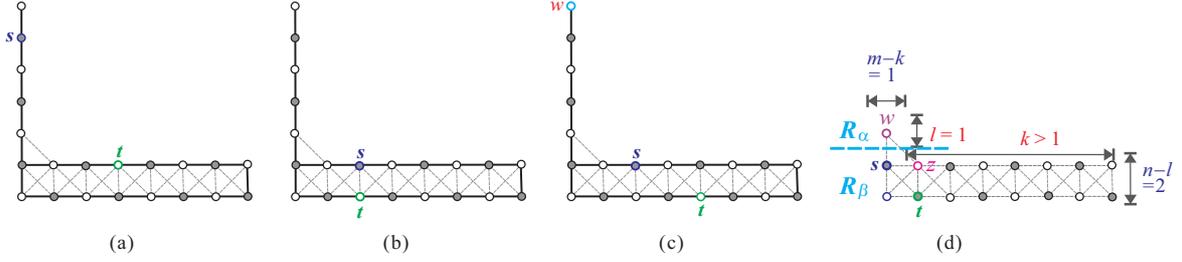}
\caption{$L$-shaped supergrid graph in which there exists no Hamiltonian $(s, t)$-path for (a) $s$ is a cut vertex, (b) $\{s, t\}$ is a vertex cut, (c) there exists a vertex $w$ such that $deg(w)=1$, $w\neq s$, and $w\neq t$, and (d) $m-k=1$, $n-l=2$, $l=1$, $k\geqslant 2$, $s=(1, 2)$, and $t=(2, 3)$.} \label{Fig_ForbiddenConditionF1F4F5}
\end{center}
\end{figure}

The following lemma shows the necessary condition for that $HP(L(m,n; k,l), s, t)$ does exist.

\begin{lem}\label{Necessary-condition-Lshaped}
Let $L(m,n; k,l)$ be a $L$-shaped supergrid graph with two vertices $s$ and $t$. If $HP(L(m,n; k,l), s, t)$ does exist, then $(L(m,n; k,l), s, t)$ does not satisfy conditions \emph{(F1)}, \emph{(F4)}, and \emph{(F5)}.
\end{lem}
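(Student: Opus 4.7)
My proof plan is to establish the contrapositive: if $(L(m,n;k,l), s, t)$ satisfies any of the conditions (F1), (F4), or (F5), then no Hamiltonian $(s,t)$-path exists. I would handle each of the three conditions independently, since they reflect different structural obstructions.

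For condition (F1), I would use the classical vertex-cut argument. Suppose first that $s$ is a cut vertex of $L(m,n;k,l)$ (the argument for $t$ being a cut vertex is symmetric). Any Hamiltonian $(s,t)$-path, after deleting its starting vertex $s$, becomes a path in $L(m,n;k,l)-s$ that visits every remaining vertex. But a path is connected, so all remaining vertices lie in a single connected component, contradicting the assumption that $L(m,n;k,l)-s$ is disconnected. If instead $\{s,t\}$ is a vertex cut, then deleting both endpoints of a Hamiltonian $(s,t)$-path yields a single connected path covering every vertex of $L(m,n;k,l)-\{s,t\}$, again contradicting disconnectedness.

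For condition (F4), the obstruction is even more immediate. Let $w$ be a vertex with $deg(w)=1$ and $w \notin \{s,t\}$. Then $w$ must be an internal vertex of any Hamiltonian $(s,t)$-path, so the path must use two distinct edges incident with $w$. Since only one such edge exists, no Hamiltonian $(s,t)$-path can exist.

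For condition (F5), where $m-k=1$, $n-l=2$, $l=1$, $k\geq 2$, I would exploit the degree-$2$ corner vertex $w=(1,1)$, whose only neighbors in $L(m,n;k,l)$ are $(1,2)$ and $(2,2)$. I treat the two subcases $\{s,t\}=\{(1,2),(2,3)\}$ and $\{s,t\}=\{(1,3),(2,2)\}$ separately. In each subcase, since $w$ is an internal vertex of any purported Hamiltonian $(s,t)$-path $P$, both edges $(w,(1,2))$ and $(w,(2,2))$ must lie on $P$; this forces a rigid local structure near the elbow of the L. I then focus on the degree-$3$ vertex $(1,3)$ (respectively $(1,2)$): its only neighbors in $L(m,n;k,l)$ are $(1,2)$, $(2,2)$, and $(2,3)$, and the forced edges at $w$ together with the identifications of $s$ and $t$ consume two of these neighbors in incompatible roles, leaving no way to both enter and exit that vertex as required by $P$. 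A short case analysis of the first edge emanating from $s$ (restricted to the at most three neighbors of $s$ in $L(m,n;k,l)$) exhausts all possibilities and yields a contradiction; the condition $k\geq 2$ is used to ensure that the resulting forced short subpath cannot be the entire Hamiltonian path.

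The main obstacle is (F5), since (F1) and (F4) are handled by one-line connectivity or degree arguments, whereas (F5) requires propagating the consequences of the degree-$2$ vertex $(1,1)$ through the narrow leftmost column of only three vertices. Keeping the two sub-assignments of $\{s,t\}$ and the forced edges bookkept correctly, and verifying that the leftover vertices in columns $3,\ldots,m$ of rows $2,3$ cannot be inserted without violating the forced ending near the elbow, is the delicate part of the proof.
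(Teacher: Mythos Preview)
Your treatment of (F1) and (F4) is exactly the paper's: the paper simply points to the cut-vertex/vertex-cut argument (Lemma~\ref{HP-Forbidden-Rectangular}) for (F1) and says (F4) is ``easy to see,'' which is precisely your degree argument.

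For (F5) your argument is correct but differs from the paper's. The paper separates off the single vertex $w=(1,1)$ and observes that absorbing $w$ into a Hamiltonian $(s,t)$-path of the remaining $2$-rectangle $R_\beta=R(m,2)$ would require that path to contain the edge $((1,2),(2,2))$; it then appeals (implicitly) to the forbidden condition~(F2) for $R(m,2)$ to rule this out. Your route instead stays inside $L(m,n;k,l)$ and propagates forced edges from the degree-$2$ vertex $w$ through the degree-$3$ corner $(1,3)$ (or $(1,2)$), arriving at a short explicit path of length $4$ or $5$ that cannot be Hamiltonian once $k\geqslant 2$. Both arguments hinge on the same forcing at $w$; the paper's version is more modular (it reuses the rectangular theory), while yours is self-contained and avoids invoking~(F2), at the cost of a small case split.
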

\begin{proof}
Assume that $(L(m,n; k,l), s, t)$ satisfies one of conditions (F1), F(4, and (F5). For condition (F1), the proof is the same as that of Lemma \ref{HP-Forbidden-Rectangular}. For condition (F4), it is easy to see that $HP(L(m,n; k,l), s, t)$ does not exist (see Fig. \ref{Fig_ForbiddenConditionF1F4F5}(c)). For (F5), we make a horizontal separation on it to obtain two disjoint rectangular supergrid subgraphs $R_\alpha = R(m-k, l)$ and $R_\beta = R(m, n-l)$, as depicted in Fig. \ref{Fig_ForbiddenConditionF1F4F5}(d). Suppose that $m-k = 1$, $n-l = 2$, $l = 1$, and $k \geqslant 2$. Then, $R_\alpha$ contains only one vertex $w$. Let $s = (1, 2)$, $t = (2, 3)$, and $z = (2, 2)$. Then, there exists no Hamiltonian $(s, t)$-path of $R_\alpha$ such that it contains edge $(s, z)$. Thus, $w$ can not be combined into the Hamiltonian $(s, t)$-path of $R_\alpha$ and hence $HP(L(m,n; k,l), s, t)$ does not exist.
\end{proof}

We then prove that $HP(L(m,n; k,l), s, t)$ does exist when $(L(m,n; k,l), s, t)$ does not satisfy conditions (F1), (F4), and (F5). First, we consider the case that $m-k = 1$ or $n - l = 1$ in the following lemma.

\begin{lem}\label{HP-SmallSize}
Let $L(m,n; k,l)$ be a $L$-shaped supergrid graph, and let $s$ and $t$ be its two distinct vertices such that $(L(m,n; k,l), s, t)$ does not satisfy conditions \emph{(F1)},  \emph{(F4)}, and \emph{(F5)}. Assume that $m-k=1$ or $n-l=1$. Then, $L(m,n; k,l)$ contains a Hamiltonian $(s, t)$-path, i.e., $HP(L(m,n; k,l), s, t)$ does exist if $m-k=1$ or $n-l=1$.
\end{lem}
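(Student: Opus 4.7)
The plan is to reduce the problem to the Hamiltonian connectivity of rectangular supergrid graphs established in Lemma~\ref{HamiltonianConnected-Rectangular} and Lemma~\ref{HamiltonianConnected-Rectangular-wz}. Assume without loss of generality that $m - k = 1$; the case $n - l = 1$ is handled by an entirely analogous construction with the roles of rows and columns interchanged. I would split on whether $l = 1$.

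Suppose first that $l = 1$. Then $L$ consists of the rectangular subgraph $R' = L - v$, where $v = (1, 1)$, together with the pendant vertex $v$ whose only neighbors are $w' = (1, 2)$ and $z' = (2, 2)$; under the natural coordinate shift, $R' \cong R(m, n - 1)$ and the pair $(w', z')$ plays the role of the edge $(w, z)$ in Lemma~\ref{HamiltonianConnected-Rectangular-wz}. If neither $s$ nor $t$ equals $v$, I would invoke Lemma~\ref{HamiltonianConnected-Rectangular-wz} on $(R', s, t)$ to obtain a Hamiltonian $(s, t)$-path of $R'$ containing the edge $(w', z')$, and then splice $v$ into this edge by Proposition~\ref{Pro_Obs}(3). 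If instead $s = v$ (the case $t = v$ being symmetric), the first edge of any Hamiltonian path is forced to be $v \to w'$ or $v \to z'$; I would pick the choice $s' \in \{w', z'\}$ for which $(R', s', t)$ avoids condition (F1) on $R'$, apply Lemma~\ref{HamiltonianConnected-Rectangular} to get a Hamiltonian $(s', t)$-path of $R'$, and prepend the edge $v \to s'$.

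Suppose now $l \geq 2$. The vertices $(1, 1), (1, 2), \ldots, (1, l - 1)$ form a forced chain: $(1, 1)$ has degree $1$ (its only neighbor is $(1, 2)$) and each $(1, j)$ with $2 \leq j \leq l - 1$ has degree $2$. Since (F4) is not satisfied, $(1, 1)$ must equal $s$ or $t$; say $s = (1, 1)$. The chain structure also forces $t \notin \{(1, 2), \ldots, (1, l - 1)\}$, hence any Hamiltonian $(s, t)$-path must begin with the forced prefix $(1, 1) \to (1, 2) \to \cdots \to (1, l)$. Deleting the interior of the chain and treating $(1, l)$ as the new pendant vertex reduces the problem to the $l = 1$ subcase applied to $L(m, n - l + 1; m - 1, 1)$, so the result follows by the argument above.

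The main obstacle is verifying, for the $l = 1$ subcase, that the hypothesis on $L$ excludes exactly those configurations that would trigger the rectangular conditions (F1) or (F2) on $R'$. Under the coordinate shift $R' \cong R(m, n - 1)$, the (F2) configuration $\{s, t\} = \{w, z\}$ in $R'$ pulls back to $\{s, t\} = \{(1, 2), (2, 2)\}$ in $L$, which is precisely the pair that isolates $v$ in $L$ and is therefore excluded by (F1) on $L$ (where $\{s, t\}$ forms a vertex cut). The remaining (F2) exceptions, which arise only when $n = 3$, match exactly the exclusions listed in (F5). Finally, the small rectangular (F1) cases on $R'$ when $n - 1 \leq 2$ reduce to finitely many configurations, each of which is either handled by choosing the appropriate $s' \in \{w', z'\}$ in the $s = v$ branch or is itself excluded by (F1), (F4), or (F5) on $L$.
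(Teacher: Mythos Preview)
Your proposal is correct and follows essentially the same strategy as the paper: both decompose $L$ via the horizontal separation at row $l$ and rely on Lemma~\ref{HamiltonianConnected-Rectangular-wz} (together with Lemma~\ref{HamiltonianConnected-Rectangular}) to splice the column stub into a Hamiltonian path of the remaining rectangle. The paper organizes the argument by whether $s$ and $t$ lie on the same side of the separation (its Case~1 vs.\ Case~2), whereas you first reduce $l\geq 2$ to $l=1$ via the forced chain and then split on whether $v\in\{s,t\}$; these two case splits are reparametrizations of one another, and your forced-chain reduction is exactly the paper's Case~1 with $R_\alpha=R(1,l)$.
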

\begin{proof}
We prove this lemma by showing how to construct a Hamiltonian $(s, t)$-path of $L(m,n; k,l)$ when $m-k = 1$ or $n-l = 1$. By symmetry, assume that
$m-k=1$. We make a horizontal separation on $L(m,n; k,l)$ to obtain two disjoint rectangular supergrid graphs $R_\alpha=R(m-k, l)$ and $R_\beta=R(m, n-l)$ (see Fig \ref{Fig_SmallSize-1-HP}(a)). Consider the following cases:

\begin{figure}[!t]
\begin{center}
\includegraphics[scale=0.9]{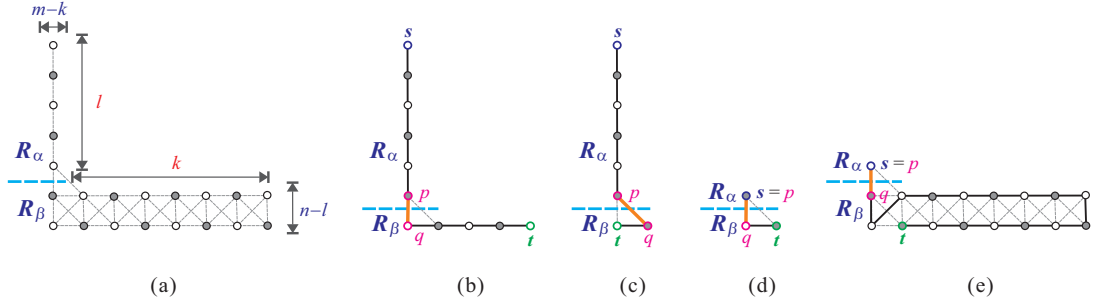}
\caption{(a) The horizontal separation on $L(m,n; k,l)$ to obtain $R_\alpha=R(m-k, l)$ and $R_\beta=R(m, n-l)$ under that $m-k=1$, and (b)--(e) a Hamiltonian $(s, t)$-path of $L(m,n; k,l)$ for $m-k=1$, $s\in R_\alpha$, and $t\in R_\beta$, where bold solid lines indicate the constructed Hamiltonian $(s, t)$-path.} \label{Fig_SmallSize-1-HP}
\end{center}
\end{figure}

\textit{Case} 1: $s_y (\mathrm{or}\ t_y)\leqslant l$ and $t_y (\mathrm{or}\ s_y)> l$. Without loss of generality, assume that $s_y\leqslant l$ and $t_y> l$. Let $p\in V(R_\alpha)$ and $q\in V(R_\beta)$ such that $p\thicksim q$, $p = (1, l)$, and $q = (1, l+1)$ if $t\neq (1, l+1)$; otherwise $q = (2, l+1)$. Notice that, in this case, if $|V(R_\alpha)| = 1$, then $p = s$. Clearly, $s = (1, 1)$. If $l > 1$ and $s_y > 1$, then $(L(m,n; k,l), s, t)$ satisfies condition (F1), a contradiction. Consider $(R_\alpha, s, p)$. Since $s = (1, 1)$ and $p = (1, l)$, $(R_\alpha, s, p)$ does not condition (F1). Consider $(R_\beta, q, t)$. Condition (F1) holds, if

\begin{itemize}
\item [(i)] $k > 1$, $n-l = 1$, and $t\neq (m, n)$. If this case holds, then $(L(m,n; k,l), s, t)$ satisfies (F1), a contradiction.
\item [(ii)] $n-l = 2$ and $q_x = t_x > m-k(=1)$. Since $(q_x = 1$ and $t_x\geqslant 1)$ or $(q_x = 2$ and $t=(1, l+1))$, clearly $q_x \neq t_x$ or $t_x=q_x = 1$.
\end{itemize}

\noindent Therefore, $(R_\beta, q, t)$ does not satisfy condition (F1). Since $(R_\alpha, s, p)$ and $(R_\beta, q, t)$ do not satisfy condition (F1), by Lemma \ref{HamiltonianConnected-Rectangular}, there exist Hamiltonian $(s, p)$-path $P_\alpha$ and Hamiltonian $(q, t)$-path $P_\beta$ of $R_\alpha$ and $R_\beta$, respectively. Then, $P_\alpha \Rightarrow P_\beta$ forms a Hamiltonian $(s, t)$-path of $L(m,n; k,l)$. The construction of a such Hamiltonian $(s, t)$-path is shown in Figs. \ref{Fig_SmallSize-1-HP}(b)--(e).

\textit{Case} 2: $s_y, t_y > l$. In this case, $l = 1$ and $|V(R_\alpha)| = 1$. Otherwise, it satisfies condition (F4). Let $r\in V(R_\alpha)$, $w = (1, l+1)$, and $z = (2, l+1)$. Consider $(R_\beta, s, t)$. If $(R_\beta, s, t)$ satisfies condition (F1), then $(L(m,n; k,l), s, t)$ satisfies (F1), a contradiction. Also, $(R_\beta, s, t)$ does not satisfy condition (F2). Otherwise, $(L(m,n; k,l), s, t)$ satisfies (F1) or (F5), a contradiction. Since $(R_\beta, s, t)$ does not satisfy conditions (F1) and (F2), by Lemma \ref{HamiltonianConnected-Rectangular}, where $n-l=1$, or Lemma \ref{HamiltonianConnected-Rectangular-wz}, where $n-l\geqslant 2$, there exists a Hamiltonian $(s, t)$-path $P_\beta$ of $R_\beta$ such that $(w, z) \in P_\beta$. By Statement (3) of Proposition \ref{Pro_Obs}, $r$ can be combined into path $P_\beta$ to form a Hamiltonian $(s, t)$-path of $L(m,n; k,l)$. The construction of a such Hamiltonian $(s, t)$-path of $L(m,n; k,l)$ is depicted in Fig. \ref{Fig_SmallSize-2-HP}. Notice that, in this subcase, we have constructed a Hamiltonian $(s,t)$-path $P$ such that an edge $(r, w)\in P$.
\begin{figure}[!t]
\begin{center}
\includegraphics[scale=0.9]{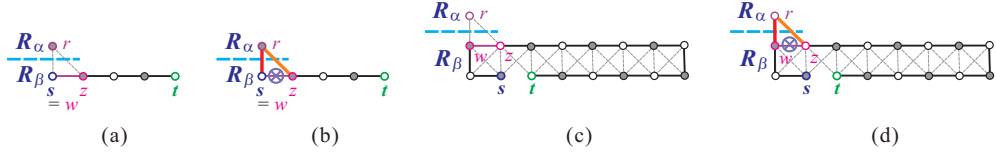}
\caption{(a) and (c) The Hamiltonian $(s, t)$-path of $R_\beta$ containing edge $(w, z)$ under that $m-k=1$ and $s, t\in R_\beta$, and (b) and (d) the Hamiltonian $(s, t)$-path of $L(m,n; k,l)$ for (a) and (c) respectively, where bold solid lines indicate the constructed Hamiltonian $(s, t)$-path and $\otimes$ represents the destruction of an edge while constructing a Hamiltonian $(s, t)$-path of $L(m,n; k,l)$.} \label{Fig_SmallSize-2-HP}
\end{center}
\end{figure}
\end{proof}

Next, we consider the case that $m-k\geqslant 2$ and $n-l\geqslant 2$. Notice that in this case $(L(m,n; k,l), s, t)$ does not satisfy conditions (F4) and (F5).

\begin{lem}\label{HP-LargeSize}
Let $L(m,n; k,l)$ be a $L$-shaped supergrid graph with $m-k\geqslant 2$ and $n-l\geqslant 2$, and let $s$ and $t$ be its two distinct vertices such that $(L(m,n; k,l), s, t)$ does not satisfy condition \emph{(F1)}. Then, $L(m, n; k, l)$ contains a Hamiltonian $(s, t)$-path, i.e., $HP(L(m,n; k,l), s, t)$ does exist.
\end{lem}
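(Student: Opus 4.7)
The plan is a constructive separation argument that reduces the $L$-shaped case to the rectangular results of Lemmas \ref{HC-rectangular_supergrid_graphs}, \ref{HamiltonianConnected-Rectangular}, \ref{HamiltonianConnected-Rectangular-zf} and \ref{HamiltonianConnected-Rectangular-wz}. Because $m-k\geqslant 2$ and $n-l\geqslant 2$, both the horizontal separation of $L(m,n;k,l)$ into $R_\alpha=R(m-k,l)$ and $R_\beta=R(m,n-l)$, and the vertical separation into $L_\alpha=R(m-k,n)$ and $L_\beta=R(k,n-l)$, yield rectangular parts each with both dimensions at least $2$ (apart from $L_\beta$ when $k=1$, which can be absorbed by adjoining its vertices one-by-one via Statement~(3) of Proposition \ref{Pro_Obs}, exactly as in Case~2 of the proof of Theorem \ref{HC-Theorem}). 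I would choose the separation according to the positions of $s$ and $t$ so as to avoid placing both endpoints on the same side of the cut in pathological places.

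The first main case is when the chosen separation puts $s$ and $t$ on opposite sides, say $s\in R_\alpha$ and $t\in R_\beta$. I would select adjacent vertices $p\in R_\alpha$ and $q\in R_\beta$ along the cut so that neither $(R_\alpha,s,p)$ nor $(R_\beta,q,t)$ satisfies (F1); this is possible because the cut has at least two vertices on each side and we have the freedom to slide $p,q$ by one unit. Lemma \ref{HamiltonianConnected-Rectangular} then yields Hamiltonian $(s,p)$- and $(q,t)$-paths $P_\alpha$ and $P_\beta$, and $P_\alpha\Rightarrow P_\beta$ is the desired Hamiltonian $(s,t)$-path of $L(m,n;k,l)$.

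The second main case is when $s$ and $t$ lie in the same part, say both in $R_\alpha$. Here I would build a Hamiltonian $(s,t)$-path $P_\alpha$ of $R_\alpha$ that contains a boundary edge on the side of $R_\alpha$ facing $R_\beta$, and combine it via Statement~(2) of Proposition \ref{Pro_Obs} with a canonical Hamiltonian cycle of $R_\beta$ whose flat face (guaranteed by Lemma \ref{HC-rectangular_supergrid_graphs}) is oriented toward $R_\alpha$. For generic $(s,t)$, the canonical Hamiltonian path of Lemma \ref{HamiltonianConnected-Rectangular} already contains the required boundary edge. The delicate subcases are precisely when $\{s,t\}$ is close to the contact edge in a way that could prevent the canonical path from using an edge on that side; these are exactly the configurations of properties (P1) and (P2), so Lemmas \ref{HamiltonianConnected-Rectangular-zf} and \ref{HamiltonianConnected-Rectangular-wz} (applied after an isomorphism placing $w,z,f$ on the contact side) supply a Hamiltonian $(s,t)$-path containing the required edge $(w,z)$ or $(z,f)$, which then merges with the canonical cycle of $R_\beta$.

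The principal obstacle is the bookkeeping required to rule out (F1) in the smaller part after separation, and to handle the corner configurations where $s$ or $t$ lies on the contact boundary of $R_\alpha$. In such corner subcases the natural choice of $(p,q)$ on the cut may coincide with $s$ or $t$, forcing one to switch between the horizontal and vertical separation or to move the cut by one unit; this is the step where the two extra properties established in Section~\ref{Sec_rectangular-supergrid} are indispensable, since they guarantee the existence of a Hamiltonian $(s,t)$-path of $R_\alpha$ that uses a prescribed edge on the side of the cut, which is the only way to graft a Hamiltonian cycle of $R_\beta$ onto it. Once all corner subcases are dispatched this way, the construction is exhaustive and the lemma follows.
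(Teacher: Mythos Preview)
Your overall strategy coincides with the paper's: separate $L(m,n;k,l)$ into two rectangles, concatenate Hamiltonian paths when $s$ and $t$ fall on opposite sides of the cut, and otherwise graft a canonical Hamiltonian cycle of one part onto a Hamiltonian $(s,t)$-path of the other via Statement~(2) of Proposition~\ref{Pro_Obs}, using Lemmas~\ref{HamiltonianConnected-Rectangular-zf} and~\ref{HamiltonianConnected-Rectangular-wz} to force the needed contact edge. That is exactly the skeleton of the paper's proof.

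There are, however, two concrete subcases your sketch does not cover and one incorrect claim. First, your assertion that both separations ``yield rectangular parts each with both dimensions at least $2$ (apart from $L_\beta$ when $k=1$)'' is false: for the horizontal separation, $R_\alpha=R(m-k,l)$ is a $1$-rectangle whenever $l=1$, and this case is not symmetric to $k=1$ once $s,t$ are fixed. Second, the paper isolates $k=l=1$ as its own case and handles it by taking a Hamiltonian $(s,t)$-path of the full rectangle $R(m,n)$ and deleting the single missing corner $z$, which works because $N(z)$ is a clique; your sketch has no mechanism for this. Third, and most substantively, when $l=1$, $m-k>2$ and both $s,t$ lie in the right block (so $s_x,t_x>m-k$), neither of your two separations works as stated: the horizontal cut leaves a $1$-rectangle $R(m-k,1)$ with more than two vertices (no Hamiltonian cycle to merge), and for the vertical cut the contact side of $L_\beta=R(k,n-l)$ may be too short to invoke Lemma~\ref{HamiltonianConnected-Rectangular-wz}. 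The paper resolves this (its Case~2.2.2) by shifting the vertical cut one column to the left, producing a rectangle $R(m{-}k{-}1,n)$ and a thin $L$-piece with width $1$ on which Lemma~\ref{HP-SmallSize} is invoked to obtain a path containing a specified edge; this recursive appeal to Lemma~\ref{HP-SmallSize} is the missing ingredient in your outline.
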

\begin{proof}
We will provide a constructive method to prove this lemma. That is, a Hamiltonian $(s, t)$-path of $L(m,n; k,l)$ will be constructed in any case. Since $m-k\geqslant 2$, $n-l\geqslant 2$, and $k, l\geqslant 1$, we get that $m\geqslant 3$ and $n\geqslant 3$. Note that $L(m,n; k,l)$ is obtained from $R(m, n)$ by removing $R(k, l)$ from its upper-right corner. Based on the sizes of $k$ and $l$, there are the following two cases:

\textit{Case} 1: $k=1$ and $l=1$. Let $z$ be the only node in $V(R(m, n)-L(m,n; k,l))$. Then, $z = (m, 1)$ is the upper-right corner of $R(m, n)$. By Lemma 3, there exists a Hamiltonian $(s, t)$-path $P$ of $R(m, n)$. Let $P = P_1 \Rightarrow z \Rightarrow P_2$. Since $N(z)$ forms a clique, $end(P_1) \thicksim start(P_2)$. Thus, $P_1 \Rightarrow P_2$ forms a Hamiltonian $(s, t)$-path of $L(m,n; k,l)$. The construction of a such Hamiltonian $(s, t)$-path is depicted in Fig. \ref{Fig_LargeSize-1-HP}(a).

\begin{figure}[!t]
\begin{center}
\includegraphics[scale=0.9]{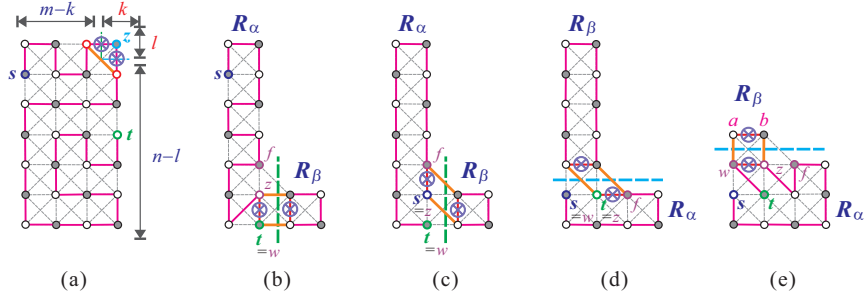}
\caption{The construction of Hamiltonian $(s, t)$-path of $L(m,n; k,l)$ under that $m-k\geqslant 2$ and $n-l\geqslant 2$ for (a) $k=1$ and $l=1$, (b)--(c) $k\geqslant 2$, $s_x, t_x\leqslant m-k$ and $\{s, t\}$ is not a vertex cut of $\tilde{R} = \{v\in V(L(m,n; k,l))| v_x\leqslant m-k\}$, and (d)--(e) $k\geqslant 2$, $s_x, t_x\leqslant m-k$ and $\{s, t\}$ is a vertex cut of $\tilde{R}$, where bold lines indicate the constructed Hamiltonian $(s, t)$-path and $\otimes$ represents the destruction of an edge while constructing a Hamiltonian $(s, t)$-path of $L(m,n; k,l)$.} \label{Fig_LargeSize-1-HP}
\end{center}
\end{figure}

\textit{Case} 2: $k\geqslant 2$ or $l\geqslant 2$. By symmetry, we can only consider that $k\geqslant 2$. Depending on the locations of $s$ and $t$, we consider the following three subcases:

\hspace{0.5cm}\textit{Case} 2.1: $s_x, t_x\leqslant m-k$. Let $\tilde{R} = \{v\in V(L(m,n; k,l))| v_x\leqslant m-k\}$. Then, $\tilde{R} = R(m-k, n)$ and $s, t\in \tilde{R}$. Depending on whether $\{s, t\}$ is a vertex cut of $\tilde{R}$, there are the following two subcases:

\hspace{1.0cm}\textit{Case} 2.1.1: $(m-k\geqslant 3)$ or $(m-k=2$ and $[(s_y\neq t_y)$, $(s_y=t_y=1)$, or $(s_y=t_y=n)])$. In this subcase, $\{s, t\}$ is not a vertex cut of $\tilde{R}$. We make a vertical separation on $L(m,n; k,l)$ to obtain two disjoint rectangular supergrid graphs $R_\alpha=R(m-k, n)$ and $R_\beta=R(k, n-l)$. Consider $(R_\alpha, s, t)$. Condition (F1) holds only if $m-k=2$ and $2\leqslant s_y=t_y\leqslant n-1$. Since $s_y\neq t_y$, $s_y=t_y=1$, or $s_y=t_y=n$, it clear that $(R_\alpha, s, t)$ does not satisfy condition (F1). Let $w= (m-k, n)$, $z= (m-k, n-1)$, and $f= (m-k, n-2)$. Also, assume $(1, 1)$ is the down-right corner of $R_\alpha$. Since $(R_\alpha, s, t)$ does not satisfy condition (F1), by Lemma \ref{HamiltonianConnected-Rectangular} (when $(R_\alpha, s, t)$ satisfies condition (F2)), Lemma \ref{HamiltonianConnected-Rectangular-zf}, and Lemma \ref{HamiltonianConnected-Rectangular-wz}, we can construct a Hamiltonian $(s, t)$-path $P_\alpha$ of $R_\alpha$ such that edge $(w, z)$ or $(z, f)$ is in $P_\alpha$. By Lemma \ref{HC-rectangular_supergrid_graphs}, there exists a Hamiltonian cycle $C_\beta$ of $R_\beta$ such that its one flat face is placed to face $R_\alpha$. Then, there exist two edges $e_1 \in C_\beta$ and $(w, z)$ (or $(z, f))\in P_\alpha$ such that $e_1 \thickapprox (w, z)$ or $e_1 \thickapprox (z, f)$. By Statement (2) of Proposition \ref{Pro_Obs}, $P_\alpha$ and $C_\beta$ can be combined into a Hamiltonian $(s, t)$-path of $L(m,n; k,l)$. The construction of a such Hamiltonian path is depicted in Figs. \ref{Fig_LargeSize-1-HP}(b)--(c).

\hspace{1.0cm}\textit{Case} 2.1.2: $m-k=2$ and $2\leqslant s_y=t_y\leqslant n-1$. In this subcase, $\{s, t\}$ is a vertex cut of $\tilde{R}$. If $s_y=t_y\leqslant l$, then $L(m,n; k,l), s, t)$ satisfies condition (F1), a contradiction. Thus, $s_y=t_y > l$. Let $w= (1, l+1)$, $z= (2, l+1)$, and $f= (3, l+1)$. We make a horizontal separation on $L(m,n; k,l)$ to obtain two disjoint rectangular supergrid graphs $R_\beta= R(m-k, l)$ and $R_\alpha= R(m, n-l)$. A simple check shows that $(R_\alpha, s,t)$ does not satisfy condition (F1). Since $(R_\alpha, s, t)$ does not satisfy conditions (F1), by Lemma \ref{HamiltonianConnected-Rectangular-zf} and Lemma \ref{HamiltonianConnected-Rectangular-wz}, we can construct a Hamiltonian $(s, t)$-path $P_\alpha$ of $R_\alpha$ such that edge $(w, z)$ or $(z, f)$ is in $P_\alpha$ depending on whether $\{s, t\} = \{(1, l+1),(2, l+1)\}$. First, let $l > 1$. By Lemma \ref{HC-rectangular_supergrid_graphs}, there exists a Hamiltonian cycle $C_\beta$ of $R_\beta$ such that its one flat face is placed to face $R_\alpha$. Then, there exist two edges $e_1 \in C_\beta$ and $(w, z)$ (or $(z, f))\in P_\alpha$ such that $e_1 \thickapprox (w, z)$ or $e_1\thickapprox (z,f)$. By Statement (2) of Proposition \ref{Pro_Obs}, $P_\alpha$ and $C_\beta$ can be combined into a Hamiltonian $(s, t)$-path of $L(m,n; k,l)$. The construction of a such Hamiltonian path is depicted in Fig. \ref{Fig_LargeSize-1-HP}(d). Now, let $l = 1$. Then, $|V(R_\beta)| = 2$ and $R_\beta$ consists of only two vertices $a$ and $b$ with $a_x < b_x$.  Since $(a, b) \thickapprox (w, z)$ or $(a, b) \thickapprox (z, f)$. By Statement (4) of Proposition \ref{Pro_Obs}, edge $(a, b)$ in $R_\beta$ can be combined into path $P_\alpha$ to form a Hamiltonian $(s, t)$-path of $L(m,n; k,l)$. The construction of a such Hamiltonian $(s, t)$-path is shown in Fig. \ref{Fig_LargeSize-1-HP}(e).

\hspace{0.5cm}\textit{Case} 2.2: $s_x, t_x > m-k$. Based on the size of $l$, we consider the following two subcases:

\hspace{1.0cm}\textit{Case} 2.2.1: ($l>1$) or ($l=1$ and $m-k=2$). A Hamiltonian $(s, t)$-path of $L(m,n; k,l)$ can be constructed by similar arguments in proving Case 2.1.2. Figs. \ref{Fig_LargeSize-2-HP}(a)--(b) depict the construction of a such Hamiltonian $(s, t)$-path of $L(m,n; k,l)$ in this subcase.

\begin{figure}[!t]
\begin{center}
\includegraphics[scale=0.9]{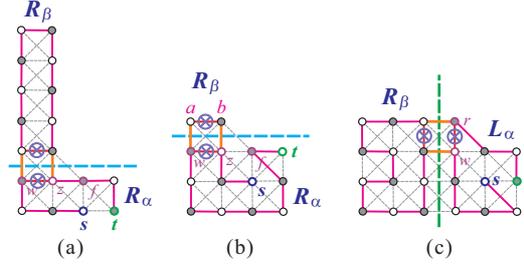}
\caption{The construction of Hamiltonian $(s, t)$-path of $L(m,n; k,l)$ under that $m-k\geqslant 2$, $n-l\geqslant 2$, $k\geqslant 2$, and $s_x, t_x> m-k$ for (a)--(b) ($l>1$) or ($l=1$ and $m-k=2$)), and (c) $l=1$ and $m-k > 2$, where bold lines indicate the constructed Hamiltonian $(s, t)$-path and $\otimes$ represents the destruction of an edge while constructing a Hamiltonian $(s, t)$-path of $L(m,n; k,l)$.} \label{Fig_LargeSize-2-HP}
\end{center}
\end{figure}

\hspace{1.0cm}\textit{Case} 2.2.2: $l=1$ and $m-k>2$. Let $r = (m-k, 1)$ and $w = (m-k, 2)$. We make a vertical separation on $L(m,n; k,l)$ to obtain two disjoint supergrid subgraphs $R_\beta= R(m', n)$ and $L_\alpha= L(m-m',n; k,l)$, where $m'= m-k-1$; as depicted in Fig. \ref{Fig_LargeSize-2-HP}(c). Clearly, $m-m' = 1$ and $(L_\alpha, s, t)$ lies on Case 2 of Lemma \ref{HP-SmallSize}. By Lemma \ref{HP-SmallSize}, we can construct a Hamiltonian $(s, t)$-path $P_\alpha$ of $L_\alpha$ such that edge $(r, w)\in P_\alpha$. By Lemma \ref{HC-rectangular_supergrid_graphs}, there exists a Hamiltonian cycle $C_\beta$ of $R_\beta$ such that its one flat face is placed to face $L_\alpha$. Then, there exist two edges $e_1 \in C_\beta$ and $(r, w)\in P_\alpha$ such that $e_1 \thickapprox (r, w)$. By Statement (2) of Proposition \ref{Pro_Obs}, $P_\alpha$ and $C_\beta$ can be combined into a Hamiltonian $(s, t)$-path of $L(m,n; k,l)$. The construction of a such Hamiltonian path is depicted in Fig. \ref{Fig_LargeSize-2-HP}(c).

\hspace{0.5cm}\textit{Case} 2.3: $s_x\leqslant m-k$ and $t_x > m-k$. We make a vertical separation on $L(m,n; k,l)$ to obtain two disjoint rectangles $R_\alpha= R(m', n)$ and $R_\beta= R(k, n-l)$, where $m' = m-k$. Let $p\in V(R_\alpha)$, $q\in V(R_\beta)$, $p\thicksim q$, and
$$
  \begin{cases}
    p = (m', n)\ \mathrm{and}\ q = (m'+1, n),       &   \mathrm{if}\ s\neq (m', n)\ \mathrm{and}\ t\neq (m'+1, n);\\
    p = (m', n-1)\ \mathrm{and}\ q = (m'+1, n-1),   &   \mathrm{if}\ s = (m', n)\ \mathrm{and}\ t = (m'+1, n);\\
    p = (m', n)\ \mathrm{and}\ q = (m'+1, n-1),     &   \mathrm{if}\ s\neq (m', n)\ \mathrm{and}\ t = (m'+1, n);\\
    p = (m', n-1)\ \mathrm{and}\ q = (m'+1, n),     &   \mathrm{if}\ s = (m', n)\ \mathrm{and}\ t\neq (m'+1, n).
  \end{cases}$$\\
Consider $(R_\alpha, s, p)$ and $(R_\beta, q, t)$. Condition (F1) holds, if ($m-k=2$ and $s_y=p_y=n-1$) or ($k=2$ and $q_y=t_y=n-1$). This is impossible, because if $p_y=q_y = n-1$, then $s_y=n$ and $t_y=n$. Therefore, $(R_\alpha, s, p)$ and $(R_\beta, q, t)$ do not satisfy condition (F1). By Lemma \ref{HamiltonianConnected-Rectangular}, there exist Hamiltonian $(s, p)$-path $P_\alpha$ and Hamiltonian $(q, t)$-path $P_\beta$ of $R_\alpha$ and $R_\beta$, respectively. Then, $P_\alpha \Rightarrow P_\beta$ forms a Hamiltonian $(s, t)$-path of $L(m,n; k,l)$.

It immediately follows from the above cases that the lemma holds true.
\end{proof}

We have considered any case to verify the Hamiltonian connectivity of $L$-shaped supergrid graphs. It follows from Lemma \ref{Necessary-condition-Lshaped}, Lemma \ref{HP-SmallSize},  and Lemma \ref{HP-LargeSize} that the following theorem holds true.

\begin{thm}\label{HP-Theorem-Lshaped}
Let $L(m,n; k,l)$ be a $L$-shaped supergrid graph with vertices $s$ and $t$. Then, $L(m,n; k,l)$ contains a Hamiltonian $(s, t)$-path if and only if $(L(m,n; k,l), s, t)$ does not satisfy conditions \emph{(F1)}, \emph{(F4)}, and \emph{(F5)}.
\end{thm}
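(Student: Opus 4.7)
The plan is to derive Theorem \ref{HP-Theorem-Lshaped} as an immediate consequence of the three previously established lemmas, combined by a simple case split on the dimensions $m-k$ and $n-l$ of the two ``arms'' of the L-shape. No additional case analysis or construction is needed in the theorem itself; the real work has already been carried out in those lemmas.

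The necessity direction is already fully handled by Lemma \ref{Necessary-condition-Lshaped}, which asserts that none of (F1), (F4), (F5) can hold whenever a Hamiltonian $(s,t)$-path exists. So I would simply cite that lemma.

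For the sufficiency direction, I would assume that $(L(m,n;k,l),s,t)$ satisfies none of (F1), (F4), (F5) and then split into two exhaustive cases according to the arm widths. In the first case, $m-k = 1$ or $n-l = 1$, and Lemma \ref{HP-SmallSize} directly delivers the desired Hamiltonian $(s,t)$-path. In the second case, $m-k \geqslant 2$ and $n-l \geqslant 2$; here I would first observe that (F4) and (F5) are automatically vacuous, since (F4) requires a degree-one vertex, which in $L(m,n;k,l)$ can exist only when $m-k = 1$ with $l > 1$ or $n-l = 1$ with $k > 1$, and (F5) explicitly requires $m-k = 1$. Thus the only hypothesis that can possibly be invoked is the failure of (F1), and Lemma \ref{HP-LargeSize} then supplies the Hamiltonian $(s,t)$-path.

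There is essentially no obstacle at this level; the difficulty has been absorbed into the preceding machinery. If pressed to identify the load-bearing step, it is Lemma \ref{HP-LargeSize}, which performs an intricate case analysis on the positions of $s$ and $t$ using a vertical or horizontal separation of the L into two rectangles, then combines a Hamiltonian path of one piece (obtained via Lemma \ref{HamiltonianConnected-Rectangular}, or its edge-prescribed refinements Lemmas \ref{HamiltonianConnected-Rectangular-zf} and \ref{HamiltonianConnected-Rectangular-wz}) with a canonical Hamiltonian cycle of the other piece (obtained via Lemma \ref{HC-rectangular_supergrid_graphs}) through the parallel-edge merging of Proposition \ref{Pro_Obs}. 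The present theorem merely packages these three lemmas into a single clean biconditional.
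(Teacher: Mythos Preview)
Your proposal is correct and matches the paper's own approach exactly: the paper states the theorem immediately after Lemmas \ref{Necessary-condition-Lshaped}, \ref{HP-SmallSize}, and \ref{HP-LargeSize} and simply asserts that it follows from them, with the same case split on whether $m-k=1$ or $n-l=1$. Your additional remark that (F4) and (F5) are vacuous when $m-k\geqslant 2$ and $n-l\geqslant 2$ is also explicitly noted in the paper just before Lemma \ref{HP-LargeSize}.
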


\section{The longest (\textit{\textbf{s, t}})-path algorithm}\label{Sec_Algorithm}
From Theorem \ref{HP-Theorem-Lshaped}, it follows that if $(L(m,n; k,l), s, t)$ satisfies one of the conditions (F1), (F4), and (F5), then $(L(m,n; k,l), s, t)$ contains no Hamiltonian $(s, t)$-path. So in this section, first for these cases we give upper bounds on the lengths of longest paths between $s$ and $t$. Then, we show that these upper bounds equal to the lengths of longest paths between $s$ and $t$. Recall that $\hat{L}(G, s, t)$ denote the length of longest path between $s$ and $t$ in $G$, and the length of a path is the number of vertices of the path. In the following, we use $\hat{U}(G, s, t)$ to indicate the upper bound on the length of longest paths between $s$ and $t$ in $G$, where $G$ is a rectangular or $L$-shaped supergrid graph. Notice that the isomorphic cases are omitted. Depending on the sizes of $m-k$ and $n-l$, we provide the following two lemmas to compute the upper bounds when $(L(m,n; k,l), s, t)$ satisfies one of conditions (F1) and (F4).

\begin{lem} \label{Lemma:UB1-UB3} Let $m-k = n-l = 1$ and $l > 1$. Then, the following conditions hold:
\begin{description}
  \item[$\mathrm{(UB1)}$] If $s_y, t_y\leqslant l$, then the length of any path between $s$ and $t$ cannot exceed $|t_y-s_y|+1$ (see Fig. \ref{fig:UB1-UB4}(a)).
  \item[$\mathrm{(UB2)}$] If $s_y < l$ and $t_x > 1$, then the length of any path between $s$ and $t$ cannot exceed $n-s_y+t_x$ (see Fig. \ref{fig:UB1-UB4}(b)).
  \item[$\mathrm{(UB3)}$] If $s_x = t_x = 1$, $\max\{s_y, t_y\} = n$, and $[(k>1)$ or $(k=1$ and $\min\{s_y, t_y\} > 1)]$, then the length of any path between $s$ and $t$ cannot exceed $|t_y-s_y|+2$ (see Fig. \ref{fig:UB1-UB4}(c)).
\end{description}
\end{lem}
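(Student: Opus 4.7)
The plan is to exploit the very thin geometry that the conditions $m-k=n-l=1$ and $l>1$ impose on $L(m,n;k,l)$. In this regime the graph decomposes into a vertical arm $U=\{(1,y):1\leqslant y\leqslant l\}$ of $l$ vertices and a horizontal arm $D=\{(x,n):1\leqslant x\leqslant m\}$ of $m$ vertices; the two arms share no vertex and are joined solely through the edges $((1,l),(1,n))$ and $((1,l),(2,n))$, because every other potential neighbor lies in the removed block $R(k,l)$. Consequently each interior vertex of $U$ or $D$ has exactly two neighbors in $L$, so both $U$ and $D$ are induced paths. The corner region harbors three special low-degree vertices: $(1,1)$ has $deg=1$ whenever $k>1$, $(m,n)$ has $deg=1$, and $(1,n)$ has $deg=2$ with neighbors $(1,l)$ and $(2,n)$. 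With these structural facts, (UB1) follows at once: the only vertex of $U$ adjacent to anything outside $U$ is $(1,l)$, so any simple $(s,t)$-path with $s,t\in U$ that ever leaves $U$ could not reenter without revisiting $(1,l)$; hence the path is confined to $U$ and uses exactly $|t_y-s_y|+1$ vertices.

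For (UB2), with $s=(1,s_y)$, $s_y<l$, and $t=(t_x,n)$, $t_x>1$, I would split the argument in two. First, moving upward from $s$ inside $U$ initiates a walk along the chain toward the degree-$1$ dead end $(1,1)$, from which no simple path can return to reach $t$; hence the $(s,t)$-path must descend through $(1,s_y+1),\ldots,(1,l)$, contributing $l-s_y+1$ vertices. Second, at the corner, since $(1,n)$ has degree exactly $2$, any visit to it uses both of its edges, giving the local subpath $(1,l)\to(1,n)\to(2,n)$; omitting $(1,n)$ costs one vertex. From $(2,n)$ the path must advance along the chain $D$ and terminate at $t=(t_x,n)$, so no vertex with $x>t_x$ can be visited (doing so would strand the path at the degree-$1$ vertex $(m,n)$). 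Summing the three segments yields $(l-s_y+1)+1+(t_x-1)=n-s_y+t_x$.

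For (UB3), assume without loss of generality that $t=(1,n)$, so $\min\{s_y,t_y\}=s_y$. Because $t$ has degree $2$ in $L$, the last edge of any $(s,t)$-path is either $((1,l),t)$ or $((2,n),t)$. The descent argument of (UB2) again forces the portion of the path inside $U$ to be the chain from $s$ to $(1,l)$ of length $l-s_y+1$; the side condition $(k>1)$ or $(k=1$ and $\min\{s_y,t_y\}>1)$ is exactly what guarantees that $(1,1)$, when distinct from $s$, is a genuine non-endpoint degree-$1$ obstruction, so that the upward branch from $s$ truly cannot recover. Ending at $t$ via $(2,n)$ adds one more vertex than ending via $(1,l)$, which yields $(l-s_y+1)+2=n-s_y+2=|t_y-s_y|+2$. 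Extra lower-arm vertices $(3,n),\ldots,(m,n)$ cannot be incorporated, since reaching them from $(2,n)$ and then returning to $t$ would require revisiting $(2,n)$ or putting the degree-$1$ vertex $(m,n)$ in the interior of the path. Throughout, the main obstacle is not the constructive side---each maximum-length path is easy to exhibit---but the tightness check, which is handled uniformly by the local move analysis around the corner triple $\{(1,l),(1,n),(2,n)\}$ together with the absorbing behavior of the degree-$1$ vertices $(1,1)$ and $(m,n)$.
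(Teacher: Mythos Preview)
Your argument is correct and is precisely the detailed structural analysis that the paper's one-line proof (``Since $n-l=m-k=1$, there is only one single path between $s$ and $t$'') leaves implicit. One small clarification: the side condition in (UB3) is not actually needed for the upper bound to hold---its role is only to exclude the Hamiltonian instance $k=1$, $s=(1,1)$, where the bound coincides with $|V(L)|$---so your attribution of it to the degree of $(1,1)$ (which in fact has degree $1$ for \emph{all} $k$ here, not just $k>1$) is slightly off, though this does not affect the validity of your bound arguments.
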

\begin{proof}
Since $n-l = m-k = 1$, there is only one single path between $s$ and $t$ that has the specified.
\end{proof}

\begin{figure}[!t]
\begin{center}
\includegraphics[scale=0.9]{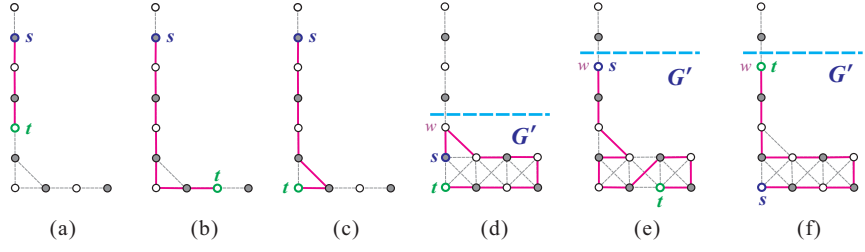}
\caption{The longest path between $s$ and $t$ for (a) (UB1), (b) (UB2), (c) (UB3), and (d)--(f) (UB4), where bold lines indicate the constructed longest $(s, t)$-path.}\label{fig:UB1-UB4}
\end{center}
\end{figure}

\begin{figure}[!t]
\begin{center}
\includegraphics[scale=0.9]{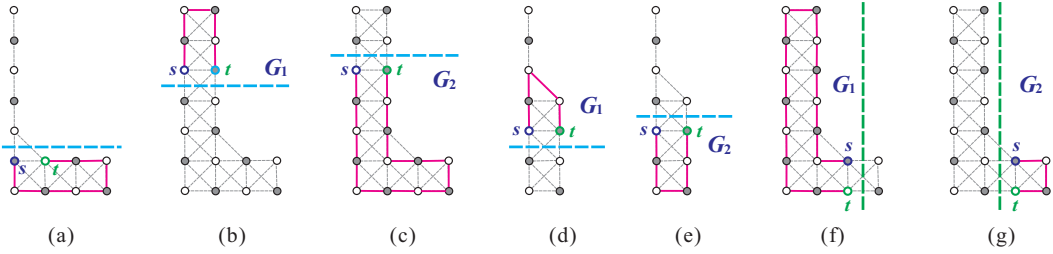}
\caption{The longest path between $s$ and $t$ for (a) (UB5), and (b)--(g) (UB6), where bold lines indicate the constructed longest $(s, t)$-path.}\label{fig:UB5-UB6}
\end{center}
\end{figure}

\begin{lem}\label{Lemma:UB4-UB6}
Let $n-l > 1$. Then, the following conditions hold:
\begin{description}
  \item[$\mathrm{(UB4)}$] If $m-k = 1$, $l > 1$, and $[(s_y, t_y>l$ and $\{s, t\}$ is not a vertex cut$)$, $(s_y\leqslant l$ and $t_y > l)$, or $(t_y\leqslant l$ and $s_y > l)]$, then the length of any path between $s$ and $t$ cannot exceed $\hat{L}(G', s, t)$; where $G' = L(m,n-n'; k,l')$ and $l' = l-n'$, and $n' = l-1$ if $s_y, t_y\geqslant l$; otherwise $n' = \min\{s_y, t_y\}-1$ (see Figs. \ref{fig:UB1-UB4}(d)--(f)).
  \item[$\mathrm{(UB5)}$] If $m-k = 1$, $k > 1$ $(m > 2)$, $s = (1, l+1)$, and $t = (2, l+1)$, then the length of any path between $s$ and $t$ cannot exceed $\hat{L}(G', s, t)$, where $G' =  R(m, n-l)$ (see Fig. \ref{fig:UB5-UB6}(a)).
  \item[$\mathrm{(UB6)}$] If $(m-k = 2$, $l > 1$, and $2\leqslant s_y=t_y\leqslant n-1)$, $(m = 2$, $n-l > 2$, and $l+1\leqslant s_y=t_y\leqslant n-1)$, or $(n-l = 2$, $k > 1$, and $m-k+1\leqslant s_x=t_x\leqslant m-1)$, then the length of any path between $s$ and $t$ cannot exceed $\max\{\hat{L}(G_1, s, t), \hat{L}(G_2, s, t)\}$, where $G_1$ and $G_2$ are defined in Figs. \ref{fig:UB5-UB6}(b)--(g).
\end{description}
\end{lem}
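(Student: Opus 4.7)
The plan is to establish each of (UB4), (UB5), (UB6) by exhibiting a structural obstruction---a cut vertex, a leaf, or the vertex cut $\{s,t\}$---that forces every $(s,t)$-path to miss a prescribed set of vertices, and then to observe that the surviving induced subgraph is exactly the $G'$ (or $G_1,G_2$) declared in the claim, so that the length bound reduces to a smaller instance.

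For (UB4), the hypothesis $m-k=1$ with $l>1$ produces a 1-wide ``tail'' $T=\{(1,1),(1,2),\ldots,(1,l)\}$ attached to the wide bottom rectangle $R(m,n-l)$ only through the single vertex $(1,l)$; moreover $(1,1)$ is a leaf of $L(m,n;k,l)$. I would split on the position of $s,t$. When $s_y,t_y>l$ and $\{s,t\}$ is not a vertex cut, suppose for contradiction that the $(s,t)$-path visits some $(1,j)$ with $j\leqslant l-1$; then it must reach $(1,j)$ through $(1,l)$ (the only access point for the tail), and inside the 1-wide corridor it is forced monotonically upward to the leaf $(1,1)$ where it terminates---impossible, since $t$ lies in the wide part. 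Hence at most $(1,l)$ from $T$ is visited, so the path lies in $G'=L(m,n-(l-1);k,1)$, giving $n'=l-1$. When exactly one of $s_y,t_y$ is at most $l$, say $s_y\leqslant l$, then $s$ sits inside the corridor; moving upward from $s$ along the 1-wide column leads inevitably to the leaf $(1,1)$, so the path must descend and the vertices $(1,1),\ldots,(1,s_y-1)$ above $s$ are unreachable, giving $n'=s_y-1=\min\{s_y,t_y\}-1$ and $G'=L(m,n-n';k,l-n')$. The symmetric case $t_y\leqslant l$ is analogous.

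For (UB5), the same corridor $T$ is pinched between the two endpoints $s=(1,l+1)$ and $t=(2,l+1)$: the vertex $(1,l)$ has exactly three neighbours, namely $(1,l-1)\in T$ and $s,t$ in the wide part, while every other vertex of $T$ has all its neighbours inside $T$. Thus any $(s,t)$-path that uses an internal vertex of $T$ must enter $T$ through $s\to(1,l)$ (or symmetrically through $t\to(1,l)$), proceed monotonically upward, and terminate at the leaf $(1,1)$, which is neither $s$ nor $t$---contradicting that the path ends at $t$. Consequently the only way to involve $T$ at all is the trivial detour $s\to(1,l)\to t$ of length $3$, and otherwise the $(s,t)$-path is contained in $R(m,n-l)$. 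In either case the length is at most $\hat{L}(R(m,n-l),s,t)$.

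For (UB6), each of the three configurations is chosen precisely so that $\{s,t\}$ is a vertex cut of $L(m,n;k,l)$: in the first, $m-k=2$ and $s,t$ occupy a full cross-row of the 2-wide upper column; in the second, $m=2$ and $s,t$ span the 2-wide lower strip; in the third, $n-l=2$ and $s,t$ span a 2-wide horizontal cross-column of the bottom part. In every case $L(m,n;k,l)-\{s,t\}$ splits into two non-empty components $A_1,A_2$, and $G_1=\{s,t\}\cup A_1$, $G_2=\{s,t\}\cup A_2$ are the two subgraphs drawn in the figure. Since any $(s,t)$-path uses $s$ and $t$ only as endpoints, deleting them from the path leaves a single subpath in $L(m,n;k,l)-\{s,t\}$; because $A_1$ and $A_2$ are disconnected there, that subpath lies entirely in one $A_i$, and so the whole $(s,t)$-path is confined to one of $G_1,G_2$. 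The bound $\max\{\hat{L}(G_1,s,t),\hat{L}(G_2,s,t)\}$ follows at once.

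The main obstacle will be the first sub-case of (UB4), where one has to argue sharply that the single tail vertex $(1,l)$---and no vertex further up in $T$---is all that can be incorporated into an $(s,t)$-path when both endpoints lie in the wide part; the degree-one property of $(1,1)$ together with the monotone trap inside the 1-wide corridor is what makes this conclusion tight. Verifying that the subgraphs $G_1,G_2$ in (UB6) match the figures across the three orientations (vertical $m-k=2$, vertical $m=2$, and horizontal $n-l=2$) also requires care but is routine once the corresponding vertex cut is identified.
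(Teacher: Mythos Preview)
Your proposal is correct and follows essentially the same approach as the paper. For (UB4) the paper simply names the cut vertex $w=(1,l)$ (or $w=(1,\min\{s_y,t_y\})$) and invokes the standard fact that an $(s,t)$-path cannot leave the component of $G-w$ containing $s$ and $t$ except through $w$ itself; your ``leaf plus monotone corridor trap'' is just an unrolled version of this cut-vertex argument. For (UB5) and (UB6) both you and the paper use that $\{s,t\}$ is a vertex cut, giving the dichotomy between the two sides; your treatment of (UB5) is in fact more careful than the paper's in explaining why the tail contributes at most the single detour $s\to(1,l)\to t$ of length $3$, though you should note explicitly (as the paper does, somewhat obliquely) that $\hat L(R(m,n-l),s,t)\geqslant 3$ since $m\geqslant 3$, $n-l\geqslant 2$ forces a Hamiltonian $(s,t)$-path in $G'$.
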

\begin{proof}
For (UB4), let $w = (1, l)$ if $s_y, t_y\geqslant l$; otherwise $w = \min\{s_y, t_y\}$. Since $w$ is a cut vertex, hence removing $w$ clearly
disconnects $L(m,n; k,l)$ into two components, and a simple path between $s$ and $t$ can only go through a component that contains $s$ and $t$, let this component be $G'$. Therefore, its length cannot exceed $\hat{L}(G', s, t)$. For (UB5), consider Fig. \ref{fig:UB5-UB6}(a). Since $\{s, t\}$ is a vertex cut of $L(m,n; k,l)$, the length of any path between $s$ and $t$ cannot exceed $\max\{3, \hat{L}(G', s, t)\}$. Since $n-l > 1$ and $m > 2$, it follows that $ |V(G')| > 3$. Moreover, since $\hat{L}(G', s, t)|\leqslant |V(G')|$, its length cannot exceed $\hat{L}(G', s, t)$. For (UB6), removing $s$ and $t$ clearly disconnects $L(m,n; k,l)$ into two components $G_1$ and $G_2$. Thus, a simple path between $s$ and $t$ can only go through one of these components. Therefore, its length cannot exceed the size of the largest component.
\end{proof}

We have computed the upper bounds of the longest $(s, t)$-paths when $(L(m,n; k,l), s, t)$ satisfies condition (F1) or (F4). The following lemma shows the upper bound when $(L(m,n; k,l), s, t)$ satisfies condition (F5).

\begin{lem}\label{Lemma:F5}
If $(L(m,n; k,l), s, t)$ satisfies condition \emph{(F5)}, then the length of any path between $s$ and $t$ cannot exceed $mn-kl-1$.
\end{lem}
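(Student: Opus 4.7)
The plan is to argue that under condition~(F5) the degree-$2$ corner vertex $(1,1)$ cannot appear as an interior vertex of any $(s,t)$-path; since $(1,1)$ is also not an endpoint for either of the two forbidden pairs, every $(s,t)$-path must omit $(1,1)$ and hence has length at most $|V(L(m,n;k,l))|-1=mn-kl-1$.

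First I would unpack the structure that (F5) forces: $m=k+1$, $n=3$, $l=1$, and $k\geqslant 2$, so
\[
V(L(m,n;k,l))=\{(1,1)\}\cup\{(i,j):1\leqslant i\leqslant m,\ 2\leqslant j\leqslant 3\},
\]
with $|V|=2m+1=mn-kl$ and $N_L((1,1))=\{(1,2),(2,2)\}$. Suppose toward a contradiction that some $(s,t)$-path $P$ visits every vertex. Then $(1,1)$ is an interior vertex of $P$, so both of its incident edges must appear in $P$, forcing the subpath $(1,2)$--$(1,1)$--$(2,2)$ inside $P$.

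Next I would split into the two subcases of $\{s,t\}$. In Subcase~A, with $s=(1,2)$ and $t=(2,3)$, the forced subpath is the prefix of $P$. The interior vertex $(1,3)$ has only the neighbors $(1,2),(2,2),(2,3)$ in $L(m,n;k,l)$, and the edge $(1,3)$--$(1,2)$ cannot lie in $P$ since $(1,2)=s$ has already committed its unique $P$-edge to $(1,1)$; consequently both $(1,3)$--$(2,2)$ and $(1,3)$--$(2,3)$ are in $P$, which pins $(2,2)$'s second $P$-edge to $(1,3)$ and forces
\[
P:\ (1,2)\to(1,1)\to(2,2)\to(1,3)\to(2,3).
\]
This path has only $5$ vertices, but $k\geqslant 2$ implies $m\geqslant 3$ and hence $2m+1\geqslant 7$, a contradiction. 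In Subcase~B, with $s=(1,3)$ and $t=(2,2)$, the same forced subpath becomes the suffix $\ldots\to(1,2)\to(1,1)\to(2,2)$, and I would enumerate the three possible second vertices of $P$ out of $s=(1,3)$: the choice $(2,2)$ collapses $P$ to length $2$; the choice $(1,2)$ collapses $P$ to $(1,3)\to(1,2)\to(1,1)\to(2,2)$; the choice $(2,3)$ is the delicate one, where a degree count at the interior vertex $(1,2)$ eliminates all three of its potential predecessors in $P$ except $(2,3)$, so that $P=(1,3)\to(2,3)\to(1,2)\to(1,1)\to(2,2)$ of length $5$. Each alternative falls short of $2m+1$, contradiction.

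The main obstacle is the bookkeeping in Subcase~B: among $(1,2)$'s graph-neighbors $(1,1),(1,3),(2,2),(2,3)$, one has to combine three constraints simultaneously---that $(1,1)$ is the successor of $(1,2)$ in the forced suffix, that $(1,3)=s$'s only $P$-edge is to $(2,3)$, and that $(2,2)=t$'s only $P$-edge is to $(1,1)$---to pin $(2,3)$ as the unique possible predecessor of $(1,2)$. Once this local forcing at $(1,1)$, $(1,3)$ and $(1,2)$ is nailed down, the bound $\hat{L}(L(m,n;k,l),s,t)\leqslant mn-kl-1$ is immediate.
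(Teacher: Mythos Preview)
Your argument is correct. The key observation---that the degree-$2$ vertex $(1,1)$ has both neighbours in $\{(1,2),(2,2)\}$ and is never an endpoint under~(F5), so its presence on a path forces the local subpath $(1,2)\!-\!(1,1)\!-\!(2,2)$---does all the work, and your case analysis correctly propagates this constraint to a short path of length at most~$5<2m+1$ in every branch. One cosmetic point: in Subcase~B the option ``second vertex $(2,2)$'' is already ruled out by the forced edge $(1,1)\!-\!(2,2)$ (the endpoint $t=(2,2)$ would then have two path-edges), rather than by ``collapsing $P$ to length~$2$''; the contradiction is the same, but the phrasing could be tightened.

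Your route differs from the paper's. The paper's proof of this lemma is essentially a one-line appeal to a figure: it decomposes $L(m,n;k,l)$ into pieces $G_1,G_2$ with connector vertices $p,q$ and asserts that any $(s,t)$-path has length at most $\hat L(G_1,s,p)+\hat L(G_2,q,t)=mn-kl-1$. Your approach instead isolates the single unreachable vertex $(1,1)$ and proves directly that it cannot lie on any $(s,t)$-path; this is exactly the mechanism the paper uses earlier (in its proof that~(F5) forbids a Hamiltonian path), but the paper does not reuse it here. Your argument is more self-contained and requires no figure or auxiliary decomposition; the paper's version is terser but leans on the picture for the upper-bound direction.
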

\begin{proof}
Consider Fig. \ref{fig:F5}. We can easily check that the length of any path between $s$ and $t$ cannot exceed $\hat{L}(G_1, s, p) + \hat{L}(G_2, q,t) = mn-kl-1$.
\end{proof}

\begin{figure}[!t]
\begin{center}
\includegraphics[scale=0.9]{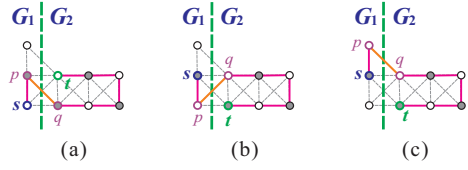}
\caption{The longest $(s, t)$-path when condition (F5) holds, where bold lines indicate the longest $(s, t)$-path.}\label{fig:F5}
\end{center}
\end{figure}

It is easy to show that any $(L(m,n; k,l), s, t)$ must satisfy one of conditions (L0), (UB1), (UB2), (UB3), (UB4), (UB5), (UB6), and (F5), where (L0) is defined as follows:

\begin{description}
  \item[$\mathrm{(L0)}$] $(L(m,n; k,l), s, t)$ does not satisfy any of conditions (F1), (F4), and (F5).
\end{description}

If $(L(m,n; k,l), s, t)$ satisfies (L0), then $\hat{U}(L(m,n; k,l), s, t)$ is $mn-kl$. Otherwise, $\hat{U}(L(m,n; k,l), s, t)$ can be computed using Lemmas \ref{Lemma:UB1-UB3}--\ref{Lemma:F5}.

So, we have:\\

$\hat{U}(L(m,n; k,l), s, t)=
  \begin{cases}
    |t_y-s_y|+1,                                            &\mathrm{if \ (UB1) \ holds;} \\
    n-s_y+t_x,                                              &\mathrm{if \ (UB2) \ holds;} \\
    |t_y-s_y|+2,                                            &\mathrm{if \ (UB3) \ holds;}\\
    \hat{L}(G',s,t),                                        &\mathrm{if \ (UB4) \ or \ (UB5) \ holds;} \\
    \max\{\hat{L}(G_1, s, t), \hat{L}(G_2, s, t)\},         &\mathrm{if \ (UB6) \ holds;} \\
    mn-kl-1,                                                &\mathrm{if \ (F5) \ holds;} \\
    mn-kl,                                                  &\mathrm{if \ (L0) \ holds.}\\
  \end{cases}$\\

Now, we show how to obtain a longest $(s, t)$-path for $L$-shaped supergrid graphs. Notice that if $(L(m,n; k,l), s, t)$ satisfies (L0), then, by Theorem \ref{HP-Theorem-Lshaped}, it contains a Hamiltonian $(s, t)$-path.

\begin{lem}\label{Lemma:LongestPathComputation}
If $(L(m,n; k,l), s, t)$ satisfies one of the conditions \emph{(UB1)}, \emph{(UB2)}, \emph{(UB3)}, \emph{(UB4)}, \emph{(UB5)}, \emph{(UB6)}, and \emph{(F5)}, then $\hat{L}(L(m,n; k,l), s, t) = \hat{U}(L(m,n; k,l), s, t)$.
\end{lem}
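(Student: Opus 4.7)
The plan is to show, case by case, that the upper bound in each listed condition is actually attained by exhibiting an $(s,t)$-path of that length. Since the upper bounds were derived in Lemmas~\ref{Lemma:UB1-UB3}--\ref{Lemma:F5} by identifying cut structure that forces the path to live inside a specific subgraph (or one of two subgraphs), matching the bound from below reduces to finding a Hamiltonian $(s,t)$-path of the relevant subgraph and, where needed, combining it with a trivial path in the complementary part.

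First I would dispose of the degenerate cases (UB1), (UB2), (UB3). Here the hypothesis forces $m-k=n-l=1$, so the graph is essentially a one-cell-wide ``L'' and between any two vertices there is (up to reflection) a unique path of the claimed length that simply walks along the single row or column; one just writes it down and checks that its length equals $|t_y-s_y|+1$, $n-s_y+t_x$, or $|t_y-s_y|+2$, respectively. These are immediate constructions.

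For (UB4) and (UB5), the argument is that the relevant cut vertex $w$ (in (UB4)) or cut pair $\{s,t\}$ together with the single vertex adjacent to both (in (UB5)) separates $L(m,n;k,l)$ into pieces, and the component $G'$ containing both $s$ and $t$ is itself an $L$-shape or rectangle in which $(G',s,t)$ falls under (L0). Thus Theorem~\ref{HP-Theorem-Lshaped} (or Lemma~\ref{HamiltonianConnected-Rectangular}) yields a Hamiltonian $(s,t)$-path of $G'$, whose length is exactly $\hat{L}(G',s,t)=\hat{U}(L(m,n;k,l),s,t)$. One has to verify that $(G',s,t)$ does not re-trigger any of (F1), (F4), (F5); this is a routine check from the geometry of the cut, since removing $w$ (or the neighbor of $s$ and $t$ between them) from a thin $L$-shape leaves a smaller $L$-shape or rectangle whose corners are no longer at $s$ or $t$ in the forbidden position. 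For (UB6), by symmetry $\{s,t\}$ is a vertex cut that splits $L(m,n;k,l)$ into two rectangular (or $L$-shaped) pieces $G_1$ and $G_2$; I would take the larger of the two and apply Theorem~\ref{HP-Theorem-Lshaped} to obtain a Hamiltonian $(s,t)$-path, matching the upper bound $\max\{\hat{L}(G_1,s,t),\hat{L}(G_2,s,t)\}$.

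Finally, for (F5) the picture in Fig.~\ref{fig:F5} identifies two adjacent vertices $p,q$ such that any $(s,t)$-path of maximum length takes a Hamiltonian $(s,p)$-path in $G_1$, crosses the edge $(p,q)$, and then takes a Hamiltonian $(q,t)$-path in $G_2$, yielding exactly $mn-kl-1$ vertices. To construct this path I would apply Lemma~\ref{HamiltonianConnected-Rectangular} to $G_1$ and $G_2$ after verifying neither $(G_1,s,p)$ nor $(G_2,q,t)$ falls under (F1). The only mildly delicate step across the whole proof is verifying in each of (UB4)--(UB6) and (F5) that the subproblem one invokes really avoids all forbidden conditions so that the induction / appeal to Theorem~\ref{HP-Theorem-Lshaped} is legitimate; this is where the concrete coordinate hypotheses of each $(\mathrm{UB}i)$ are used. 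With that bookkeeping done, $\hat{L}=\hat{U}$ in every listed case.
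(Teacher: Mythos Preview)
Your overall strategy matches the paper's: treat (UB1)--(UB3) as trivial one-dimensional walks, and for (UB4)--(UB6), (F5) pass to the subgraph $G'$ (or $G_1,G_2$) singled out by the cut structure and realize a Hamiltonian $(s,t)$-path there. The paper proceeds exactly this way.

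However, there is a genuine gap in your treatment of (UB4). You assert that after passing to $G'$ the triple $(G',s,t)$ ``falls under (L0)'' and that checking this is ``routine''. It is not: in the subcase of (UB4) with $s_y,t_y>l$, $n-l=2$, and $\{s,t\}=\{(1,n-1),(2,n)\}$ or $\{(1,n),(2,n-1)\}$, the reduced graph $G'=L(m,n-l+1;k,1)$ has $m-k=1$, $l'=1$, $n'-l'=2$, and the pair $\{s,t\}$ is exactly one of the two forbidden diagonal pairs, so $(G',s,t)$ satisfies (F5), not (L0). In that case Theorem~\ref{HP-Theorem-Lshaped} gives you nothing (there is no Hamiltonian $(s,t)$-path of $G'$), and your construction fails. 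The paper handles this explicitly by isolating it as a separate subcase and feeding it into the (F5) construction, which produces a path of length $|V(G')|-1=\hat L(G',s,t)$, matching the upper bound. The same issue can arise in (UB6): the paper notes that the longest $(s,t)$-path of the chosen $G_i$ may have to be obtained via the (F5) case rather than directly from Theorem~\ref{HP-Theorem-Lshaped}.

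So the fix is not hard, but it is not bookkeeping you can wave away: you must split off, inside (UB4) and (UB6), the instances where the reduced problem re-triggers (F5), and invoke your (F5) construction there instead of Theorem~\ref{HP-Theorem-Lshaped}.
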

\begin{proof}
Consider the following cases:

\textit{Case} 1: conditions (UB1), (UB2), and (UB3) hold. Clearly the lemma holds for the single possible path between $s$ and $t$ (see Figs. \ref{fig:UB1-UB4}(a)--(c)).

\textit{Case} 2: condition (UB4) holds. Then, by Lemma \ref{Lemma:UB4-UB6}, $\hat{U}(L(m,n; k,l), s, t) = \hat{L}(G', s, t)$. In this case, $G'$ is a $L$-shaped supergrid graph. There are two subcases:

\hspace{0.5cm}\textit{Case} 2.1: $(s_y(\mathrm{or}\ t_y)\leqslant l$ and $t_y(\mathrm{or}\ s_y)\ >l)$ or $(s_y, t_y > l$ and $[(n-l > 2)$ or $(n-l = 2$ and $\{s, t\}\neq \{(1, n-1), (2, n)\}$ or $\{(1, n), (2, n-1)\})])$. First, let $s_y(\mathrm{or}\ t_y)\leqslant l$ and $t_y(\mathrm{or}\ s_y)\ > l$. Without loss of generality, assume that $s_y\leqslant l$ and $t_y > l$. Consider $(G', s, t)$ and see Fig. \ref{fig:UB1-UB4}(e). Then, $G' = L(m,n-n'; k,l')$, where $n' = s_y-1$ and $l' = l-n'$. Since $s_y = 1$ in $G'$, $t_y > l'$, and $n-n'\geq 2$, it is obvious that $(G', s, t)$ does not satisfies conditions (F1), (F4), and (F5). Now, let $s_y, t_y > l$. Then, $G' = L(m,n-n'; k,l')$ satisfies that $n' = l-1$ and $l' = 1$. Consider Fig. \ref{fig:UB1-UB4}(d). Since $n-n'-l'\geqslant 2$, $l' = 1$, $\{s, t\}$ is not a vertex cut, and $\{s,t\}\neq \{(1, n-1), (2,n)\}$ or $\{(1, n), (2, n-1)\}$, $(G', s, t)$ does not satisfy conditions (F1), (F4), and (F5). Thus, by Theorem \ref{HP-Theorem-Lshaped}, $(G', s, t)$ contains a Hamiltonian $(s, t)$-path.

\hspace{0.5cm}\textit{Case} 2.2: $s_y, t_y > l$, $n-l = 2$, and $\{s, t\} = \{(1, n-1), (2, n)\}$ or $\{(1, n), (2, n-1)\}$. In this subcase, $(G', s, t)$ satisfies condition (F5). Hence, $(G', s, t)$ lies on Case 5.

\textit{Case} 3: condition (UB5) holds. In this case, $\{s, t\}$ is a vertex cut of $L(m,n; k,l)$ (see Fig. \ref{fig:UB5-UB6}(a)). By Lemma \ref{Lemma:UB4-UB6}, $\hat{U}(L(m,n; k,l), s, t) = \hat{L}(G', s, t)$, where $G' = R(m, n-l)$ is a rectangular supergrid graph. Since $n-l > 1$, $s = (1, l+1)$, and $t = (2, l+1)$, $(G',s,t)$ does not satisfy condition (F1). Thus, by Lemma \ref{HamiltonianConnected-Rectangular}, $(G', s, t)$ contains a Hamiltonian $(s,t)$-path.

\textit{Case} 4: condition (UB6) holds. In this case, $\{s, t\}$ is a vertex cut of $L(m,n; k,l)$ (see Figs. \ref{fig:UB5-UB6}(b)--(g)). Then, removing $s$ and $t$ splits $L(m,n; k,l)$ into two components $G'_1$ and $G'_2$. Let $G_1 = G'_1\cup \{s, t\}$ and $G_2 = G'_2\cup \{s, t\}$. Thus,

\begin{itemize}
  \item if $m-k = 2$ and $s_y = t_y$, then $G_1 = R(m-k, s_y)$ and $G_2 = L(m,n-s_y+1; k,l-s_y+1)$ (see Figs. \ref{fig:UB5-UB6}(b) and \ref{fig:UB5-UB6}(c)).
  \item if $m-k = 1$ and $m = 2$, then $G_1 = L(m,s_y; k,l)$ and $G_2 = R(m, n-s_y+1)$ (see Figs. \ref{fig:UB5-UB6}(d) and \ref{fig:UB5-UB6}(e)).
  \item if $n-l = 2$ and $s_x = t_x$, then $G_1 = L(s_x,n; s_x-(m-k),l)$ and $G_2 = R(m-s_x+1, n-l)$ (see Figs. \ref{fig:UB5-UB6}(f) and \ref{fig:UB5-UB6}(g)).
\end{itemize}

\noindent Then the path going through vertices of the larger subgraph between $G_1$ and $G_2$ has the length equal to $\hat{U}(L(m,n; k,l), s, t)$. The longest $(s, t)$-path in each subgraph computed by Lemma \ref{HamiltonianConnected-Rectangular}, \ref{HP-SmallSize}, \ref{HP-LargeSize}, or Case 5; as depicted in Figs. \ref{fig:UB5-UB6}(b)--(g).

\textit{Case} 5: condition (F5) holds. In this case, $m-k=1$, $n-l=2$, $l=1$, $k\geqslant 2$, and $\{s, t\}=\{(1, 2), (2, 3)\}$ or $\{(1, 3), (2, 2)\}$ (see Fig. \ref{Fig_ForbiddenConditionF1F4F5}(d)). Consider Fig. \ref{fig:F5}. By Lemma \ref{Lemma:F5}, $\hat{U}(L(m,n; k,l), s,t) = \hat{L}(G_1, s, p) + \hat{L}(G_2, q, t)$. By Theorem \ref{LongPath}, there exist a longest $(s, p)$-path $P_1$ and longest $(q,t)$-path $P_2$ of $G_1$ and $G_2$, respectively. Then, $P_1\Rightarrow P_2$ forms a Hamiltonian $(s, t)$-path of $L(m,n; k,l)$.
\end{proof}

It follows from Theorem \ref{HP-Theorem-Lshaped} and Lemmas \ref{Lemma:UB1-UB3}--\ref{Lemma:LongestPathComputation} that the following theorem concludes the result.

\begin{thm}\label{LongPathLshaped}
Given a $L$-shaped supergrid $L(m,n; k,l)$ and two distinct vertices $s$ and $t$ in $L(m,n; k,l)$, a longest $(s, t)$-path can be computed in $O(mn)$-linear time.
\end{thm}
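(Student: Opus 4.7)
The plan is to upgrade Theorem \ref{HP-Theorem-Lshaped} together with Lemmas \ref{Lemma:UB1-UB3}--\ref{Lemma:LongestPathComputation} into an algorithm, and then verify the running-time bound. The first step is a constant-time dispatch: given $(L(m,n;k,l),s,t)$, determine in $O(1)$ which of the mutually exhaustive cases (L0), (UB1)--(UB6), (F5) holds, since every case is specified by a small number of arithmetic comparisons on $m,n,k,l,s_x,s_y,t_x,t_y$.

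If (L0) holds, Theorem \ref{HP-Theorem-Lshaped} applies and I would emit the Hamiltonian $(s,t)$-path given by the constructive proofs of Lemmas \ref{HP-SmallSize} and \ref{HP-LargeSize}. Each of those constructions performs one (or at most two) vertical or horizontal separations, invokes Lemma \ref{HC-rectangular_supergrid_graphs}, Lemma \ref{HamiltonianConnected-Rectangular}, Lemma \ref{HamiltonianConnected-Rectangular-zf}, or Lemma \ref{HamiltonianConnected-Rectangular-wz} on the resulting rectangular pieces, and combines the pieces through Proposition \ref{Pro_Obs}. The canonical Hamiltonian cycles and $(s,t)$-paths in rectangles are explicit and can be output in linear time, and every concatenation or Proposition \ref{Pro_Obs} merge takes $O(1)$ extra work on a doubly-linked-list representation, giving $O(mn)$ time overall in this case. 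Cases (UB1)--(UB3) produce a single forced path through a thin strip, which is emitted in $O(mn)$ time; case (UB5) invokes Theorem \ref{LongPath} directly on $R(m,n-l)$; and case (F5) follows Fig.~\ref{fig:F5}, splitting into two rectangles and calling Theorem \ref{LongPath} on each.

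The remaining cases (UB4) and (UB6) are the recursive ones. In (UB4) the instance reduces to a strictly smaller $L$-shaped graph $G'$ obtained by deleting a cut-vertex strip; in (UB6) the deletion of $\{s,t\}$ splits $L(m,n;k,l)$ into two vertex-disjoint subgraphs, and the longer longest $(s,t)$-path uses the larger side, which is either rectangular (handled by Theorem \ref{LongPath}) or a smaller $L$-shape. The key observation, already made explicit in Case 2 and Case 4 of the proof of Lemma \ref{Lemma:LongestPathComputation}, is that the single recursive instance generated by (UB4) or (UB6) immediately falls into (L0), (UB5), (F5), or one further (UB6) invocation with a rectangular component; hence the recursion depth is bounded by a small absolute constant and the work of all recursive levels telescopes to $O(mn)$.

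The main obstacle I expect is a careful and uniform running-time accounting. Concretely, I need to verify that the canonical Hamiltonian cycles/paths of Lemmas \ref{HC-rectangular_supergrid_graphs}, \ref{HamiltonianConnected-Rectangular}, \ref{HamiltonianConnected-Rectangular-zf}, and \ref{HamiltonianConnected-Rectangular-wz} admit an explicit construction that runs in time linear in the size of the subrectangle they are invoked on, and that every Proposition \ref{Pro_Obs} merge along a separation line can be located in $O(1)$ because the two combined edges are prescribed by the separation rather than being searched for. Once these constant-time primitives are in place, summing the work across the constantly many recursive levels yields $O(mn)$, and correctness follows directly from Theorem \ref{HP-Theorem-Lshaped} together with Lemma \ref{Lemma:LongestPathComputation}, completing the proof of Theorem \ref{LongPathLshaped}.
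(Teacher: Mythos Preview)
Your proposal is correct and follows the same approach as the paper: the paper's own proof of Theorem~\ref{LongPathLshaped} is simply the observation that it follows from Theorem~\ref{HP-Theorem-Lshaped} and Lemmas~\ref{Lemma:UB1-UB3}--\ref{Lemma:LongestPathComputation}, together with the presentation of Algorithm~\ref{TheHamiltonianPathAlgm}. Your write-up is in fact more explicit than the paper's about the $O(1)$ dispatch, the constant recursion depth in cases (UB4) and (UB6), and the linear-time realizability of the rectangle primitives, but the underlying structure is the same.
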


The linear-time algorithm is formally presented as Algorithm \ref{TheHamiltonianPathAlgm}.

\begin{algorithm}[tb]
  \SetCommentSty{small}
  \LinesNumbered
  \SetNlSty{textmd}{}{.}

    \KwIn{A $L$-shaped supergrid graph $L(m,n; k,l)$ with $mn\geqslant 2$, and two distinct vertices $s$ and $t$ in $L(m,n; k,l)$.}
    \KwOut{The longest $(s, t)$-path.}

\textbf{if} ($m-k=1$ or $n-l=1$) and ($(L(m,n; k,l), s, t)$ does not satisfy conditions (F1), (F4), and (F5)) \textbf{then} \textbf{output} $HP(L(m,n; k,l), s, t))$ constructed from Lemma \ref{HP-SmallSize};\\
\textbf{if} ($m-k\geqslant 2$ and $n-l\geqslant 2$) and ($(L(m,n; k,l), s, t)$ does not satisfy conditions (F1), (F4), and (F5)) \textbf{then} \textbf{output} $HP(L(m,n; k,l), s, t))$ constructed from Lemma \ref{HP-LargeSize};\\
\textbf{if} $(L(m,n; k,l), s, t)$ satisfies one of conditions (F1), (F4), and (F5),  \textbf{then} \textbf{output} the longest $(s, t)$-path based on Lemma \ref{Lemma:LongestPathComputation}.\\
\caption{The longest $(s, t)$-path algorithm}
\label{TheHamiltonianPathAlgm}
\end{algorithm}

\section{Concluding remarks}\label{Sec_Conclusion}
Based on the Hamiltonicity and Hamiltonian connectivity of rectangular supergrid graphs, we first discover two Hamiltonian connected properties of rectangular supergrid graphs. Using the Hamiltonicity and Hamiltonian connectivity of rectangular supergrid graphs, we prove $L$-shaped supergrid graphs to be Hamiltonian and Hamiltonian connected except one or three conditions. Furthermore, we present a linear-time algorithm to compute the longest $(s, t)$-path of a $L$-shaped supergrid graph. The Hamiltonian cycle problem on solid grid graphs was known to be polynomial solvable. However, it remains open for solid supergrid graphs in which there exists no hole. We leave it to interesting readers.

\section*{Acknowledgments}
This work is partly supported by the Ministry of Science and Technology, Taiwan under grant no. MOST 105-2221-E-324-010-MY3.

\end{document}